\newif\ifapp
\def\<{\langle}  
\def\>{\rangle}  
\newtheorem{theorem}{Theorem}
\newtheorem{proposition}{Proposition}
\newtheorem{lemma}{Lemma}
\newcommand{\lsp}{\hspace{0.1em}}
\newcommand{\equad}{\,\hphantom{=}\,}
\newcommand{\diag}{\operatorname{diag}}
\newcommand{\tr}{\operatorname{tr}}
\newcommand{\lb}{\mathrm{LB}}
\newcommand{\ub}{\mathrm{UB}}
\newcommand{\lc}{\mathrm{LC}}
\newcommand{\sep}{\mathrm{sep}}
\newcommand{\opt}{\mathrm{opt}}
\newcommand{\MUB}{\mathrm{MUB}}
\newcommand{\SN}{\mathrm{SN}}
\newcommand{\SR}{\mathrm{SR}}
\newcommand{\rmi}{\mathrm{i}}
\newcommand{\bbD}{\mathbb{D}}
\newcommand{\bbE}{\mathbb{E}}
\newcommand{\bbR}{\mathbb{R}}
\newcommand{\bbone}{\mathbbm{1}}
\newcommand{\bfx}{\mathbf{x}}
\newcommand{\bfy}{\mathbf{y}}
\newcommand{\bfs}{\mathbf{s}}
\newcommand{\bmw}{\bm{w}}
\newcommand{\caB}{\mathcal{B}}
\newcommand{\caD}{\mathcal{D}}
\newcommand{\caE}{\mathcal{E}}
\newcommand{\caS}{\mathcal{S}}
\newcommand{\caH}{\mathcal{H}}
\newcommand{\rme}{\mathrm{e}}
\newcommand{\rmH}{\mathrm{H}}
\newcommand{\rmU}{\mathrm{U}}
\newcommand{\rmL}{\mathrm{L}}
\newcommand{\tp}{\tilde{p}}
\newcommand{\trmU}{\tilde{\rmU}}
\newcommand{\tcaS}{\tilde{\caS}}
\newcommand{\tvarphi}{\tilde{\varphi}}
\newcommand{\tpsi}{\tilde{\psi}}
\newcommand{\tPsi}{\tilde{\Psi}}
\newcommand{\ser}{\caS_{\caE_r}}
\newcommand{\serone}{\caS_{\caE_1}}
\newcommand{\tser}{\tcaS_{\caE_r}}
\newcommand{\eref}[1]{Eq.~\eqref{#1}}
\newcommand{\eqsref}[2]{Eqs.~\eqref{#1} and \eqref{#2}}
\newcommand{\Eref}[1]{Equation~\eqref{#1}}
\newcommand{\rcite}[1]{Ref.~\cite{#1}}
\newcommand{\rscite}[1]{Refs.~\cite{#1}}
\newcommand{\thref}[1]{Theorem~\ref{#1}}
\newcommand{\thsref}[1]{Theorems~\ref{#1}}
\newcommand{\Thsref}[1]{Theorems~\ref{#1}}
\newcommand{\pref}[1]{Proposition~\ref{#1}}
\newcommand{\psref}[1]{Propositions~\ref{#1}}
\newcommand{\Pref}[1]{Proposition~\ref{#1}}
\newcommand{\Psref}[1]{Propositions~\ref{#1}}
\newcommand{\lref}[1]{Lemma~\ref{#1}}
\newcommand{\lsref}[1]{Lemmas~\ref{#1}}
\newcommand{\Lsref}[1]{Lemmas~\ref{#1}}
\newcommand{\sref}[1]{Sec.~\ref{#1}}
\newcommand{\Sref}[1]{Section~\ref{#1}}
\newcommand{\aref}[1]{Appendix~\ref{#1}}
\newcommand{\fref}[1]{Fig.~\ref{#1}}
\newcommand{\Fref}[1]{Figure~\ref{#1}}
\begin{document}

\title{Efficient certification of high-dimensional entanglement}

\author{Yiwen Wu}
\author{Zihao Li}
\author{Huangjun Zhu}
\email{zhuhuangjun@fudan.edu.cn}

\affiliation{State Key Laboratory of Surface Physics, Department of Physics, and Center for Field Theory and Particle Physics, Fudan University, Shanghai 200433, China}

\affiliation{Institute for Nanoelectronic Devices and Quantum Computing, Fudan University, Shanghai 200433, China}

\affiliation{Shanghai Research Center for Quantum Sciences, Shanghai 201315, China}

\date{\today}

\begin{abstract}
High-dimensional entanglement (HDE) is a valuable resource in quantum information processing, and efficient certification of HDE is crucial to many applications. In this work, we propose a simple and general framework for certifying HDE in general bipartite pure states under restricted operations, such as local operations and classical communication (LOCC). On this basis we show that HDE in general bipartite pure states can be certified efficiently.  Moreover, the sample cost for certifying a given degree of HDE even decreases monotonically with the local dimensions. In addition, for a general two-qubit pure state, we  construct an optimal entanglement certification strategy based on separable operations, which can be realized by LOCC when the target state has sufficiently high entanglement. The core concept of our framework is versatile and can be extended to certify a wide range of critical resources under restricted operations.
\end{abstract}

\maketitle

\section{Introduction}

Quantum entanglement is a characteristic of quantum mechanics and a valuable resource in many tasks in quantum information processing \cite{Horo09}.
\emph{High-dimensional entanglement} (HDE) \cite{TerhalH00,ErhardKZ20}, which involves quantum systems with higher local dimensions compared with qubits, may further offer enhanced information capacity and improved noise resilience and has thus attracted increasing attention recently.  It is especially valuable for important tasks such as quantum cryptography \cite{Zhang13,Huber13}, quantum communication \cite{Cozzolino19,Ecker19,Hu21}, and quantum computation \cite{Lanyon09,Van13,Wang20}. Moreover, 
HDE is tied to the classical hardness in simulating quantum systems and may even serve as a benchmark for quantum technologies \cite{Amico08,Verstraete08,Eisert10}.  Therefore, efficient certification of HDE is of intrinsic interest to both theoretical studies and practical applications \cite{Guhne09,Friis19}. However, traditional tomographic approaches are too resource-intensive for this task for large and intermediate quantum systems.

To address the challenge in HDE certification, a number of alternative approaches have emerged recently, including fidelity-based Schmidt number witnesses \cite{Guo18,Bavaresco18,Chen20,Valencia20,Euler23,Li25}, correlation-based approaches \cite{Huang16,Erker17,Guo20,Wyderka23,Liu23,Lib25}, entropic steering criteria \cite{Schneeloch18,Dabrowski18}, and methods based on positive maps (such as the reduction map) \cite{Sanpera01,Mallick25,Yi25} and hypothesis testing \cite{Hu21ODHDE}. In addition, several verification protocols have been demonstrated in experiments across various platforms, including photonic systems \cite{Huang16,Bavaresco18,Guo18,Guo20,Valencia20,Chen20,Hu21ODHDE,Lib25} and cold atoms \cite{Dabrowski18,Euler23}. Despite this progress, little is known about the sample complexity of certifying HDE and the construction of optimal or nearly optimal certification protocols.

In this work, 
inspired by the idea of  \emph{quantum state verification} (QSV) \cite{Hayashi06,PLM18,ZhuHEVPQSshort19,ZhuHEVPQSlong19,Morris22,Yu22}, we propose a simple and general framework for certifying HDE under restricted operations, such as  \emph{local operations and classical communication} (LOCC) and  separable operations. To rigorously quantify the capabilities of these operations in certifying HDE, we introduce the concept of separation probabilities. Then,  we determine the separation probabilities of maximally entangled states and derive nearly tight upper and lower bounds for general bipartite pure states. Based on these findings, we show that HDE in general bipartite pure states can be certified efficiently using LOCC. Notably, the sample cost for certifying a given degree of HDE even decreases monotonically with the local dimensions. The same conclusion still holds even if the strategy is required to be homogeneous \cite{ZhuHEVPQSshort19,ZhuHEVPQSlong19}. In addition, we construct an optimal entanglement certification strategy for any two-qubit pure state using separable operations, and show that this optimal strategy can be realized by LOCC when the target state has sufficiently high entanglement. This study also shows  that optimal strategies for entanglement certification are in general different from the counterparts for QSV. The versatility of our framework extends beyond HDE certification. The basic idea may find applications in certifying many other important resources, such as coherence and nonstabilizerness, under restricted operations.

The rest of this paper is organized as follows.
In \sref{sec:Pre} we introduce necessary preliminaries on HDE and QSV. In \sref{sec:ResCert} we propose a simple and general framework for certifying entanglement under restricted operations and introduce the concept of separation probabilities. In \sref{sec:CertHDE} we show that   HDE in general bipartite pure states can be certified efficiently after clarifying the properties of separation probabilities. In  \sref{sec:CertHDEin2qPS}
we construct an optimal entanglement certification strategy for a general two-qubit pure state based on separable operations and show that this  strategy can be realized by LOCC when the target state has a high concurrence. \Sref{sec:Summary} summarizes this paper.

\section{\label{sec:Pre}Preliminaries}

Let $\caH$ be the Hilbert space of a quantum system under consideration  and denote by $\caD(\caH)$ the set of all quantum states on $\caH$. Given any pure state  $|\Psi\>$ in $\caH$ we use  $\Psi=|\Psi\>\<\Psi|$ to denote the corresponding density operator. Given a  closed subset  $\caS$ in $\caD(\caH)$, denote by 
$F(\Psi, \caS)$ the maximum fidelity between $|\Psi\>$ and states in $\caS$, that is,
\begin{equation}
	F(\Psi, \caS)=\max_{\sigma\in \caS} \<\Psi|\sigma|\Psi\>. 
\end{equation}
Denote by  $\tcaS$  the subset of pure states in $\caS$, that is,
\begin{equation}
	\tcaS:=\left\{\sigma\in \caS \mid \tr\left(\sigma^2\right)=1 \right\}, 
\end{equation}
which  is also a closed subset in $\caD(\caH)$. Given a positive integer $k$, let $[k]$ be a shorthand for $\{1,2,\ldots, k\}$.

\subsection{\label{ssec:HDE}Schmidt number and high-dimensional entanglement}

Here we assume that  $\caH=\caH_{AB}=\caH_A\otimes \caH_B$ is the Hilbert space of a bipartite quantum system shared by Alice and Bob; let $d_A=\dim(\caH_A)$, $d_B=\dim(\caH_B)$, $D=\dim(\caH_{AB})=d_A d_B$, and $d=\min\{d_A, d_B\}$. 
To simplify the following discussion, in this paper  we shall assume that $d=d_A\leq d_B$ without loss of generality. In addition, denote by $\rmU(\caH)$ the group of unitary operators on $\caH$ and by $\rmU(d)$ the group of unitary operators on a $d$-dimensional Hilbert space.

A bipartite pure state  $|\Psi\>\in \caH_{AB}$ is a product state if it can be expressed in the form $|\Psi\>=|\psi_A\>\otimes |\psi_B\>$ with $|\psi_A\>\in \caH_A$ and $|\psi_B\>\in \caH_B$. Otherwise, $|\Psi\>$ is entangled. The state $|\Psi\>$ is maximally entangled if the reduced state $\rho_A=\tr_B(|\Psi\>\<\Psi|)$ is a completely mixed state, in which case any other state in $\caD(\caH_{AB})$ can be generated from $|\Psi\>$ under LOCC (see \sref{ssec:LocalTran}). For example, 
here is a typical maximally entangled state: 
\begin{equation}\label{eq:MES}
	|\Phi\>=\sum_{j=0}^{d-1} \frac{1}{\sqrt{d}}|jj\>.
\end{equation}
A mixed state on $\caH_{AB}$ is separable if it is a convex combination of pure product states and entangled otherwise~\cite{Werner89}. A mixed state is maximally entangled if every pure state in its support is maximally entangled. The structures of such states were clarified in \rscite{Li12,Zhu21}.
The set of separable states on $\caH_{AB}$ is denoted by $\caS_\sep(\caH_{AB})$ henceforth, which can be abbreviated as $\caS_\sep$ if there is no danger of confusion.

Any bipartite pure state $|\Psi\>$ in $\caH_{AB}$ has a \emph{Schmidt decomposition} \cite{Peres95} of the form \begin{equation}\label{eq:SchmidtDecomposition}
	|\Psi\>=\sum_{j=0}^{d-1}\sqrt{s_j} |\psi_j^A\> \otimes |\psi_j^B\>,
\end{equation}
where $\{|\psi_j^A\>\}_{j=0}^{d-1}$ forms an orthonormal basis of $\caH_A$, $\{|\psi_j^B\>\}_{j=0}^{d-1}$ is a set of orthonormal states in $\caH_B$, and $\{s_j\}_{j=0}^{d-1}$ is the set of \emph{Schmidt coefficients}, also known as \emph{Schmidt spectrum}, which satisfies $\sum_{j=0}^{d-1}s_j=1$. Without loss of generality, we assume that $s_0\geq s_1 \geq \cdots \geq s_{d-1} \geq 0$ throughout this paper. Although the above decomposition is not necessarily unique, the set of Schmidt coefficients is uniquely determined by $|\Psi\>$. For example, 
all the Schmidt coefficients  are equal to $1/d$ whenever $|\Psi\>$ is maximally entangled, and vice versa. The Schmidt vector of $|\Psi\>$ is defined as $\bfs_{\Psi}:=(s_0,s_1,\ldots,s_{d-1})$. The \emph{Schmidt rank} of $|\Psi\>$ is defined as the number of nonzero Schmidt coefficients and is denoted by $\SR(\Psi)$ henceforth; it is equal to the rank of $\rho_A=\tr_B(|\Psi\>\<\Psi|)$ and also the rank of $\rho_B=\tr_A(|\Psi\>\<\Psi|)$.
 
Given $r\in [d]$, let  $\caS_r$ be the subset of quantum states in  $\caD(\caH_{AB})$ that can be expressed as convex combinations of pure states with Schmidt rank at most $r$. Note that $\caS_r$ is a proper subset of $\caS_{r+1}$ for $r\in [d-1]$, and $\caS_1$ coincides with the set $\caS_\sep$ of separable states. The \emph{Schmidt number} of $\sigma\in \caD(\caH_{AB})$ is defined as the smallest integer $r$ such that $\caS_r$ contains $\sigma$ and is denoted by $\SN(\sigma)$. By definition, the Schmidt number of a pure state is equal to its Schmidt rank.

As the simplest measure for quantifying HDE, 
the Schmidt number is discrete and not so robust to perturbation or noise. To remedy this problem,  a family of continuous entanglement measures were introduced by Vidal \cite{Vidal99}.
Suppose $|\Psi\>$ has Schmidt spectrum $\{s_j\}_{j=0}^{d-1}$. Define $	\caE_r(\Psi)$ as the sum of the $d-r$ smallest Schmidt coefficients, that is,
\begin{equation}\label{eq:Er}
	\caE_r(\Psi):=\sum_{j=r}^{d-1} s_j,\quad r\in [d-1].
\end{equation}
Note that $\caE_r(\Psi)$ only depends on the nonzero Schmidt coefficients of $|\Psi\>$ and is independent of the local dimensions (once the set of nonzero Schmidt coefficients is fixed).  The above  definition can  be extended to mixed states  via the convex-roof construction. An ensemble of pure states $\{|\Psi_l\>,\lsp  p_l\}_l$ is a convex decomposition of $\sigma\in \caD(\caH_{AB})$ if $\sum_l p_l|\Psi_l\>\<\Psi_l|=\sigma$. Denote by $\bbD(\sigma)$ the set of all convex decompositions of $\sigma$ into pure states. Then $\caE_r(\sigma)$ can be defined as follows:
\begin{equation}
	\caE_r(\sigma):= \inf_{\{|\Psi_l\>,\lsp  p_l\}_l\in \bbD(\sigma)}\sum_{l}p_l \caE_r(\Psi_l),
\end{equation}
where the infimum is taken over all convex decompositions of $\sigma$. Based on this definition we can introduce a subset of quantum states with limited HDE:
\begin{equation}
\ser(E):=\left\{ \sigma\in \caD(\caH_{AB}) \mid \caE_r(\sigma)\leq E \right\},
\end{equation}
which will be useful for formulating robust HDE certification. Note that $\ser(0)$ coincides with $\caS_r$.

Next, we clarify the basic properties of the entanglement measure $\caE_r$ that are relevant to the current study. \Psref{pro:ErLUB}-\ref{pro:ErLip} below are proved in \aref{app:ErProofs}, although \pref{pro:FidelityUB} is known before
\cite{TerhalH00,Horo99}.

\begin{proposition}\label{pro:ErLUB}
Suppose  $r\in[d-1]$; then 
\begin{equation}\label{eq:ErLUB}
	0\leq  \caE_r(\sigma)\leq \frac{d-r}{d} \quad \forall \, \sigma\in \caD(\caH_{AB}),
\end{equation}
where the lower bound is saturated iff $\SN(\sigma)\leq r$, and the upper bound is saturated iff $\sigma$ is maximally entangled. 
\end{proposition}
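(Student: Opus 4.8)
The plan is to establish everything first for pure states and then lift to mixed states through the convex-roof definition; write $\bfs_\Psi=(s_0,\dots,s_{d-1})$ with $s_0\ge\cdots\ge s_{d-1}\ge0$ and $\sum_j s_j=1$.

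The lower bound $\caE_r(\sigma)\ge0$ is immediate, since $\caE_r$ of a pure state is a sum of nonnegative Schmidt coefficients and a convex roof of nonnegative quantities stays nonnegative. For its saturation, a pure state has $\caE_r(\Psi)=0$ iff $s_r=\cdots=s_{d-1}=0$, i.e.\ iff $\SR(\Psi)\le r$. For a mixed state, if $\SN(\sigma)\le r$ then $\sigma\in\caS_r$ admits a decomposition into Schmidt-rank-$\le r$ pure states, each contributing zero, so the infimum is zero; conversely, since the convex-roof infimum is attained (by compactness of the set of decompositions, bounded in size by Carath\'eodory), $\caE_r(\sigma)=0$ forces every pure state of an optimal decomposition to have $\caE_r=0$, hence Schmidt rank $\le r$, whence $\sigma\in\caS_r$ and $\SN(\sigma)\le r$.

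For the upper bound I would invoke majorization: every probability vector majorizes the uniform vector $(1/d,\dots,1/d)$, so $\sum_{j=0}^{r-1}s_j\ge r/d$ and therefore $\caE_r(\Psi)=1-\sum_{j=0}^{r-1}s_j\le(d-r)/d$; the mixed-state bound is then inherited termwise by the convex roof. To settle equality for pure states, the ordering does the work: $\sum_{j=0}^{r-1}s_j=r/d$ makes both the average of the top $r$ and of the bottom $d-r$ coefficients equal $1/d$, and combined with $s_{r-1}\ge s_r$ this squeezes $s_{r-1}=s_r=1/d$ and then forces all $s_j=1/d$, i.e.\ $|\Psi\>$ is maximally entangled.

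The only delicate step is the converse of the mixed upper bound, $\caE_r(\sigma)=(d-r)/d\Rightarrow\sigma$ maximally entangled. Here the crucial remark is that since every decomposition value is $\le(d-r)/d$ while the infimum equals $(d-r)/d$, \emph{every} decomposition must attain $(d-r)/d$, and a decomposition attains it only if each of its pure states is maximally entangled. Any pure state $|\phi\>\in\supp(\sigma)$ can be fed into such a decomposition --- subtract a small $\epsilon|\phi\>\<\phi|$ from $\sigma$, leaving a positive operator that is decomposed further --- so $|\phi\>$ itself is maximally entangled; this is precisely the definition of a maximally entangled mixed state. The forward implication is easier: if $\sigma$ is maximally entangled, every pure state in any decomposition lies in $\supp(\sigma)$ and is thus maximally entangled, so each decomposition already saturates the bound and the infimum equals $(d-r)/d$.
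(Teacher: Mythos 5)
Your proof is correct and follows essentially the same route as the paper's: establish both bounds and their saturation conditions for pure states via the ordered Schmidt spectrum, then lift everything through the convex-roof definition, characterizing the mixed-state equality cases in terms of the pure states in the support of $\sigma$. If anything, you supply two details the paper leaves implicit---the attainment of the convex-roof infimum (via Carath\'eodory/compactness) in the lower-bound converse, and the $\epsilon$-subtraction argument showing that every state in $\supp(\sigma)$ appears in some decomposition in the upper-bound converse---so your write-up is a more complete version of the same argument.
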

Note that \pref{pro:ErLUB} is compatible with the fact that  $\ser(0)=\caS_r$. Next, we 
clarify  the maximum fidelity between two bipartite pure states with given Schmidt spectra.
\begin{proposition}\label{pro:FidelityUB}
	Suppose $|\Psi\>$ and $|\Upsilon\>$ are two pure states in $\caH_{AB}$ that have Schmidt spectra $\{s_j\}_{j=0}^{d-1}$ and $\{t_j\}_{j=0}^{d-1}$, respectively. Then
	\begin{align}\label{eq:FidelityUB}
		|\<\Psi|\Upsilon\>|\leq \sum_{j=0}^{d-1}\sqrt{s_j t_j},
	\end{align}
	and the inequality is saturated if $|\Psi\>=\sum_{j=0}^{d-1}\sqrt{s_j} |jj\>$ and $|\Upsilon\>=\sum_{j=0}^{d-1}\sqrt{t_j} |jj\>$. 
\end{proposition}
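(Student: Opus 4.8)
The plan is to recognize the left-hand side as a Hilbert–Schmidt overlap and the right-hand side as the value produced by von Neumann's trace inequality, and then to carry out a self-contained argument via Birkhoff's theorem and the rearrangement inequality. First I would fix Schmidt decompositions $|\Psi\>=\sum_{j=0}^{d-1}\sqrt{s_j}\,|\psi_j^A\>|\psi_j^B\>$ and $|\Upsilon\>=\sum_{k=0}^{d-1}\sqrt{t_k}\,|\upsilon_k^A\>|\upsilon_k^B\>$ and expand the overlap as
\begin{equation}
\<\Psi|\Upsilon\>=\sum_{j,k=0}^{d-1}\sqrt{s_j t_k}\,A_{jk}B_{jk},
\end{equation}
where $A_{jk}:=\<\psi_j^A|\upsilon_k^A\>$ and $B_{jk}:=\<\psi_j^B|\upsilon_k^B\>$. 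Since $\{|\psi_j^A\>\}_j$ and $\{|\upsilon_k^A\>\}_k$ are both orthonormal bases of $\caH_A$, the matrix $A=(A_{jk})$ is $d\times d$ unitary; since $\{|\psi_j^B\>\}_j$ and $\{|\upsilon_k^B\>\}_k$ are orthonormal \emph{sets} in the possibly larger space $\caH_B$, the matrix $B=(B_{jk})$ is a $d\times d$ submatrix of a unitary, hence a contraction with $\|B\|\leq 1$.

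Next I would bound via the triangle inequality and AM–GM, using $|A_{jk}||B_{jk}|\leq\tfrac{1}{2}(|A_{jk}|^2+|B_{jk}|^2)$, to obtain
\begin{equation}
|\<\Psi|\Upsilon\>|\leq\frac{1}{2}\sum_{j,k}\sqrt{s_j t_k}\,\bigl(|A_{jk}|^2+|B_{jk}|^2\bigr).
\end{equation}
The key structural observations are that $(|A_{jk}|^2)_{jk}$ is doubly stochastic because $A$ is unitary, and $(|B_{jk}|^2)_{jk}$ is doubly substochastic because $\sum_k|\upsilon_k^B\>\<\upsilon_k^B|\leq\bbone$ and $\sum_j|\psi_j^B\>\<\psi_j^B|\leq\bbone$ force the row and column sums to be at most one. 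As $\sqrt{s_j t_k}\geq 0$, a doubly substochastic matrix can be completed (entrywise dominated) by a doubly stochastic one, so both weighted sums are controlled by the corresponding expression for a doubly stochastic matrix.

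It therefore suffices to prove the estimate that, for any $d\times d$ doubly stochastic matrix $M$ and the decreasingly ordered spectra $s_0\geq\cdots\geq s_{d-1}$, $t_0\geq\cdots\geq t_{d-1}$,
\begin{equation}
\sum_{j,k}\sqrt{s_j t_k}\,M_{jk}\leq\sum_{j}\sqrt{s_j t_j}.
\end{equation}
By Birkhoff's theorem $M$ is a convex combination of permutation matrices, so it is enough to verify $\sum_j\sqrt{s_j t_{\pi(j)}}\leq\sum_j\sqrt{s_j t_j}$ for every permutation $\pi$, which is exactly the rearrangement inequality for the similarly ordered sequences $(\sqrt{s_j})$ and $(\sqrt{t_j})$. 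Applying this to both doubly stochastic sums and summing gives $|\<\Psi|\Upsilon\>|\leq\sum_j\sqrt{s_j t_j}$. For the saturation claim, choosing $|\Psi\>=\sum_j\sqrt{s_j}\,|jj\>$ and $|\Upsilon\>=\sum_j\sqrt{t_j}\,|jj\>$ yields $A=B=I_d$, so $\<\Psi|\Upsilon\>=\sum_j\sqrt{s_j t_j}$ and the bound is tight.

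The main obstacle I anticipate is the ordering bookkeeping: the right-hand side pairs the $j$-th largest Schmidt coefficient of $|\Psi\>$ with the $j$-th largest of $|\Upsilon\>$, and securing this \emph{sorted pairing} is precisely the role of the Birkhoff–plus–rearrangement step, which is where the monotone ordering assumption $s_0\geq\cdots\geq s_{d-1}$ is essential. A secondary technical point is the rectangular case $d_A\leq d_B$, where $B$ is only a contraction rather than a full unitary; this is dispatched by the doubly-substochastic domination remark, so that the clean unitary argument on the $A$ side still governs the whole expression.
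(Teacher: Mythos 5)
Your proof is correct, and it takes a genuinely different route from the paper's. The paper expands both states in the computational basis, identifies the overlap as the Hilbert--Schmidt inner product $\<\Psi|\Upsilon\>=\tr\left(A^\dag B\right)$ of the two coefficient matrices, and then invokes von Neumann's trace theorem from matrix analysis \cite{HornMA85} as a black box. You instead work with the overlap matrices between the two Schmidt bases---a unitary on the $\caH_A$ side and a contraction on the $\caH_B$ side---and reprove the needed instance of von Neumann's inequality from scratch: AM--GM splits the bilinear sum into two weighted sums against a unistochastic and a doubly substochastic matrix, substochastic-to-stochastic domination plus Birkhoff's theorem reduce the bound to permutations, and the rearrangement inequality for the similarly ordered sequences $\left(\sqrt{s_j}\right)$ and $\left(\sqrt{t_j}\right)$ produces exactly the sorted pairing on the right-hand side. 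The paper's argument is shorter; yours is self-contained, makes explicit where the monotone ordering $s_0\geq\cdots\geq s_{d-1}$ actually enters (only in the rearrangement step), and your substochasticity remark cleanly handles the rectangular case $d_A\leq d_B$, which the paper treats implicitly by applying von Neumann's theorem to $d\times d_B$ coefficient matrices. Both arguments saturate the bound with the same example $A=B=I_d$.
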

As a simple corollary of \pref{pro:FidelityUB} we can deduce that
\begin{equation}
	|\<\Psi|\Phi\>| \leq \frac{1}{\sqrt{d}}\sum_{j=0}^{d-1}\sqrt{s_j},
\end{equation} 
and the inequality is saturated if $|\Psi\>=\sum_{j=0}^{d-1}\sqrt{s_j} |jj\>$. 
By virtue of \pref{pro:FidelityUB} we can derive the following proposition.
\begin{proposition}\label{pro:FPsiSk}
	Suppose $|\Psi\>\in \caH_{AB}$ has Schmidt spectrum $\{s_j\}_{j=0}^{d-1}$, $r\in[d-1]$,  and $E'=\caE_r(\Psi)$. Then 
	\begin{align}
		&F(\Psi,\caS_r)=F\left(\Psi,\tcaS_r\right)=1-\caE_r(\Psi), \label{eq:FPsiSk}  \\
		&F(\Psi,\ser(E))=F\left(\Psi,\tser(E)\right)\nonumber\\
		&=\begin{cases}
			\left[\sqrt{E' E}+\sqrt{(1-E')(1-E)}\lsp\right]^2 &\! 0\leq E< E',\\[0.5ex]
			1 &\! E\geq E'.
		\end{cases}\label{eq:FPsiSkE}
	\end{align}
	In addition, $F(\Psi,\ser(E))$ and $F(\Psi,\tser(E))$ are  nondecreasing and concave in $E$. 	
\end{proposition}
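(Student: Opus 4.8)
The plan is to reduce everything to a single‑variable optimization over Schmidt spectra, settle the pure‑state case first, and then lift the result to mixed states by a convexity argument. Write $E'=\caE_r(\Psi)$ throughout, and recall that $\caS_r=\ser(0)$, so the first identity will emerge as the special case $E=0$ of the second.

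First I would treat $F(\Psi,\tser(E))$. Let $|\Upsilon\>$ range over pure states with Schmidt spectrum $\{t_j\}_{j=0}^{d-1}$ sorted nonincreasingly; \pref{pro:FidelityUB} gives $|\<\Psi|\Upsilon\>|\le\sum_j\sqrt{s_j t_j}$, with equality attainable when $|\Upsilon\>$ is taken in the Schmidt basis of $|\Psi\>$. I would split the sum into the blocks $j<r$ and $j\ge r$ and apply Cauchy–Schwarz to each block, using $\sum_{j<r}s_j=1-E'$ and $\sum_{j\ge r}s_j=E'$. With $b:=\sum_{j\ge r}t_j=\caE_r(\Upsilon)\le E$ and $a=1-b$, this bounds $|\<\Psi|\Upsilon\>|$ by $\sqrt{(1-E')(1-b)}+\sqrt{E'b}$. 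A one‑variable analysis shows this expression is increasing in $b$ on $[0,E']$, with its unique interior maximum (value $1$) at $b=E'$, corresponding to $|\Upsilon\>=|\Psi\>$. Hence for $E<E'$ the constrained maximum occurs at $b=E$, yielding the claimed value, while for $E\ge E'$ one simply takes $|\Upsilon\>=|\Psi\>$. Equality throughout is realized by the explicit spectrum $t_j=\tfrac{1-E}{1-E'}s_j$ for $j<r$ and $t_j=\tfrac{E}{E'}s_j$ for $j\ge r$, which one checks is a legitimate nonincreasing probability vector with $\caE_r(\Upsilon)=E$.

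Next I would establish the monotonicity and concavity of $h(E):=F(\Psi,\tser(E))$. Monotonicity is immediate because $\tser(E)$ grows with $E$. For concavity on $[0,E']$ I would write $h(E)=\big[E'E+(1-E')(1-E)\big]+2\sqrt{E'(1-E')}\,\sqrt{E(1-E)}$; the bracket is affine in $E$ and $\sqrt{E(1-E)}$ is concave (its second derivative is $-\tfrac14(E-E^2)^{-3/2}<0$), so $h$ is concave there. A short check that $h'(E'^-)=0$ matches the flat branch $h\equiv1$ on $E\ge E'$ shows $h$ is globally $C^1$, concave, and nondecreasing.

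Finally I would upgrade these pure‑state facts to the mixed‑state sets. The easy direction $F(\Psi,\ser(E))\ge F(\Psi,\tser(E))$ holds since $\tser(E)\subseteq\ser(E)$. For the reverse, given any $\sigma\in\ser(E)$ I would pick a convex decomposition $\{|\Psi_l\>,p_l\}$ with $\sum_l p_l\caE_r(\Psi_l)$ arbitrarily close to $\caE_r(\sigma)\le E$, and estimate $\<\Psi|\sigma|\Psi\>=\sum_l p_l|\<\Psi|\Psi_l\>|^2\le\sum_l p_l\,h\big(\caE_r(\Psi_l)\big)\le h\big(\sum_l p_l\caE_r(\Psi_l)\big)$, using the pure‑state bound, Jensen's inequality for the concave $h$, and its monotonicity. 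Letting the decomposition approach optimality gives $F(\Psi,\ser(E))\le h(E)$, hence equality, and the concavity and monotonicity of $F(\Psi,\ser(E))$ follow from those of $h$. I expect the main subtlety to lie precisely here: one must invoke the convex‑roof definition of $\caE_r$ together with the concavity established above, rather than assuming the optimal $\sigma$ is pure \emph{a priori}. As an independent check, concavity of $F(\Psi,\ser(E))$ can also be seen directly from the convexity of $\caE_r$ and the linearity of the fidelity in $\sigma$: averaging optimizers for $E_1$ and $E_2$ produces an admissible state for any intermediate budget.
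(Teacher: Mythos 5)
Your proposal is correct and follows essentially the same route as the paper's proof: the pure-state case via \pref{pro:FidelityUB} (your block Cauchy--Schwarz just spells out the maximization the paper states directly), then concavity and monotonicity of $F(\Psi,\tser(E))$, and finally the Jensen-plus-monotonicity argument over convex decompositions to lift the result from $\tser(E)$ to $\ser(E)$. The only differences are cosmetic refinements: you obtain \eref{eq:FPsiSk} as the $E=0$ special case rather than proving it separately, and you work with near-optimal decompositions instead of assuming the convex-roof infimum is attained, which is in fact slightly more careful than the paper.
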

Note that the maximum fidelities $F(\Psi,\caS_r)=F(\Psi,\tcaS_r)$ and $F(\Psi,\ser(E))=F(\Psi,\tser(E))$ are independent of the local dimensions once the set of  nonzero Schmidt coefficients of $|\Psi\>$ is fixed. \Pref{pro:FPsiSk} endows $\caE_r(\Psi)$ with a simple operational interpretation, which will be discussed further in \sref{ssec:LocalTran}. 
In addition, $F(\Psi,\caS_1)=F(\Psi,\caS_\sep)=s_0$ coincides with the largest Schmidt coefficient of $|\Psi\>$ and is closely tied to the geometric measure of entanglement~\cite{Wei03}.

\begin{proposition}\label{pro:ErLip}
	Suppose  $|\Psi\>, |\Upsilon\>\in \caH_{AB}$ and $r\in[d-1]$. Then 
\begin{align}
|\caE_r(\Psi)-\caE_r(\Upsilon)|&\leq \ell(r,d)\sqrt{2-2|\<\Psi|\Upsilon\>|} \nonumber\\
&\leq \ell(r,d) \||\Psi\>-|\Upsilon\>\|_2, \label{eq:ErLip}
\end{align}
where
\begin{align}
\ell(r,d):=\begin{cases}
1 & r\leq d/2,\\
\frac{2\sqrt{r(d-r)}}{d} & r>d/2.
\end{cases}
\end{align}
\end{proposition}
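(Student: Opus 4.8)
The plan is to bound the difference $|\caE_r(\Psi)-\caE_r(\Upsilon)|$ by relating $\caE_r$ to a maximum fidelity and then exploiting the already-established continuity properties of that fidelity. The key observation is that \pref{pro:FPsiSk} gives $F(\Psi,\caS_r)=1-\caE_r(\Psi)$, so that $\caE_r(\Psi)=1-F(\Psi,\caS_r)$ and likewise for $\Upsilon$. Hence $|\caE_r(\Psi)-\caE_r(\Upsilon)|=|F(\Psi,\caS_r)-F(\Upsilon,\caS_r)|$. Since $F(\cdot,\caS_r)$ is a maximum of fidelities $\<\cdot|\sigma|\cdot\>$ over the fixed closed convex set $\caS_r$, I would first establish a continuity estimate for this fidelity functional in terms of the trace distance (or equivalently the Euclidean distance between state vectors), using the standard fact that fidelity with a fixed set is continuous and the infimum/supremum of $1$-Lipschitz-type functions.

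\begin{proof}[Plan]
First I would reduce to the two regimes $r\le d/2$ and $r>d/2$ separately, since the prefactor $\ell(r,d)$ differs. For the case $r\le d/2$, where $\ell(r,d)=1$, the goal is to show $|\caE_r(\Psi)-\caE_r(\Upsilon)|\le\sqrt{2-2|\<\Psi|\Upsilon\>|}$. Using $\caE_r=1-F(\cdot,\caS_r)$, it suffices to prove that $F(\cdot,\caS_r)$ does not change by more than $\sqrt{2-2|\<\Psi|\Upsilon\>|}$. The natural route is: let $\sigma^*\in\caS_r$ attain $F(\Upsilon,\caS_r)=\<\Upsilon|\sigma^*|\Upsilon\>$; then $F(\Psi,\caS_r)\ge\<\Psi|\sigma^*|\Psi\>$, and I would bound $\<\Upsilon|\sigma^*|\Upsilon\>-\<\Psi|\sigma^*|\Psi\>$ by expanding and applying Cauchy–Schwarz together with $\||\Psi\>-|\Upsilon\>\|_2^2=2-2\operatorname{Re}\<\Psi|\Upsilon\>\le 2-2|\<\Psi|\Upsilon\>|$. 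Symmetrizing in $\Psi,\Upsilon$ yields the one-sided Lipschitz bound with constant $1$.

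For the case $r>d/2$, where $\ell(r,d)=2\sqrt{r(d-r)}/d<1$, the bound is \emph{sharper} than the generic constant $1$, so a refinement is needed. Here I expect to use the explicit form of $\caE_r$ as a sum of the $d-r$ smallest Schmidt coefficients, \eref{eq:Er}, together with majorization-type control on how the sorted Schmidt spectra of $|\Psi\>$ and $|\Upsilon\>$ can differ when the states are close. The factor $\sqrt{r(d-r)}$ strongly suggests invoking a Cauchy–Schwarz step that splits the index set into the $r$ largest and $d-r$ smallest coefficients, pairing the size of the ``tail'' $\caE_r$ against the number of terms. The final inequality $\sqrt{2-2|\<\Psi|\Upsilon\>|}\le\||\Psi\>-|\Upsilon\>\|_2$ is immediate from $|\<\Psi|\Upsilon\>|\ge\operatorname{Re}\<\Psi|\Upsilon\>$, so the second line of \eref{eq:ErLip} follows with no extra work.

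The main obstacle I anticipate is the $r>d/2$ regime: extracting the improved constant $2\sqrt{r(d-r)}/d$ rather than the trivial $1$ requires a delicate argument tracking the Schmidt spectra under a small perturbation, rather than the clean fidelity-difference trick that suffices when $r\le d/2$. I would handle this by first proving the bound for pure states with optimally aligned Schmidt bases (so that the fidelity reduces to $\sum_j\sqrt{s_jt_j}$ as in \pref{pro:FidelityUB}), reducing the problem to a finite-dimensional optimization over two probability vectors constrained by their overlap, and then verifying that the worst case saturates at the stated constant.
\end{proof}
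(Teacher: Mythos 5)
There is a genuine gap, and it sits exactly where you flagged ``the main obstacle'': the regime $r>d/2$. Your reduction to probability vectors is sound---since $|\<\Psi|\Upsilon\>|\leq\sum_j\sqrt{s_jt_j}$ by \pref{pro:FidelityUB} and the right-hand side of \eref{eq:ErLip} is decreasing in the overlap, it suffices to prove the inequality with $|\<\Psi|\Upsilon\>|$ replaced by $\sum_j\sqrt{s_jt_j}$---but the resulting finite-dimensional optimization is then simply asserted to saturate at the stated constant; no argument is given, and that assertion \emph{is} the proposition in this regime. The missing idea, which lets the paper handle both regimes uniformly with no case split, is to parametrize $E=\caE_r(\Psi)=\sin^2 a$ and $E'=\caE_r(\Upsilon)=\sin^2 b$ and to invoke the a priori range $0\leq E,E'\leq (d-r)/d$ from \pref{pro:ErLUB}. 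Then $|E-E'|=|\sin(a+b)\sin(a-b)|\leq 2\lsp\ell(r,d)\lsp|\sin((a-b)/2)|$, because the range constraint caps $\sin(a+b)\leq\ell(r,d)$; this is precisely where $r>d/2$ buys the improved constant, since $(d-r)/d<1/2$ forces $a,b<\pi/4$ and hence $\sin(a+b)\leq 2\sqrt{r(d-r)}/d$. On the other side, the two-block bound of \pref{pro:FPsiSk} gives $|\<\Psi|\Upsilon\>|\leq\sqrt{EE'}+\sqrt{(1-E)(1-E')}=\cos(a-b)$, hence $\sqrt{2-2|\<\Psi|\Upsilon\>|}\geq 2\lsp|\sin((a-b)/2)|$, and combining the two estimates finishes the proof. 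Without identifying the role of the bound $\caE_r\leq (d-r)/d$ (Schur concavity), your plan cannot produce the factor $2\sqrt{r(d-r)}/d$.

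The $r\leq d/2$ branch also has two fixable defects. First, you write $\||\Psi\>-|\Upsilon\>\|_2^2=2-2\operatorname{Re}\<\Psi|\Upsilon\>\leq 2-2|\<\Psi|\Upsilon\>|$, which is backwards ($\operatorname{Re}z\leq|z|$); the repair is to align the global phase of $|\Upsilon\>$ so that $\<\Psi|\Upsilon\>\geq 0$, which is harmless because $\caE_r(\Upsilon)$ and $|\<\Psi|\Upsilon\>|$ are phase invariant. Second, expanding $\<\Upsilon|\sigma^*|\Upsilon\>-\<\Psi|\sigma^*|\Psi\>$ and applying vector Cauchy--Schwarz yields at best $\||\Upsilon\>-|\Psi\>\|_2\,\||\Upsilon\>+|\Psi\>\|_2=2\sqrt{1-|\<\Psi|\Upsilon\>|^2}$, which exceeds the target $\sqrt{2-2|\<\Psi|\Upsilon\>|}$ by a factor as large as $2$; to get the constant $1$ you should instead bound $\tr[\sigma^*(\Upsilon-\Psi)]$ by the positive eigenvalue $\sqrt{1-|\<\Psi|\Upsilon\>|^2}$ of $\Upsilon-\Psi$ (using $0\leq\sigma^*\leq\bbone$) and then use $1-x^2\leq 2(1-x)$. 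Even once repaired, however, this fidelity-difference route only gives the Lipschitz constant $1$ and so cannot substitute for the argument required when $r>d/2$.
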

\Pref{pro:ErLip} shows that $\caE_r(\Psi)$ is a  Lipschitz function with Lipschitz constant $\ell(r,d)\leq 1$.

\subsection{\label{ssec:LocalTran}Transformations of bipartite pure states under LOCC}

To understand the transformations of bipartite pure states under LOCC, we first need to introduce the concept of \emph{majorization} \cite{Bhatia96MA}.
Suppose  $\bfx=\left(x_0,x_1, \ldots, x_{d-1}\right)$ and $\bfy=\left(y_0,y_1, \ldots, y_{d-1}\right)$ are two $d$-dimensional real vectors. Let $\bfx^\downarrow$ be the vector obtained by arranging the components of $\bfx$ in nonincreasing order, which means $x^\downarrow_0\geq x^\downarrow_1\geq \cdots \geq x^\downarrow_{d-1}$; define $\bfy^\downarrow$ in a similar way.
Then $\bfx$ is \emph{majorized} by $\bfy$, denoted by $\bfx\prec \bfy$, if 
\begin{equation}
	\sum_{j=0}^{k} x^\downarrow_j \leq \sum_{j=0}^k y^\downarrow_j,\quad k=0,1, \ldots, d-1,  
\end{equation}
and the inequality is saturated when $k=d-1$. A function $f$ defined on a subset of $\bbR^d$ is \emph{Schur convex} (\emph{concave}) if $f(\bfx)\leq f(\bfy)$ [$f(\bfx)\geq f(\bfy)$] whenever $\bfx\prec \bfy$.

As a generalization,  given two pure states $|\Psi\>$ and $|\Upsilon\>$ in $\caH_{AB}$, $|\Psi\>$ is \emph{majorized} by $|\Upsilon\>$ if the Schmidt vector of $|\Psi\>$ is majorized by the Schmidt vector of $|\Upsilon\>$, that is, $\bfs_\Psi\prec \bfs_\Upsilon$. A function $f$ defined on pure states in $\caH_{AB}$ is \emph{Schur convex} (\emph{concave}) if it is invariant under local unitary transformations and is Schur convex (concave)
when regarded as a function of the Schmidt vector. For example, each entanglement measure $\caE_r$ defined in \eref{eq:Er} is Schur concave.

Now, we can formulate the majorization criterion on the transformations of bipartite pure states under LOCC originally established by Nielsen \cite{Nielsen99}.
\begin{proposition}\label{pro:Majorization}
Suppose $|\Psi\>$ and $|\Upsilon\>$ are two pure states in $\caH_{AB}$. Then $|\Psi\>$ can be transformed into $|\Upsilon\>$ under LOCC iff $|\Psi\>$ is majorized by $|\Upsilon\>$. 
\end{proposition}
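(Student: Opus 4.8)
The plan is to prove the two implications of this iff statement separately; this is Nielsen's majorization theorem, and I would follow the structure of his original argument. Throughout I may assume, after applying local unitaries (which leave the Schmidt vectors invariant and which LOCC can always implement), that both states are in standard Schmidt form, $|\Psi\rangle=\sum_{j=0}^{d-1}\sqrt{s_j}\,|jj\rangle$ and $|\Upsilon\rangle=\sum_{j=0}^{d-1}\sqrt{t_j}\,|jj\rangle$, with Schmidt vectors $\bfs_\Psi=(s_0,\ldots,s_{d-1})$ and $\bfs_\Upsilon=(t_0,\ldots,t_{d-1})$. The two reduced states are then $\rho_A^\Psi=\rho_B^\Psi=\diag(s_0,\ldots,s_{d-1})$ and similarly for $|\Upsilon\rangle$, so that the Schmidt vectors are exactly the spectra of the reduced states and the majorization condition $\bfs_\Psi\prec\bfs_\Upsilon$ refers to these spectra.

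For the sufficiency direction, $\bfs_\Psi\prec\bfs_\Upsilon\Rightarrow$ LOCC convertibility, I would construct an explicit one-way protocol. By the Hardy--Littlewood--P\'olya theorem, $\bfs_\Psi\prec\bfs_\Upsilon$ is equivalent to the existence of a doubly stochastic matrix $D$ with $\bfs_\Psi=D\,\bfs_\Upsilon$, and by Birkhoff's theorem $D=\sum_k p_k P_{\pi_k}$ is a convex combination of permutation matrices $P_{\pi_k}$ with $p_k>0$ and $\sum_k p_k=1$. I would then let Alice perform the measurement with diagonal Kraus operators
\begin{equation}
A_k=\sqrt{p_k}\sum_{j=0}^{d-1}\sqrt{\frac{t_{\pi_k(j)}}{s_j}}\,|j\rangle\langle j|,
\end{equation}
where the $j$-th term is set to zero whenever $s_j=0$ (in which case $t_{\pi_k(j)}=0$ as well, since $\bfs_\Psi=D\,\bfs_\Upsilon$). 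The completeness relation $\sum_k A_k^\dagger A_k=I$ reduces exactly to $\sum_k p_k\,t_{\pi_k(j)}=s_j$, i.e.\ to $\bfs_\Psi=D\,\bfs_\Upsilon$; outcome $k$ then occurs with probability $p_k$; and the normalized post-measurement state $(A_k\otimes I)|\Psi\rangle/\sqrt{p_k}=\sum_j\sqrt{t_{\pi_k(j)}}\,|jj\rangle$ has Schmidt coefficients $\{t_{\pi_k(j)}\}_j$, hence is local-unitarily equivalent to $|\Upsilon\rangle$. Alice broadcasts $k$ and both parties apply the local permutation unitaries associated with $\pi_k$ to recover $|\Upsilon\rangle$ exactly, so the transformation succeeds deterministically by one-way LOCC.

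For the necessity direction I would first reduce a general LOCC protocol that deterministically maps $|\Psi\rangle$ to $|\Upsilon\rangle$ to a single round in which only Alice measures. Since the target is a fixed pure state, Bob's measurements and classical messages can be absorbed so that, up to local unitaries in each branch, the protocol is equivalent to Alice applying a single instrument $\{A_i\}$ followed by a Bob correction, with the normalized output in every branch $i$ local-unitarily equivalent to $|\Upsilon\rangle$. Writing $\rho_B^{(i)}$ for Bob's reduced state in branch $i$ and $p_i$ for its probability, a direct computation using $\sum_i A_i^\dagger A_i=I$ gives $\rho_B^\Psi=\sum_i p_i\,\rho_B^{(i)}$, where each $\rho_B^{(i)}$ is isospectral to $\rho_B^\Upsilon$ and hence has spectrum $\bfs_\Upsilon$. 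Applying Ky Fan's maximum principle, $\sum_{j=0}^{k}\lambda_j^\downarrow(\rho)=\max_{\dim W=k+1}\tr(\Pi_W\rho)$ with $\Pi_W$ the projector onto $W$, to this convex combination yields $\sum_{j=0}^{k}\lambda_j^\downarrow(\rho_B^\Psi)\le\sum_i p_i\sum_{j=0}^{k}\lambda_j^\downarrow(\rho_B^{(i)})=\sum_{j=0}^{k}t_j^\downarrow$ for every $k$, which is precisely $\bfs_\Psi\prec\bfs_\Upsilon$.

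I expect the main obstacle to be the structural reduction in the necessity direction: justifying rigorously that an arbitrary multi-round, two-way LOCC protocol with a deterministic pure-state target can be replaced by the single one-way round analyzed above (the content of the Lo--Popescu theorem). An alternative that sidesteps this reduction is to argue round by round, showing that each local measurement can only send the Schmidt vector to one that majorizes it in the appropriate averaged sense and then invoking transitivity of $\prec$; this requires a careful treatment of the branching structure of the protocol. A secondary technical point, in the sufficiency direction, is the bookkeeping for vanishing Schmidt coefficients and for the normalization of the conditional states, which I have flagged above but would verify in full.
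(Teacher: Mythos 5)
The paper does not prove this proposition at all: it is Nielsen's theorem, stated with a citation to Ref.~\cite{Nielsen99}, and it is explicitly absent from the list of results proved in the appendices. Your reconstruction is therefore measured against the literature rather than against a proof in the paper, and it is essentially Nielsen's original argument. The sufficiency half is complete and correct: Hardy--Littlewood--P\'olya gives $\bfs_\Psi=D\lsp\bfs_\Upsilon$ with $D$ doubly stochastic, Birkhoff gives $D=\sum_k p_k P_{\pi_k}$, and your diagonal Kraus operators $A_k$ pass the completeness check, produce outcome $k$ with probability $p_k$, and yield a state that permutation unitaries on both sides map to $|\Upsilon\>$; the vanishing-Schmidt-coefficient bookkeeping you flag is fixable by adding the projector onto the kernel of $\rho_A^\Psi=\tr_B(\Psi)$ to any single Kraus operator, which does not affect the action on $|\Psi\>$. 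The necessity half, combining the Lo--Popescu one-way reduction with Ky Fan's maximum principle applied to the decomposition $\rho_B^\Psi=\sum_i p_i\lsp\rho_B^{(i)}$ into states isospectral to $\rho_B^\Upsilon=\tr_A(\Upsilon)$, is also the standard route and is sound; invoking Lo--Popescu as a named known theorem is no more of a shortcut than what the paper itself does in citing the entire proposition. The only caveat is the one you identify yourself: in a fully self-contained treatment, the reduction of an arbitrary multi-round two-way LOCC protocol to a single measurement by Alice is the genuine work of the necessity direction, and your proposal defers rather than supplies it.
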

\Pref{pro:Majorization} in particular implies that the Schmidt number and the entanglement measure $\caE_r$ cannot increase under LOCC.
If the majorization condition does not hold, then $|\Psi\>$ can be transformed into $|\Upsilon\>$ only probabilistically under LOCC. The maximum success probability $p(\Psi \rightarrow \Upsilon)$ 
is determined by Vidal~\cite{Vidal99}:
\begin{equation}
p(\Psi \rightarrow \Upsilon)=\min_r \frac{\caE_r(\Psi)}{\caE_r(\Upsilon)}.
\end{equation}
This result highlights the importance of the family of  entanglement measures $\caE_r$ in understanding bipartite entanglement transformations under LOCC.

\subsection{\label{ssec:QSV}Quantum state verification}

Before introducing the framework of  HDE certification, we briefly review the general framework of QSV \cite{PLM18,ZhuHEVPQSshort19,ZhuHEVPQSlong19}.
A device is supposed to produce the target quantum state $|\Psi\>\in\caH$, but actually produces $N$ states $\sigma_1, \sigma_2, \ldots, \sigma_N$ in $N$ runs.
We assume that either $\sigma_j=|\Psi\>\<\Psi|$ for all $j$ or $\<\Psi|\sigma_j|\Psi\>\leq 1-\varepsilon$ for all $j$.
Now our task is to distinguish the two cases with a given significance level $\delta$. 
To this end, for each $\sigma_j$ we can perform a binary  measurement $\left \{ \Pi_l,\bbone-\Pi_l \right \} $ chosen randomly with probability $p_l$ from a given set of measurements. Here the first outcome $\Pi_l$ corresponds to passing the test, while the second outcome corresponds to failing the test. To guarantee that the target state can always pass the test, the test operator $\Pi_l$ should satisfy the condition $\Pi_l|\Psi\>=|\Psi\>$.

The performance of the above verification strategy is characterized by the verification operator $\Omega=\sum_l p_l \Pi_l$. When $\<\Psi|\sigma|\Psi\>\leq 1-\varepsilon$, the maximum probability that $\sigma$ can pass one test  on average reads \cite{PLM18,ZhuHEVPQSshort19,ZhuHEVPQSlong19}
\begin{equation}
    \max_{\<\Psi|\sigma|\Psi\>\leq 1-\varepsilon} \tr(\Omega \sigma)=1-\varepsilon+\beta(\Omega)\varepsilon=1-\nu(\Omega)\varepsilon ,
\end{equation}
where $\beta(\Omega)$ is the second largest eigenvalue of $\Omega$, and $\nu(\Omega):=1-\beta(\Omega)$ is the \emph{spectral gap} from the maximum eigenvalue $1$. After $N$ runs, the probability that  ``bad'' states $\sigma_j$ pass all the tests is at most $[1-\nu(\Omega)\varepsilon]^N$.
To verify the target state $|\Psi\>$ within infidelity $\varepsilon$ and significance level $\delta$ using the strategy $\Omega$, the minimum number of tests required is \cite{PLM18,ZhuHEVPQSshort19,ZhuHEVPQSlong19}
\begin{equation}\label{eq:NQSV,ZhuHEVPQSlong19}
    N=\left\lceil\frac{\ln{\delta}}{\ln{[1-\nu(\Omega)\varepsilon}]}\right\rceil\approx \left\lceil\frac{1}{\nu(\Omega)} \varepsilon^{-1}\ln\left(\delta^{-1}\right)\right\rceil ,
\end{equation}
which is inversely proportional to the spectral gap $\nu(\Omega)$.

To minimize the number $N$ of tests, we need to maximize the spectral gap $\nu(\Omega)$. If there is no restriction on the  measurements, then the best strategy is to perform the test $\left \{|\Psi\>\<\Psi|,\bbone-|\Psi\>\<\Psi| \right \}$, which means $\Omega=|\Psi\>\<\Psi|$, $\nu(\Omega)=1$, and $N\approx\varepsilon^{-1}\ln\left(\delta^{-1}\right)$.
In most cases of practical interest, however, the target state $|\Psi\>$ is entangled, and it is extremely difficult to perform the entangling measurement mentioned above. Therefore, it is of paramount importance to determine the
efficiency limit of restricted operations, such as LOCC or separable operations as a relaxation. A verification operator $\Omega$ for $|\Psi\>$ is called local if the corresponding strategy can be realized  by LOCC. The verification operator $\Omega$ is separable if both $\Omega$ and $\bbone-\Omega$ are proportional to separable states. Define $\nu_\lc(\Psi)$ as the maximum spectral gap that can be achieved by LOCC and  $\beta_\lc(\Psi)=1-\nu_\lc(\Psi)$. Define $\nu_\sep(\Psi)$ and $\beta_\sep(\Psi)$ as the counterparts for separable operations.

Next, the verification operator $\Omega$ for $|\Psi\>$ is \emph{homogeneous} \cite{ZhuHEVPQSshort19,ZhuHEVPQSlong19} if it has the form
\begin{equation}
	\Omega=|\Psi\>\<\Psi|+\beta(\Omega)(\bbone-|\Psi\>\<\Psi|),
\end{equation}
where $\beta(\Omega)\in[0,1)$. In this case, the probability $\tr(\Omega\sigma)$ is completely determined by the fidelity $\<\Psi|\sigma|\Psi\>$. Such a verification strategy is also useful for fidelity estimation. In analogy to $\nu_\lc(\Psi)$,  we define $\nu_\lc^\rmH(\Psi)$ as the maximum spectral gap of homogeneous verification operators  that can be realized by LOCC and let $\beta_\lc^\rmH(\Psi)=1-\nu_\lc^\rmH(\Psi)$. As the counterparts for separable operations,  $\nu_\sep^\rmH(\Psi)$ and $\beta_\sep^\rmH(\Psi)$ can be defined in  a similar way. Here the argument $\Psi$ can be omitted if there is no danger of confusion.

\subsection{\label{ssec:VBPS}Verification of bipartite pure states}
In this subsection we summarize the main results on the verification of bipartite pure states. Our discussion is mainly based on \rscite{Owari08,ZhuHOVMES19,LiHZ19,Yu19}. Nevertheless, some results presented here were not clearly stated before.

Suppose  $|\Psi\>\in \caH_{AB}$ is a bipartite pure state, then 	the spectral gaps $\nu_\lc (\Psi)$, $\nu_\sep (\Psi)$, $\nu_\lc^\rmH (\Psi)$, and $\nu_\sep^\rmH (\Psi)$ are completely determined by the nonzero Schmidt coefficients of $|\Psi\>$ as we shall see shortly. In addition, they are closely tied to important entanglement measures  known as the robustness and random robustness \cite{VidaT99,Owari08,ZhuHOVMES19}. 
\Psref{pro:SpectralGapDimInd}-\ref{pro:QSVopt} below are proved in \aref{app:QSVProofs}.
\begin{proposition}\label{pro:SpectralGapDimInd}
	Suppose $|\Psi\>\in\caH_{AB}$; then the spectral gaps $\nu_\lc (\Psi)$, $\nu_\sep (\Psi)$, $\nu_\lc^\rmH (\Psi)$, and $\nu_\sep^\rmH (\Psi)$ only depend on the nonzero Schmidt coefficients of $|\Psi\>$ and are independent of the local dimensions. 
\end{proposition}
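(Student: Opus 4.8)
The plan is to reduce every one of the four optimizations to the ``effective'' support subspace of $|\Psi\>$ and to show the optimal value is unchanged, so that it depends only on the nonzero Schmidt coefficients. Write $R=\SR(\Psi)$, let $P_A$ and $P_B$ be the projectors onto the supports $\caH_A'=\supp(\rho_A)$ and $\caH_B'=\supp(\rho_B)$ (both of dimension $R$), and set $\caH'=\caH_A'\otimes\caH_B'$, so that $|\Psi\>\in\caH'$ and $(P_A\otimes P_B)|\Psi\>=|\Psi\>$. Inside $\caH'$ the state equals, up to local unitaries, the canonical $R\times R$ state $\sum_{j=0}^{R-1}\sqrt{s_j}\,|jj\>$ built from the nonzero Schmidt coefficients alone. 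Since all four spectral gaps $\nu_\lc$, $\nu_\sep$, $\nu_\lc^\rmH$, $\nu_\sep^\rmH$ are manifestly invariant under local unitaries, it suffices to prove that for an arbitrary bipartite space $\caH_{AB}$ each gap equals the corresponding gap computed on $\caH'$; the latter involves only $R$ and $(s_0,\dots,s_{R-1})$, which yields both claims at once. I would establish this equality through two matching inequalities, treating the LOCC, separable, and homogeneous variants in parallel.

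For the inequality ``$\caH'$-gap $\ge$ $\caH_{AB}$-gap'' I would compress an optimal full-space verification operator $\Omega$ to $\Omega'=(P_A\otimes P_B)\,\Omega\,(P_A\otimes P_B)$. Since $(P_A\otimes P_B)|\phi\>=|\phi\>$ for $|\phi\>\in\caH'$, one has $\<\phi|\Omega'|\phi\>=\<\phi|\Omega|\phi\>$ there, and because maximizing $\<\phi|\Omega|\phi\>$ over unit vectors orthogonal to $|\Psi\>$ within $\caH'$ runs over a subset of all such vectors, $\beta(\Omega')\le\beta(\Omega)$, i.e.\ $\nu(\Omega')\ge\nu(\Omega)$. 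It then remains to check that $\Omega'$ is an admissible strategy on $\caH'$. For separable operations, conjugation by the local projector $P_A\otimes P_B$ keeps both $\Omega'$ and $\bbone_{\caH'}-\Omega'=(P_A\otimes P_B)(\bbone-\Omega)(P_A\otimes P_B)$ proportional to separable states. For LOCC, note that moving between $\caH_{AB}$ and its tensor-product subspace $\caH'$ by appending or discarding local ancillas is itself LOCC, so running the full-space protocol after a local embedding of $\caH'$ realizes $\Omega'$ on $\caH'$. The homogeneous form is preserved because $\Omega'$ remains of the form $|\Psi\>\<\Psi|+\beta'(\bbone_{\caH'}-|\Psi\>\<\Psi|)$ on $\caH'$.

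For the reverse inequality ``$\caH_{AB}$-gap $\ge$ $\caH'$-gap'' I would lift an optimal subspace operator $\Omega'$, with $\beta'=\beta(\Omega')<1$, to the block-diagonal operator $\Omega=\Omega'\oplus\beta'\bbone_{(\caH')^\perp}$ on $\caH_{AB}$. Its spectrum is that of $\Omega'$ together with the eigenvalue $\beta'$ on the complement, so $|\Psi\>$ stays the unique top eigenvector and $\beta(\Omega)=\beta'$, preserving the gap exactly. Separability is clear because the complementary projector factorizes as $\bbone-P_A\otimes P_B=P_A\otimes(\bbone-P_B)+(\bbone-P_A)\otimes P_B+(\bbone-P_A)\otimes(\bbone-P_B)$, a sum of local tensor-product projectors, and likewise $\bbone-\Omega=(\bbone_{\caH'}-\Omega')\oplus(1-\beta')\bbone_{(\caH')^\perp}$ is separable; if $\Omega'$ is homogeneous then so is $\Omega$. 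The LOCC realization is the delicate point: Alice and Bob first run the local measurement $\{P_A,\bbone-P_A\}\otimes\{P_B,\bbone-P_B\}$; on the in-subspace outcome they execute the optimal LOCC strategy for $\Omega'$, and on every other outcome they pass with probability $\beta'$ via a coin flip. Summing the branch pass-operators reproduces $\Omega'+\beta'(\bbone-P_A\otimes P_B)=\Omega$ exactly.

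The step I expect to be the main obstacle is precisely this LOCC lift. A naive mixture of the subspace strategy with an ``accept-outside'' test rescales the subspace block to $\beta'\Omega'+(1-\beta')\bbone_{\caH'}$ and degrades the gap from $1-\beta'$ to $\beta'(1-\beta')$, so one must verify that the project-then-branch protocol fully decouples $\caH'$ from its complement, leaving $\Omega'$ untouched while seeding exactly $\beta'$ on $(\caH')^\perp$. Once both inequalities are secured for all four variants, each gap equals its value on the canonical $R\times R$ representative and hence depends only on the nonzero Schmidt coefficients, independently of $d_A$ and $d_B$.
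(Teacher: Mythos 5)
Your proof is correct and takes essentially the same route as the paper's own argument (\lref{lem:nuOmegaEmbed}): compress an optimal strategy with the local support projectors $Q_A\otimes Q_B$ in one direction, and lift a subspace strategy by padding the orthogonal complement in the other, checking that each of the LOCC, separable, and homogeneous admissibility conditions and the spectral gap survive both maps. The only cosmetic difference is that you use the $\beta'$-padded lift (with the project-then-branch LOCC realization) uniformly for all four quantities, whereas the paper pads with zero (reject outside the subspace) for $\nu_\lc$ and $\nu_\sep$ and reserves the $\beta$-padding $\Omega+\beta(\Omega)(\bbone'-Q)$ for the homogeneous cases; both choices work.
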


Thanks to \pref{pro:SpectralGapDimInd},  without loss of generality, we can assume that  $|\Psi\>$ has the following Schmidt decomposition:
\begin{equation}\label{eq:GBPS}
	|\Psi\>=\sum_{j=0}^{d-1} \sqrt{s_j}\lsp|jj\>.
\end{equation}
Then one can construct the following three verification operators for $|\Psi\>$, which are optimal or nearly optimal:
\begin{align}
	\Omega_\sep&:=|\Psi\>\<\Psi|+\sum_{j,k=0,\,j\neq k}^{d-1} \sqrt{s_js_k}\lsp |jk\>\<jk|,  \label{eq:OmegaSep} \\
	\Omega_\sep^\rmH&:=|\Psi\>\<\Psi|+\frac{\sqrt{s_0s_1}}{\lsp{1+\sqrt{s_0s_1}}}(\bbone-|\Psi\>\<\Psi|),  \label{eq:OmegaSepH} \\
	\Omega_\lc^\rmH&:=|\Psi\>\<\Psi|+\frac{s_0+ s_1}{2+s_0+ s_1}(\bbone-|\Psi\>\<\Psi|).  \label{eq:OmegaLCH}
\end{align}
Here, $\Omega_\sep$ was introduced in \rcite{Owari08} and is tied to the computation of the robustness of entanglement \cite{VidaT99}; $\Omega_\sep^\rmH$ is tied to the random robustness; $\Omega_\lc^\rmH$ was introduced in \rcite{LiHZ19}.
The next proposition follows from \rscite{Owari08,LiHZ19} as shown in \aref{app:QSVProofs}.  
\begin{proposition}\label{pro:OmegaLCsep}
	Suppose $|\Psi\>\in\caH_{AB}$ is given in \eref{eq:GBPS}; then $\Omega_\sep$ is  a separable verification operator of $|\Psi\>$,  $\Omega_\sep^\rmH$ is a separable homogeneous verification operator of $|\Psi\>$,  and $\Omega_\lc^\rmH$ is a local homogeneous verification operator of $|\Psi\>$. In addition,
	\begin{align}
		&\beta\left(\Omega_\sep\right)=\sqrt{s_0s_1}, &&\nu\left(\Omega_\sep\right)=1-\sqrt{s_0s_1}, \label{eq:GapSep}\\
		&\beta\left(\Omega_\sep^\rmH\right)=\frac{\sqrt{s_0s_1}}{1+\sqrt{s_0s_1}}, &&\nu\left(\Omega_\sep^\rmH\right)=\frac{1}{1+\sqrt{s_0s_1}}, \label{eq:GapSepH}\\
		&\beta\left(\Omega_\lc^\rmH\right)=\frac{s_0+ s_1}{2+s_0+ s_1}, &&\nu\left(\Omega_\lc^\rmH\right)=\frac{2}{2+s_0+ s_1}.  \label{eq:GapLCH} 
	\end{align}
\end{proposition}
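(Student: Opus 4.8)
The plan is to split the proof into three tasks: (i) confirm that each of the three operators is a legitimate verification operator of $|\Psi\>$, (ii) compute the three spectral gaps, and (iii) establish the separability of $\Omega_\sep$ and $\Omega_\sep^\rmH$ and the LOCC-realizability of $\Omega_\lc^\rmH$. Tasks (i) and (ii) are elementary and I would dispose of them first. For (i) one checks that $\Omega|\Psi\>=|\Psi\>$ and $0\le\Omega\le\bbone$ in each case; for the two homogeneous operators this is immediate once the coefficient multiplying $\bbone-|\Psi\>\<\Psi|$ is seen to lie in $[0,1)$, while for $\Omega_\sep$ the identity $|jk\>\<jk|\,|\Psi\>=0$ for $j\ne k$ gives $\Omega_\sep|\Psi\>=|\Psi\>$, and $0\le\Omega_\sep\le\bbone$ follows from the explicit eigenvalues found next.

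For (ii) I would diagonalize $\Omega_\sep$ directly. The vectors $|jk\>$ with $j\ne k$ are eigenvectors with eigenvalues $\sqrt{s_js_k}$, while on the $d$-dimensional diagonal subspace $\operatorname{span}\{|jj\>\}$ the off-diagonal term vanishes and $\Omega_\sep$ reduces to $|\Psi\>\<\Psi|$, contributing eigenvalue $1$ for $|\Psi\>$ and $0$ on its orthogonal complement. Since $s_0\ge s_1\ge\cdots$, the largest eigenvalue below $1$ is $\sqrt{s_0s_1}$, giving $\beta(\Omega_\sep)=\sqrt{s_0s_1}$ and $\nu(\Omega_\sep)=1-\sqrt{s_0s_1}$. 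The gaps of $\Omega_\sep^\rmH$ and $\Omega_\lc^\rmH$ are then read off from their homogeneous form, since the second largest eigenvalue equals the coefficient multiplying $\bbone-|\Psi\>\<\Psi|$; this yields the expressions in \eqsref{eq:GapSepH}{eq:GapLCH}.

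The real work is (iii). For the separability of $\Omega_\sep$ I would follow \rcite{Owari08} and exhibit explicit conic decompositions of both $\Omega_\sep$ and $\bbone-\Omega_\sep$ into product operators. The off-diagonal part $\sum_{j\ne k}\sqrt{s_js_k}|jk\>\<jk|$ is manifestly separable, but the difficulty is that $|\Psi\>\<\Psi|$ is entangled and, as a short computation shows, the ``diagonal remainder'' $\sum_j|jj\>\<jj|-|\Psi\>\<\Psi|$ appearing in $\bbone-\Omega_\sep$ is itself entangled; hence the entangled pieces must be recombined with off-diagonal blocks before a product decomposition becomes available. The standard device is to average tensor products of local states over a family of relative phases so that the result reproduces exactly the coherences of $|\Psi\>\<\Psi|$ together with the required diagonal weights, and I would verify that this phase-averaged separable operator coincides with $\Omega_\sep$ (resp.\ $\bbone-\Omega_\sep$). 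The separability of $\Omega_\sep^\rmH$ follows in the same spirit through its interpretation via the random robustness: writing $\Omega_\sep^\rmH=(1-\beta)|\Psi\>\<\Psi|+\beta\bbone$ with $\beta=\sqrt{s_0s_1}/(1+\sqrt{s_0s_1})$, this $\beta$ is precisely the threshold at which both $\Omega_\sep^\rmH$ and $\bbone-\Omega_\sep^\rmH=(1-\beta)(\bbone-|\Psi\>\<\Psi|)$ become separable.

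Finally, to show $\Omega_\lc^\rmH$ is local I would reproduce the explicit LOCC protocol of \rcite{LiHZ19}, in which Alice and Bob perform correlated local measurements with classical communication fixing Bob's setting from Alice's outcome, and I would check that the averaged test operator of this protocol equals $\Omega_\lc^\rmH$, whence $\beta_\lc^\rmH=(s_0+s_1)/(2+s_0+s_1)$. I expect step (iii)---and within it the explicit separable decomposition of $\bbone-\Omega_\sep$ in arbitrary dimension $d$, where positivity of the partial transpose no longer certifies separability---to be the main obstacle, whereas the homogeneous cases reduce to one-parameter separability thresholds and the gap computations are purely spectral.
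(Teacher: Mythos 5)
Your overall architecture (verification property, spectral gaps, then separability/locality) matches the paper's, and most of it is sound: the eigenvalue computations giving \eqref{eq:GapSep}--\eqref{eq:GapLCH} are correct, the phase-averaging plan for $\Omega_\sep$ and $\bbone-\Omega_\sep$ is precisely the content of the proofs of Lemma~1 and Theorem~2 of \rcite{Owari08} (which the paper simply cites rather than reproduces), and invoking the protocol of \rcite{LiHZ19} for $\Omega_\lc^\rmH$ is also what the paper does. Your ``recombination'' idea for $\bbone-\Omega_\sep$ can in fact be closed along the lines you sketch: decompose the diagonal remainder as $\sum_m |w_m\>\<w_m|$ with $|w_m\>=\sum_j w_{m,j}|jj\>$, note that phase averaging of suitable product vectors makes each operator $|w_m\>\<w_m|+\sum_{j\neq k}|w_{m,j}||w_{m,k}|\,|jk\>\<jk|$ separable, and verify via Cauchy--Schwarz and AM--GM that $\sum_m |w_{m,j}||w_{m,k}|\leq \sqrt{(1-s_j)(1-s_k)}\leq 1-\sqrt{s_js_k}$, so the leftover off-diagonal weights are nonnegative.

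The genuine gap is your treatment of $\Omega_\sep^\rmH$. The claim that $\beta=\sqrt{s_0s_1}/(1+\sqrt{s_0s_1})$ is ``the threshold at which both $\Omega_\sep^\rmH$ and $\bbone-\Omega_\sep^\rmH$ become separable'' is wrong for the complement: for \emph{any} homogeneous operator one has $\bbone-\Omega^\rmH=(1-\beta)(\bbone-|\Psi\>\<\Psi|)$, so the separability of $\bbone-\Omega_\sep^\rmH$ is independent of $\beta$ and there is no threshold to speak of. The random-robustness result of \rcite{VidaT99}, which is what fixes the value of $\beta$, certifies only that $|\Psi\>\<\Psi|+\sqrt{s_0s_1}\,\bbone\propto \Omega_\sep^\rmH$ is separable; it says nothing about $\bbone-|\Psi\>\<\Psi|$. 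The separability of $\bbone-|\Psi\>\<\Psi|$ is a nontrivial statement in its own right---it is PPT, but, as you yourself note in the context of $\bbone-\Omega_\sep$, PPT does not certify separability beyond two qubits---and your proposal never proves it. The fix is one line and is exactly the paper's route: $(1+\sqrt{s_0s_1}\lsp)(\bbone-\Omega_\sep^\rmH)=\bbone-|\Psi\>\<\Psi|=(\bbone-\Omega_\sep)+\sum_{j\neq k}\sqrt{s_js_k}\lsp|jk\>\<jk|$, which is separable once $\bbone-\Omega_\sep$ is. (The paper likewise derives the separability of $\Omega_\sep^\rmH$ itself from that of $\Omega_\sep$, using $(1+\sqrt{s_0s_1}\lsp)\Omega_\sep^\rmH=\Omega_\sep+\sqrt{s_0s_1}\sum_j|jj\>\<jj|+\sum_{j\neq k}\bigl(\sqrt{s_0s_1}-\sqrt{s_js_k}\lsp\bigr)|jk\>\<jk|+\sqrt{s_0s_1}\sum_{j}\sum_{k\geq d}|jk\>\<jk|$, where all added coefficients are nonnegative since $s_0s_1\geq s_js_k$ for $j\neq k$; your random-robustness route is a legitimate alternative for that half of the claim, but not for the complement.)
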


\begin{proposition}\label{pro:betaOmegaLB}
	Suppose $|\Psi\>\in\caH_{AB}$ is given in \eref{eq:GBPS} and has Schmidt rank~$r$. Then 
	\begin{gather}
		\frac{\left(\sum_j \sqrt{s_j}\right)^2-1}{r^2-1}\leq \beta_\sep\leq \beta_\lc\leq \frac{s_0+s_1}{2+s_0+s_1}\leq\frac{1}{3}, \label{eq:betaOmegaLB1}\\
		\frac{\sqrt{s_0s_1}}{1+\sqrt{s_0s_1}}=\beta_\sep^\rmH\leq \beta_\lc^\rmH\leq \frac{s_0+s_1}{2+s_0+s_1}\leq\frac{1}{3}.  \label{eq:betaOmegaLB2}
	\end{gather}	
\end{proposition}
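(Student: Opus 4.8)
The plan is to split each displayed chain into easy inequalities, obtained from operator inclusions together with the explicit strategies of \pref{pro:OmegaLCsep}, and two hard converse bounds resting on the robustness of entanglement. Every LOCC strategy is in particular separable, so the feasible set defining $\beta_\lc$ is contained in the one defining $\beta_\sep$; minimizing the second-largest eigenvalue over the larger (separable) set returns the smaller value, whence $\beta_\sep\le\beta_\lc$, and the same reasoning restricted to homogeneous operators gives $\beta_\sep^\rmH\le\beta_\lc^\rmH$. For the upper bounds, $\Omega_\lc^\rmH$ of \eref{eq:OmegaLCH} is by \pref{pro:OmegaLCsep} a local \emph{and} homogeneous verification operator with $\beta(\Omega_\lc^\rmH)=(s_0+s_1)/(2+s_0+s_1)$; being feasible for both the $\beta_\lc$ and the $\beta_\lc^\rmH$ minimizations, it bounds each from above by this value. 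Since $x\mapsto x/(2+x)$ is increasing and $s_0+s_1\le\sum_j s_j=1$, that value is at most $1/3$, which closes the right ends of \eref{eq:betaOmegaLB1} and \eref{eq:betaOmegaLB2}; the upper half of the equality $\beta_\sep^\rmH=\sqrt{s_0s_1}/(1+\sqrt{s_0s_1})$ comes likewise from $\Omega_\sep^\rmH$ via \eref{eq:GapSepH}.

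The heart of the proof is the leftmost lower bound on $\beta_\sep$, which I would derive from the generalized robustness $R_g$. By \pref{pro:SpectralGapDimInd} I may assume $d_A=d_B=r$, so that $D=r^2$ and every Schmidt coefficient is positive. Let $\Omega$ be any separable verification operator and set $\beta:=\beta(\Omega)$. Because $\Omega|\Psi\>=|\Psi\>$, the vector $|\Psi\>$ is an eigenvector and $\Omega$ splits as $\Omega=|\Psi\>\<\Psi|+\Omega'$ with $0\le\Omega'\le\beta(\bbone-|\Psi\>\<\Psi|)$ supported on $|\Psi\>^\perp$; tracing gives $\mu:=\tr\Omega'\le\beta(r^2-1)$. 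Since $\Omega$ is proportional to a separable state, so is $(|\Psi\>\<\Psi|+\mu\rho')/(1+\mu)$ with $\rho':=\Omega'/\mu$ a state, and the definition of the generalized robustness then forces $R_g(\Psi)\le\mu$. Using the known value $R_g(\Psi)=(\sum_j\sqrt{s_j})^2-1$ for a bipartite pure state, I conclude $\beta\ge R_g(\Psi)/(r^2-1)=[(\sum_j\sqrt{s_j})^2-1]/(r^2-1)$. Only separability of $\Omega$ was used, so the bound holds a fortiori for the narrower class in the definition of $\beta_\sep$.

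It remains to match $\beta_\sep^\rmH=\sqrt{s_0s_1}/(1+\sqrt{s_0s_1})$ from below, which I would handle by the same scheme with the random robustness $R_r$ replacing $R_g$. For a homogeneous $\Omega=|\Psi\>\<\Psi|+\beta(\bbone-|\Psi\>\<\Psi|)$ the perturbation is pinned to $\Omega'=\beta(\bbone-|\Psi\>\<\Psi|)$, so separability of $\Omega$ is equivalent to separability of the pseudo-isotropic state $[(1-\beta)|\Psi\>\<\Psi|+\beta\bbone]/[1+\beta(r^2-1)]$, whose threshold is exactly the random robustness $R_r(\Psi)=r^2\sqrt{s_0s_1}$; solving $\beta r^2/(1-\beta)=R_r(\Psi)$ returns $\beta=\sqrt{s_0s_1}/(1+\sqrt{s_0s_1})$ and hence the equality. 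I expect the converse bounds to be the main obstacle: the delicate points are justifying the reduction to the $r\times r$ support so that $D=r^2$ is the relevant dimension, and invoking the exact generalized and random robustness of a pure state; by contrast the remaining inequalities are bookkeeping over the operators already built in \pref{pro:OmegaLCsep}. As a consistency check, for a maximally entangled state of rank $r$ both the lower bound $1/(r+1)$ and the upper bound $(s_0+s_1)/(2+s_0+s_1)$ collapse to $1/(r+1)$, so the chain is tight there.
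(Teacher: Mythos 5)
Your proof is correct and follows the same skeleton as the paper's: the reduction to $d_A=d_B=r$ via \pref{pro:SpectralGapDimInd}, the definitional inequalities $\beta_\sep\leq\beta_\lc$ and $\beta_\sep^\rmH\leq\beta_\lc^\rmH$, the upper bounds read off from the explicit operators $\Omega_\lc^\rmH$ and $\Omega_\sep^\rmH$ of \pref{pro:OmegaLCsep}, and robustness-based lower bounds. The one substantive difference is that where the paper simply cites Lemma~2 of \rcite{ZhuHOVMES19} for the bound $\beta(\Omega)\geq\bigl[\bigl(\sum_j\sqrt{s_j}\bigr)^2-1\bigr]/(r^2-1)$, you re-derive it from scratch: the spectral decomposition $\Omega=|\Psi\>\<\Psi|+\Omega'$ with $\tr(\Omega')\leq\beta(r^2-1)$, the observation that separability of $\Omega/\tr(\Omega)$ forces $R_g(\Psi)\leq\tr(\Omega')$ by the very definition of the generalized robustness, and the known pure-state value $R_g(\Psi)=\bigl(\sum_j\sqrt{s_j}\bigr)^2-1$. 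Your insistence on the \emph{generalized} robustness here (the mixing state $\Omega'/\mu$ need not be separable) is exactly the right technical care, and it recovers the cited bound because generalized robustness and Vidal--Tarrach robustness coincide on pure states. Similarly, your pseudo-isotropic threshold argument is the computation the paper compresses into one line via the random robustness $R(\Psi)=d_Ad_B\sqrt{s_0s_1}$, with achievability of the equality supplied, as in the paper, by $\Omega_\sep^\rmH$. What your route buys is self-containedness (modulo citing the two known pure-state robustness formulas); what the paper's buys is brevity. I see no gaps.
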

The inequalities in \eqsref{eq:betaOmegaLB1}{eq:betaOmegaLB2} are equivalent to the following inequalities:
\begin{gather}
\frac{2}{3}\leq \frac{2}{2+s_0+s_1}\leq \nu_\lc\leq \nu_\sep \leq 	\frac{r^2-\left(\sum_j \sqrt{s_j}\right)^2}{r^2-1},\label{eq:nuOmegaLB1}\\		
\frac{2}{3}\leq\frac{2}{2+s_0+s_1}\leq \nu_\lc^\rmH\leq \nu_\sep^\rmH=
\frac{1}{1+\sqrt{s_0s_1}}\label{eq:nuOmegaLB2}.
\end{gather}

According to \rcite{LiHZ19}, the verification strategy  $\Omega_\lc^\rmH$ can be realized using adaptive local projective measurements based on 2-designs \cite{Renes04,Scott06,Zauner11} and two-way communication. It is not clear if this strategy can be realized in a simpler way.
As an alternative, \rscite{ZhuHOVMES19,LiHZ19} proposed a simpler verification strategy using only two local projective tests based on MUB and one-way communication. Recall that two bases $\{|\psi_j\>\}_{j=0}^{d-1}$ and $\{|\varphi_j\>\}_{j=0}^{d-1}$ on $\caH_A$ are mutually unbiased if $|\<\psi_j|\varphi_k\>|^2=1/d$ for all $j,k=0,1,\ldots, d-1$. To be concrete, we assume that the MUB is composed of the computational basis $\{|j\>\}_{j=0}^{d-1}$ and the Fourier basis $\{|u_j\>\}_{j=0}^{d-1}$, where $|u_j\>=\sum_{k=0}^{d-1}\omega^{jk}|k\>/\sqrt{d}$ with  $\omega=\rme^{2\pi\rmi/d}$. 
Let  $M=\sqrt{d}\diag(\sqrt{s_0},\ldots,\sqrt{s_{d-1}})$ and $|v_j\>=M|u_j^*\>$. Then the verification operator tied to the MUB strategy is a convex sum of two test projectors and can be expressed as follows:
\begin{equation}\label{eq:OmegaMUB}
	\Omega_\MUB:=\frac{1}{2}\sum_{j=0}^{d-1} |jj\>\<jj|+\frac{1}{2}\sum_{j=0}^{d-1} |u_j v_j\>\<u_j v_j|.
\end{equation}
In addition, \rcite{LiHZ19} showed that
\begin{align}
\beta(\Omega_\MUB)=\frac{1}{2}, \quad  \nu(\Omega_\MUB)=\frac{1}{2}. \label{eq:Gap2}
\end{align}
Moreover, $\nu(\Omega_\MUB)$ attains the maximum among all strategies composed of two distinct tests based on local projective measurements. So the MUB strategy is quite appealing to practical applications.

Next, suppose  $|\Psi\>\in \caH_{AB}$ is maximally entangled, which means  $s_j=1/d$ for $j=0, 1,\ldots, d-1$. Then the first three inequalities in \eref{eq:betaOmegaLB1} and the first two inequalities in  \eref{eq:betaOmegaLB2} are saturated. In addition, the verification operators $\Omega_\sep^\rmH$ and $\Omega_\lc^\rmH$ coincide and are optimal among separable verification operators. 
Moreover, the optimal separable verification operator is unique when $d=d_A=d_B$ according to \pref{pro:QSVopt} below.

\begin{proposition}\label{pro:QSVopt}
	Suppose $|\Psi\>\in \caH_{AB}$ is a maximally entangled state.  Then 
\begin{align}\label{eq:betaOmegaOpt}
\beta_\sep= \beta_\lc=\beta_\sep^\rmH=\beta_\lc^\rmH=\frac{1}{d+1}.
\end{align}	
If in addition $d=d_A=d_B$ and $\Omega$ is a separable verification operator of $|\Psi\>$. Then $\beta(\Omega)=1/(d+1)$ iff
\begin{equation}\label{eq:PsiOmegaOpt}
	\Omega=\Omega_{\opt}:=|\Psi\>\<\Psi|+\frac{1}{d+1}(\bbone-|\Psi\>\<\Psi|),
\end{equation} 
which is automatically local and homogeneous. 
\end{proposition}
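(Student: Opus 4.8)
The plan is to treat the two claims separately: the common value $1/(d+1)$ drops out of the general bounds in \pref{pro:betaOmegaLB} by specialization, whereas the uniqueness will exploit the full unitary symmetry of the maximally entangled state. For the value I would set $s_j=1/d$, so that the Schmidt rank is $r=\SR(\Psi)=d$ and $\sum_j\sqrt{s_j}=\sqrt d$. Substituting into \eqref{eq:betaOmegaLB1}, the lower bound becomes $\frac{(\sum_j\sqrt{s_j})^2-1}{r^2-1}=\frac{d-1}{d^2-1}=\frac{1}{d+1}$ and the upper bound becomes $\frac{s_0+s_1}{2+s_0+s_1}=\frac{1}{d+1}$; as the two ends of $\frac{1}{d+1}\le\beta_\sep\le\beta_\lc\le\frac{1}{d+1}$ coincide, both collapse to $\frac{1}{d+1}$. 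The homogeneous chain \eqref{eq:betaOmegaLB2} does the same for $\beta_\sep^\rmH$ (which is an exact equality there) and $\beta_\lc^\rmH$, giving \eqref{eq:betaOmegaOpt}. I would also observe that at $s_0=s_1=1/d$ both $\Omega_\sep^\rmH$ and $\Omega_\lc^\rmH$ reduce to $\Omega_{\opt}$, so by \pref{pro:OmegaLCsep} the operator $\Omega_{\opt}$ is simultaneously separable, local, and homogeneous; this supplies an optimizer and the closing remark.

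For the uniqueness, where $d=d_A=d_B$, I would first use local-unitary invariance to assume $|\Psi\>=|\Phi\>$, and then record an operator inequality: any separable verification operator $\Omega$ with $\beta(\Omega)=1/(d+1)$ fixes $|\Psi\>$ and is bounded by $1/(d+1)$ on $|\Psi\>^\perp$, with no matrix element linking the two, so $\Omega\preceq\Omega_{\opt}$.

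The core device is the twirl $\mathcal{T}(X):=\int_{\rmU(d)}(U\otimes U^*)X(U\otimes U^*)^\dagger\,dU$ over the Haar measure, where $U^*$ is the entrywise complex conjugate and $(U\otimes U^*)|\Psi\>=|\Psi\>$. This $\mathcal{T}$ is positive (hence order preserving) and trace preserving, and, crucially for $d_A=d_B$, its image is the two-dimensional commutant $\mathrm{span}\{\bbone,|\Psi\>\<\Psi|\}$. Because each $(U\otimes U^*)$ fixes $|\Psi\>$, the twirl preserves the verification condition, so $\mathcal{T}(\Omega)=|\Psi\>\<\Psi|+b(\bbone-|\Psi\>\<\Psi|)$ is homogeneous with $\beta=b$. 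Since the separable verification operators form a convex, closed set that is stable under conjugation by $U\otimes U^*$, $\mathcal{T}(\Omega)$ remains separable, whence $b\ge\beta_\sep^\rmH=\frac{1}{d+1}$; applying $\mathcal{T}$ to $\Omega\preceq\Omega_{\opt}$ and using $\mathcal{T}(\Omega_{\opt})=\Omega_{\opt}$ gives $b\le\frac{1}{d+1}$. Hence $\mathcal{T}(\Omega)=\Omega_{\opt}$, and since $\Omega_{\opt}-\Omega\succeq0$ satisfies $\mathcal{T}(\Omega_{\opt}-\Omega)=0$ with $\mathcal{T}$ trace preserving, the trace of $\Omega_{\opt}-\Omega$ vanishes, forcing $\Omega=\Omega_{\opt}$.

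The hard part is the uniqueness, and in particular the two steps where equal local dimensions are indispensable: identifying the commutant of $\{U\otimes U^*\}$ as exactly $\mathrm{span}\{\bbone,|\Psi\>\<\Psi|\}$ --- for $d_A<d_B$ there is extra freedom on the complement of the Schmidt support, which is precisely why uniqueness can fail --- and verifying that separability of the verification operator survives the Haar twirl. By contrast, the value part is a routine specialization of \pref{pro:betaOmegaLB}.
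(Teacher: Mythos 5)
Your proof is correct and follows essentially the same route as the paper: the value $1/(d+1)$ by saturating the bounds of \pref{pro:betaOmegaLB} at $s_j=1/d$, and uniqueness via the Haar twirl $\int (U\otimes U^*)\Omega(U\otimes U^*)^\dag\, dU$, using preservation of separability and trace together with the bound $\beta_\sep^\rmH=1/(d+1)$ to force $\mathcal{T}(\Omega)=\Omega_\opt$ and then $\Omega=\Omega_\opt$. Your closing step (positive semidefinite plus zero trace implies zero) just makes explicit what the paper leaves implicit in "$\Omega$ is necessarily homogeneous."
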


When $d=d_A=d_B$, $\caH_A$ and $\caH_B$ are isomorphic. 
According to \rcite{ZhuHOVMES19}, the optimal verification strategy $\Omega_{\opt}$ in \eref{eq:PsiOmegaOpt} can be realized using local projective measurements based on 2-designs.
Suppose $|\Psi\>=|\Phi\>$ and  $\{\caB_l,\lsp  p_l\}_l$ is a weighted set of orthonormal bases in $\caH_{A}$ that forms a complex projective 2-design. Then $\Omega_{\opt}$ can be realized as follows:
\begin{equation}
	\Omega_{\opt}=\sum_l p_l \Pi(\caB_l),\quad  \Pi(\caB_l)=\sum_{|\psi\>\in \caB_l}|\psi\>\<\psi|\otimes|\psi^*\>\<\psi^*|,
\end{equation}
where $|\psi^*\>$ denotes the complex conjugate of $|\psi\>$ with respect to the computational basis, and the test $\Pi(\caB_l)$ can be realized by conjugate-basis measurements. To construct such an optimal strategy, we need at least $d+1$ distinct local projective tests because at least $d+1$ bases are required to construct a 2-design in dimension $d$ \cite{Roy07}. The lower bound can be saturated when the local dimension $d$ is a prime power, in which case a 2-design can be constructed from a complete set of $d+1$ MUB (with uniform weights) \cite{Ivanovic81,William89,Durt10}.

\section{\label{sec:ResCert}Entanglement and resource certification}

In this section we introduce a general framework for certifying quantum resources under restricted operations. For concreteness, we shall focus on entanglement certification under LOCC or separable operations in the following discussions, but generalization to other resources, such as quantum coherence and nonstabilizerness, is immediate.

Let $\caH$ be a bipartite or multipartite Hilbert space  and $|\Psi\>\in \caH$ the target state of practical interest. Let $\caS$ be a closed convex subset of quantum states in $\caD(\caH )$ that represents the set of separable states or states with limited entanglement. Here our main concern is bipartite entanglement and we are particularly interested in the sets $\caS_r$ and $\ser(E)$ defined in \sref{ssec:HDE}, but the following analysis has a much wider scope of applications.

\subsection{\label{ssec:FW}Basic framework}
A device is supposed to produce the target  state $|\Psi\>\in \caH$, but may actually produce  states $\sigma_1,\sigma_2,\ldots, \sigma_N\in \caS$ in $N$ runs. Now, our task is to distinguish the two situations as efficiently as possible. To this end we can devise  a family of  binary tests $\{\Pi_l, \bbone-\Pi_l\}_l$ as in QSV, such that the target state can always pass each test, that is, $\Pi_l |\Psi\>=|\Psi\>$. Suppose the test $\{\Pi_l, \bbone-\Pi_l\}$ is chosen randomly with probability $p_l$; then the performance of this strategy is determined by the verification operator $\Omega=\sum_l p_l \Pi_l$ as in QSV. More precisely, the performance is determined by the \emph{separation probability} $P_\Omega(\Psi,\caS)$ defined as follows:
\begin{equation}\label{eq:Performance}
    P_\Omega(\Psi,\caS):=\max_{\sigma\in \caS} \tr(\Omega\sigma).
\end{equation}
The smaller the separation probability $P_\Omega(\Psi,\caS)$, the better the strategy $\Omega$ can distinguish the target state $|\Psi\>$ from states in the set $\caS$. After $N$ runs, the  states in the set $\caS$ can pass all tests with probability at most 
$P_\Omega^N(\Psi,\caS)$. When $P_\Omega(\Psi,\caS)<1$, to certify the target state $|\Psi\>$ with significance level $\delta$, which means $P_\Omega^N(\Psi,\caS)\leq \delta$,  it suffices to choose 
\begin{equation}\label{eq:NCertHDEOmega}
	N=\left\lceil \frac{\ln{\delta}}{\ln{P_\Omega(\Psi,\caS)}}\right \rceil =\left\lceil\frac{\ln\left(\delta^{-1}\right)}{\ln\left[P_\Omega^{-1}(\Psi,\caS)\right]}\right \rceil.
\end{equation}

To minimize the number of tests required, we need to  minimize $P_\Omega(\Psi,\caS)$ over all accessible strategies. Here we are particularly interested in strategies that can be realized by LOCC. The \emph{separation probability} of $|\Psi\>$ with respect to the set $\caS$  is defined as 
\begin{equation}\label{eq:SepProb}
    P_\lc(\Psi,\caS):=\min_{\Omega}P_\Omega(\Psi,\caS)=\min_{\Omega}\max_{\sigma\in \caS} \tr(\Omega\sigma),
\end{equation}
where the minimization is taken over all local strategies. It can be written as $P(\Psi,\caS)$ if there is no danger of confusion. If $\Omega$ is an optimal strategy, then the number of tests required reads
\begin{equation}\label{eq:NCertHDE}
	N=\left\lceil \frac{\ln{\delta}}{\ln{P(\Psi,\caS)}}\right \rceil =\left\lceil\frac{\ln\left(\delta^{-1}\right)}{\ln[P^{-1}(\Psi,\caS)]}\right \rceil.
\end{equation}
Therefore, the separation probability $P(\Psi,\caS)$
determines how well we can certify the target state $|\Psi\>$ against the set $\caS$ under LOCC and 
is thus of special interest to both theoretical studies and practical applications.
When $\caS=\caS_\sep$, the notation $P(\Psi,\caS)$ can be abbreviated as $P(\Psi)$ for simplicity. 

The separation probability $P_\lc^\rmH(\Psi,\caS)$ is defined in analogy to $P_\lc(\Psi,\caS)$ in \eref{eq:SepProb} except that the minimization is taken over all homogeneous strategies that can be realized by LOCC.  It can be abbreviated as $P^\rmH(\Psi,\caS)$ if there is no danger of confusion. 
In addition, $P_\sep(\Psi,\caS)$ and $P_\sep^\rmH(\Psi,\caS)$ can be defined by replacing LOCC with separable operations.

\subsection{\label{ssec:PropofSP}Basic properties of the separation probabilities}

Here we clarify the basic properties of the separation probabilities based on LOCC. Suppose $\caS$ is a closed convex subset of $\caD(\caH)$ that is invariant under local unitary transformations.  \Psref{pro:PassProbUB}-\ref{pro:OptOmegaSym} below can be verified by straightforward calculations. Similar results also apply to separation probabilities based on separable operations.

\begin{proposition}\label{pro:PassProbUB}
	Suppose $\Omega$ is  a verification operator of $|\Psi\>\in \caH$; then 
	\begin{equation}\label{eq:PassProbUB}
		P_\Omega(\Psi,\caS)\leq \nu(\Omega) F(\Psi,\caS)+\beta(\Omega),
	\end{equation}
	where the inequality is saturated when $\Omega$ is homogeneous. 
\end{proposition}
\Pref{pro:PassProbUB} is a simple corollary of the following inequality: 
\begin{equation}
\Omega\leq |\Psi\>\<\Psi|+\beta(\Omega)(\bbone-|\Psi\>\<\Psi|)=\nu(\Omega)|\Psi\>\<\Psi|+\beta(\Omega)\bbone,
\end{equation}
and the inequality is saturated when $\Omega$ is homogeneous. In turn it implies the following proposition. 
\begin{proposition}\label{pro:SepProbUBQSV}
Suppose $|\Psi\>\in \caH$; then 
\begin{gather}
F(\Psi,\caS)\leq P(\Psi,\caS)\leq \nu(\Psi) F(\Psi,\caS)+\beta(\Psi), \label{eq:SepProbLUB} \\
	P^\rmH(\Psi,\caS)=\nu^\rmH(\Psi) F(\Psi,\caS)+\beta^\rmH(\Psi). \label{eq:SepProbH}
\end{gather}
\end{proposition}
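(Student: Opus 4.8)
The plan is to read off all three relations from \pref{pro:PassProbUB} by optimizing over the accessible verification operators. The organizing observation is that the upper bound in \eref{eq:PassProbUB} rearranges, using $\beta(\Omega)=1-\nu(\Omega)$, as $\nu(\Omega)F(\Psi,\caS)+\beta(\Omega)=1-\nu(\Omega)[1-F(\Psi,\caS)]$. Since $F(\Psi,\caS)\leq 1$, the factor $1-F(\Psi,\caS)$ is nonnegative, so making this quantity as small as possible is exactly the same as making the spectral gap $\nu(\Omega)$ as large as possible. This is the link that ties the separation-probability optimization to the QSV optimization already packaged into $\nu(\Psi)$ and $\nu^\rmH(\Psi)$.

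For the lower bound $F(\Psi,\caS)\leq P(\Psi,\caS)$, first I would show that every verification operator dominates the target projector, $\Omega\geq|\Psi\>\<\Psi|$. Each test operator satisfies $\Pi_l|\Psi\>=|\Psi\>$ with $0\leq\Pi_l\leq\bbone$; decomposing an arbitrary vector as $|\chi\>=c|\Psi\>+|\phi\>$ with $|\phi\>\perp|\Psi\>$, the relation $\Pi_l|\Psi\>=|\Psi\>$ (hence also $\<\Psi|\Pi_l=\<\Psi|$) kills the cross terms and gives $\<\chi|(\Pi_l-|\Psi\>\<\Psi|)|\chi\>=\<\phi|\Pi_l|\phi\>\geq 0$. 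Thus $\Pi_l\geq|\Psi\>\<\Psi|$ and $\Omega=\sum_l p_l\Pi_l\geq|\Psi\>\<\Psi|$, so $\tr(\Omega\sigma)\geq\<\Psi|\sigma|\Psi\>$ for every $\sigma\in\caS$. Taking the maximum over $\sigma\in\caS$ gives $P_\Omega(\Psi,\caS)\geq F(\Psi,\caS)$ for every $\Omega$, and then the minimum over local strategies gives $P(\Psi,\caS)\geq F(\Psi,\caS)$.

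For the remaining bounds I would apply \pref{pro:PassProbUB} in the rewritten form. Evaluating it at a local verification operator $\Omega^\ast$ that attains the optimal spectral gap $\nu(\Omega^\ast)=\nu_\lc(\Psi)=\nu(\Psi)$ yields $P(\Psi,\caS)\leq P_{\Omega^\ast}(\Psi,\caS)\leq 1-\nu(\Psi)[1-F(\Psi,\caS)]=\nu(\Psi)F(\Psi,\caS)+\beta(\Psi)$, which is the upper bound in \eref{eq:SepProbLUB}. For \eref{eq:SepProbH} the same computation runs with equalities: restricting to homogeneous local operators turns \pref{pro:PassProbUB} into an identity, so $P^\rmH(\Psi,\caS)=\min_\Omega\{1-\nu(\Omega)[1-F(\Psi,\caS)]\}=1-\nu^\rmH(\Psi)[1-F(\Psi,\caS)]=\nu^\rmH(\Psi)F(\Psi,\caS)+\beta^\rmH(\Psi)$, where the middle step uses $1-F(\Psi,\caS)\geq 0$ together with the definition $\nu^\rmH(\Psi)=\max_\Omega\nu(\Omega)$ over homogeneous local strategies.

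I expect no real obstacle here; once \pref{pro:PassProbUB} is available the argument is essentially bookkeeping. The two points deserving a line of care are the operator inequality $\Omega\geq|\Psi\>\<\Psi|$, handled by the orthogonal decomposition above, and the assertion that the extremal gaps $\nu_\lc(\Psi)$ and $\nu^\rmH_\lc(\Psi)$ are genuinely attained by some accessible $\Omega$, so that the minima over strategies are achieved rather than merely approached.
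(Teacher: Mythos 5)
Your proposal is correct and takes essentially the same route as the paper, which presents \pref{pro:SepProbUBQSV} as a direct corollary of \pref{pro:PassProbUB} (via the operator inequality $\Omega\leq \nu(\Omega)|\Psi\>\<\Psi|+\beta(\Omega)\bbone$) and leaves the remaining details as straightforward calculations. Your explicit filling-in of those details --- the lower bound via $\Omega\geq|\Psi\>\<\Psi|$ and the optimization of the upper bound over (homogeneous) local strategies, which holds at the level of infima and suprema even without attainment --- is exactly the intended bookkeeping.
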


Next, we  show that pure states that are equivalent under local unitary transformations share the same separation probabilities. In addition, their optimal verification operators are connected by local unitary transformations. 
\begin{proposition}\label{pro:SepProbUT}
Suppose $\Omega$ is  a (homogeneous) verification operator of   $|\Psi\>\in \caH$ and $U\in \rmU(\caH)$. Then $U\Omega U^\dag$ is a  (homogeneous) verification operator of $U|\Psi\>$. If $U$ is a local unitary, then 
\begin{equation}
	\begin{aligned}
		P_{U\Omega U^\dag}\left(U\Psi U^\dag,\caS\right)&=P_{\Omega}(\Psi,\caS),\\
		P\left(U\Psi U^\dag,\caS\right)&=P(\Psi,\caS),\\
		P^\rmH\left(U\Psi U^\dag,\caS\right)&=P^\rmH(\Psi,\caS).
	\end{aligned}
\end{equation}
\end{proposition}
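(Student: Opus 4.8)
The plan is to split the proposition into two stages: first the unitary-covariance claim about verification operators, which holds for arbitrary $U\in\rmU(\caH)$, and then the three equalities, which use that $U$ is local together with the standing assumption that $\caS$ is invariant under local unitaries.

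First I would verify that $U\Omega U^\dag$ is a verification operator of $U|\Psi\>$. Since $\Omega=\sum_l p_l\Pi_l$ with $\Pi_l|\Psi\>=|\Psi\>$, one has $\Omega|\Psi\>=|\Psi\>$ and $0\le\Omega\le\bbone$, so $|\Psi\>$ is an eigenvector with the maximal eigenvalue $1$. Conjugation by a unitary preserves the ordering $0\le\cdot\le\bbone$ because $U\bbone U^\dag=\bbone$, and a direct computation gives $(U\Omega U^\dag)(U|\Psi\>)=U\Omega|\Psi\>=U|\Psi\>$, so $U|\Psi\>$ passes $U\Omega U^\dag$ with certainty. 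For the homogeneous case I would substitute $\Omega=|\Psi\>\<\Psi|+\beta(\Omega)(\bbone-|\Psi\>\<\Psi|)$ and conjugate term by term to obtain $U\Omega U^\dag=U|\Psi\>\<\Psi|U^\dag+\beta(\Omega)(\bbone-U|\Psi\>\<\Psi|U^\dag)$, which is exactly the homogeneous form for $U|\Psi\>$ with the same second eigenvalue; hence homogeneity and the value of $\beta$ are preserved.

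Next, for a local unitary $U$, I would prove the pointwise equality $P_{U\Omega U^\dag}(U\Psi U^\dag,\caS)=P_\Omega(\Psi,\caS)$ by a change of optimization variable. Writing $\sigma=U\tau U^\dag$ and using cyclicity of the trace gives $\tr\bigl((U\Omega U^\dag)\sigma\bigr)=\tr(\Omega\tau)$, and because $\caS$ is invariant under local unitaries, $\tau$ ranges over $\caS$ precisely when $\sigma=U\tau U^\dag$ does; therefore the maxima defining $P_{U\Omega U^\dag}(U\Psi U^\dag,\caS)=\max_{\sigma\in\caS}\tr\bigl((U\Omega U^\dag)\sigma\bigr)$ and $P_\Omega(\Psi,\caS)=\max_{\tau\in\caS}\tr(\Omega\tau)$ coincide.

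Finally, the identities for $P$ and $P^\rmH$ follow once I establish that $\Omega\mapsto U\Omega U^\dag$ is a bijection from the (homogeneous) \emph{local} verification operators of $|\Psi\>$ onto those of $U|\Psi\>$, with inverse $\Omega'\mapsto U^\dag\Omega'U$; combining this bijection with the pointwise equality yields the chain $P(U\Psi U^\dag,\caS)=\min_{\Omega'}P_{\Omega'}(U\Psi U^\dag,\caS)=\min_\Omega P_{U\Omega U^\dag}(U\Psi U^\dag,\caS)=\min_\Omega P_\Omega(\Psi,\caS)=P(\Psi,\caS)$, and the same chain restricted to homogeneous operators gives the $P^\rmH$ identity. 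The main obstacle is the one point requiring genuine care: that the local property is preserved under conjugation by a local unitary. The argument is that if a strategy for $|\Psi\>$ is realizable by LOCC, then conjugating every test by $U=U_A\otimes U_B$ produces a strategy for $U|\Psi\>$ that is still LOCC-realizable, since $U_A$ and $U_B$ can be folded into Alice's and Bob's local operations at the start of the protocol; this closure of the LOCC class under local-unitary pre/post-processing is exactly what guarantees the minimizations on the two sides run over corresponding sets, completing the proof.
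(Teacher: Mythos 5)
Your proof is correct, and it is precisely the "straightforward calculation" the paper invokes: the paper states Propositions \ref{pro:PassProbUB}--\ref{pro:OptOmegaSym} without an explicit proof, and your three-stage argument (conjugation preserves the verification-operator and homogeneity properties; a change of variable $\sigma = U\tau U^\dag$ plus local-unitary invariance of $\caS$ gives the pointwise equality; and the bijection $\Omega \mapsto U\Omega U^\dag$ on local strategies, justified by LOCC being closed under local-unitary pre-processing, transfers the equality to the minimized quantities $P$ and $P^\rmH$) is exactly the intended verification. No gaps; in particular you correctly isolated the only point needing care, namely that conjugation by a local unitary preserves LOCC-realizability.
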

Thanks to \pref{pro:SepProbUT}, we can focus on a particular representative within each equivalent class when studying the separation probabilities. In addition, \pref{pro:SepProbUT} implies the following two propositions.

\begin{proposition}\label{pro:OmegaAve}
	Suppose $\Omega$ is a verification operator of the state $|\Psi\>\in \caH$, $U\in \rmU(\caH)$,  $U|\Psi\>=|\Psi\>$, $\Omega_1=U\Omega U^\dagger$, and  $\Omega_2=(\Omega+\Omega_1)/2$. Then $\Omega_1$ and $\Omega_2$  are  verification operators of $|\Psi\>$ and $\beta(\Omega_2)\leq \beta(\Omega_1)=\beta(\Omega)$. 
If $U$ is a local unitary, then 
\begin{equation}
	P_{\Omega_2}(\Psi,\caS)\leq P_{\Omega_1}(\Psi,\caS)=P_{\Omega}(\Psi,\caS).
\end{equation}
\end{proposition}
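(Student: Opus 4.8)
The plan is to verify the four assertions in turn, reducing everything to the convexity of a pointwise maximum of linear functionals. First I would note that $U|\Psi\>=|\Psi\>$ forces $U^\dag|\Psi\>=|\Psi\>$, since $|\Psi\>=U^\dag U|\Psi\>=U^\dag|\Psi\>$. Writing $\Omega=\sum_l p_l \Pi_l$ as a convex mixture of tests, we have $\Omega_1=U\Omega U^\dag=\sum_l p_l\lsp U\Pi_l U^\dag$, and each conjugated test obeys $U\Pi_l U^\dag|\Psi\>=U\Pi_l|\Psi\>=U|\Psi\>=|\Psi\>$; together with $0\le U\Pi_l U^\dag\le\bbone$ this shows $\Omega_1$ is a legitimate verification operator of $|\Psi\>$. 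Because $\Omega_1$ is unitarily equivalent to $\Omega$, the two share the same spectrum, whence $\beta(\Omega_1)=\beta(\Omega)$. Likewise $\Omega_2=(\Omega+\Omega_1)/2$ is a convex combination of verification operators of $|\Psi\>$, so it satisfies $0\le\Omega_2\le\bbone$ and $\Omega_2|\Psi\>=|\Psi\>$, and is therefore also a verification operator.

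The key step is the inequality $\beta(\Omega_2)\le\beta(\Omega)$, which I would establish from the variational formula
\begin{equation}
\beta(\Omega)=\max_{|\phi\>\perp|\Psi\>,\,\||\phi\>\|=1}\<\phi|\Omega|\phi\>,
\end{equation}
valid for any verification operator: since $\Omega|\Psi\>=|\Psi\>$ with $\Omega\le\bbone$, the vector $|\Psi\>$ spans the top eigenspace, the orthogonal complement is $\Omega$-invariant, and the largest eigenvalue of the restriction equals the second largest eigenvalue $\beta(\Omega)$. This formula exhibits $\beta$ as a maximum of linear functionals of $\Omega$ over a fixed feasible set, and is therefore convex in $\Omega$. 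Applying it to $\Omega_2$ and using subadditivity of the maximum yields
\begin{equation}
\beta(\Omega_2)\le\frac{1}{2}\bigl[\beta(\Omega)+\beta(\Omega_1)\bigr]=\beta(\Omega).
\end{equation}

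For the second half I would first invoke \pref{pro:SepProbUT}: when $U$ is local, $U|\Psi\>=|\Psi\>$ gives $U\Psi U^\dag=\Psi$, so $P_{\Omega_1}(\Psi,\caS)=P_{U\Omega U^\dag}(U\Psi U^\dag,\caS)=P_\Omega(\Psi,\caS)$, using that $\caS$ is invariant under local unitaries. The remaining inequality follows from the same convexity idea: since $P_\Omega(\Psi,\caS)=\max_{\sigma\in\caS}\tr(\Omega\sigma)$ is again a maximum of linear functionals of $\Omega$, it is convex in $\Omega$, so
\begin{equation}
P_{\Omega_2}(\Psi,\caS)\le\frac{1}{2}\bigl[P_\Omega(\Psi,\caS)+P_{\Omega_1}(\Psi,\caS)\bigr]=P_{\Omega_1}(\Psi,\caS).
\end{equation}
The only genuinely delicate point is justifying the variational formula for $\beta$, i.e.\ confirming that $|\Psi\>$ occupies the maximal eigenspace of every verification operator so that the restriction to its orthogonal complement realizes $\beta$; once that is secured, both inequalities are immediate consequences of convexity of a maximum of linear functionals, and the equalities $\beta(\Omega_1)=\beta(\Omega)$ and $P_{\Omega_1}=P_\Omega$ reduce to unitary invariance.
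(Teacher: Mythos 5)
Your proof is correct and follows exactly the route the paper intends: the paper offers no detailed argument, stating only that \pref{pro:OmegaAve} can be verified by straightforward calculation and follows from \pref{pro:SepProbUT}, and your write-up is precisely that calculation — unitary invariance of the spectrum and of $P_\Omega(\Psi,\caS)$ (via \pref{pro:SepProbUT} and the local-unitary invariance of $\caS$) for the equalities, and convexity of a pointwise maximum of linear functionals in $\Omega$ for the two inequalities. Your careful justification of the variational formula $\beta(\Omega)=\max_{|\phi\>\perp|\Psi\>,\,\||\phi\>\|=1}\<\phi|\Omega|\phi\>$, using that $|\Psi\>$ is a top eigenvector and its orthogonal complement is $\Omega$-invariant, closes the only point the paper leaves implicit.
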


Let $\rmU_\rmL(\caH)$ be the group composed of all local unitaries in $\rmU(\caH)$ and  $\trmU_\rmL(\caH)$ the group generated by $\rmU_\rmL(\caH)$ and complex conjugation (with respect to the computational basis).
\begin{proposition}\label{pro:OptOmegaSym}
	Suppose $|\Psi\>\in \caH$ is invariant under a  subgroup $G$ of $\trmU_\rmL(\caH)$.  Then there exists an optimal local verification operator $\Omega$ of $|\Psi\>$ which is $G$-invariant, that is, 
	\begin{equation}
	P_\Omega(\Psi,\caS)=P(\Psi,\caS),\quad U\Omega U^\dag =\Omega\quad \forall \, U\in G. 
	\end{equation}
\end{proposition}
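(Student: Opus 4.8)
The plan is to obtain the desired operator by twirling an optimal local verification operator over the symmetry group $G$. First I would fix an optimal local verification operator $\Omega_0$ of $|\Psi\>$, so that $P_{\Omega_0}(\Psi,\caS)=P(\Psi,\caS)$; modulo the closedness issue flagged below, such an $\Omega_0$ exists because $P_\Omega(\Psi,\caS)=\max_{\sigma\in\caS}\tr(\Omega\sigma)$ is continuous and convex in $\Omega$ while the minimization ranges over a bounded set of operators. Since $G$ is a closed subgroup of the compact group $\trmU_\rmL(\caH)$, it carries a normalized left-invariant Haar measure $\mu$, and I would set
\begin{equation}
	\Omega:=\int_G U\Omega_0 U^\dag\,d\mu(U),
\end{equation}
which reduces to $\frac{1}{|G|}\sum_{U\in G}U\Omega_0 U^\dag$ when $G$ is finite.

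Next I would check that $\Omega$ is a local verification operator of $|\Psi\>$. Each $U\Omega_0 U^\dag$ satisfies $0\leq U\Omega_0 U^\dag\leq\bbone$, and these bounds are preserved under averaging, so $0\leq\Omega\leq\bbone$. Because every $U\in G$ fixes $|\Psi\>$, we have $U^\dag|\Psi\>=|\Psi\>$ and hence $U\Omega_0 U^\dag|\Psi\>=U\Omega_0|\Psi\>=U|\Psi\>=|\Psi\>$, which gives $\Omega|\Psi\>=|\Psi\>$. By \pref{pro:SepProbUT} each $U\Omega_0 U^\dag$ is again a verification operator of $U|\Psi\>=|\Psi\>$, and it remains local since conjugation by a local unitary can be absorbed into the LOCC protocol. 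A probabilistic mixture of local strategies is realizable by LOCC, so the local verification operators of $|\Psi\>$ form a convex set, and $\Omega$ (a Haar average, that is, a limit of finite convex combinations) lies in its closure and is therefore local.

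It then remains to verify optimality and $G$-invariance. For optimality I would combine the convexity of $P_\Omega(\Psi,\caS)$ in $\Omega$ (the integral version of the pairwise averaging in \pref{pro:OmegaAve}) with \pref{pro:SepProbUT}. Since $\tr(\Omega\sigma)$ is linear in $\Omega$,
\begin{equation}
	P_\Omega(\Psi,\caS)=\max_{\sigma\in\caS}\int_G \tr\!\left(U\Omega_0 U^\dag\sigma\right)d\mu(U)\leq \int_G P_{U\Omega_0 U^\dag}(\Psi,\caS)\,d\mu(U),
\end{equation}
and by \pref{pro:SepProbUT} every integrand equals $P_{\Omega_0}(\Psi,\caS)=P(\Psi,\caS)$, whence $P_\Omega(\Psi,\caS)\leq P(\Psi,\caS)$; the reverse inequality holds because $\Omega$ is local, so $P_\Omega(\Psi,\caS)=P(\Psi,\caS)$. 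For $G$-invariance, the substitution $W=VU$ together with left-invariance of $\mu$ yields $V\Omega V^\dag=\int_G (VU)\Omega_0(VU)^\dag\,d\mu(U)=\int_G W\Omega_0 W^\dag\,d\mu(W)=\Omega$ for every $V\in G$.

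The hard part will be twofold. First, for a continuous group $G$ one must ensure that the Haar average genuinely stays within the set of \emph{local} verification operators; this is exactly the closedness property of that set that also underlies the existence of $\Omega_0$, and it may instead be handled by an approximation with near-optimal strategies. Second, $G$ can contain antiunitary elements arising from complex conjugation, for which \pref{pro:SepProbUT} must be supplemented by its antiunitary counterpart: conjugation by the complex conjugation $C$ sends a local verification operator $\Omega_0$ of $|\Psi\>$ to $\Omega_0^*$, a local verification operator of $|\Psi^*\>$, and the identity $P_{\Omega_0^*}(\Psi^*,\caS)=P_{\Omega_0}(\Psi,\caS)$ holds provided $\caS$ is invariant under complex conjugation, as is the case for $\caS_r$ and $\ser(E)$. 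Once these two points are settled, the twirling argument goes through verbatim.
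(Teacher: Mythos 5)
Your twirling argument is exactly the intended route: the paper offers no explicit proof (it states that \psref{pro:PassProbUB}-\ref{pro:OptOmegaSym} ``can be verified by straightforward calculations,'' with \pref{pro:OmegaAve} being precisely the two-element version of your group average), and your Haar average combined with \pref{pro:SepProbUT}, linearity of the trace, and left-invariance of the measure is the natural completion of that argument. The two caveats you flag---closedness/attainability for the set of local verification operators (which can alternatively be handled by noting that sampling $U$ from the Haar measure as classical randomness, applying the local unitary, and running the protocol for $\Omega_0$ is itself an LOCC protocol realizing the average) and the antiunitary elements of $\trmU_\rmL(\caH)$ requiring $\caS$ to be invariant under complex conjugation (true for $\caS_\sep$, $\caS_r$, and $\ser(E)$)---are genuine subtleties the paper glosses over, and your treatment of them is sound.
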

Thanks to \pref{pro:OptOmegaSym}, to determine the separation probability of $|\Psi\>$ or to construct an optimal verification operator, it suffices to consider verification operators that share the same symmetry as $|\Psi\>$. This simple observation is very helpful for studying the separation probabilities of bipartite and multipartite quantum states.

\section{\label{sec:CertHDE}Certification of high-dimensional entanglement}

In this section we consider the certification of HDE in general bipartite pure states in the bipartite Hilbert space $\caH_{AB}$ and clarify the basic properties of  separation probabilities.
The following proposition is  proved in \aref{app:SepProbProofs}.
\begin{proposition}\label{pro:SepProbDimInd}
	Suppose $|\Psi\>\in\caH_{AB}$, $\caS=\caS_r$ or $\caS=\ser(E)$ with $r\in [d-1]$, and $0\leq E< \caE_r(\Psi)$.
	Then the separation probabilities $P_\lc (\Psi,\caS)$, $P_\sep (\Psi,\caS)$, $P_\lc^\rmH (\Psi,\caS)$, and $P_\sep^\rmH (\Psi,\caS)$  are independent of the local dimensions as long as the nonzero Schmidt coefficients of $|\Psi\>$ are fixed. 
\end{proposition}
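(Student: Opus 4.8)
The plan is to fix the nonzero Schmidt coefficients of $|\Psi\>$ and prove that none of the four separation probabilities changes when the local dimensions are enlarged; since every admissible pair $(d_A,d_B)$ contains the minimal space of dimension $\SR(\Psi)\times\SR(\Psi)$, invariance under enlargement yields the full statement. Enlarging the local dimensions amounts to choosing a local isometry $V=V_A\otimes V_B$ with $V_A\colon\caH_A\to\caH_A'$ and $V_B\colon\caH_B\to\caH_B'$, under which $|\Psi\>$ is carried to the ``same'' state padded with zeros and the constraint sets obey $V\sigma V^\dag\in\caS'$ for every $\sigma\in\caS$, because a local isometry leaves the Schmidt spectrum, and hence both $\SR$ and $\caE_r$, unchanged. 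I would treat the homogeneous and the general cases separately, since only the former admits a closed formula.

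For the homogeneous separation probabilities $P_\lc^\rmH$ and $P_\sep^\rmH$ the conclusion is immediate. By \pref{pro:SepProbUBQSV} one has the exact identity $P^\rmH(\Psi,\caS)=\nu^\rmH(\Psi)F(\Psi,\caS)+\beta^\rmH(\Psi)$ for each class of operations. Here $F(\Psi,\caS)$ is dimension-independent by \pref{pro:FPsiSk} (it depends only on the nonzero Schmidt coefficients and on $E$), while $\nu^\rmH(\Psi)$ and $\beta^\rmH(\Psi)=1-\nu^\rmH(\Psi)$ are dimension-independent by \pref{pro:SpectralGapDimInd}. Hence $P^\rmH(\Psi,\caS)$ cannot depend on the local dimensions, for both LOCC and separable operations.

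For the general separation probabilities $P_\lc$ and $P_\sep$ no such formula is available, since \pref{pro:SepProbUBQSV} only brackets $P$ between $F$ and $\nu F+\beta$, so I would establish dimension independence through a two-sided isometry argument. Writing $P$ and $P'$ for the values in the small and enlarged spaces, the inequality $P\le P'$ follows by restriction: given a near-optimal verifier $\Omega'$ in the large space, $\Omega:=V^\dag\Omega'V$ is a valid verifier of $|\Psi\>$ in the small space (it passes the target because $V^\dag V=\bbone$, and it is local, respectively separable, because $\bbone-\Omega=V^\dag(\bbone'-\Omega')V$); since $V\sigma V^\dag\in\caS'$ for every $\sigma\in\caS$, one gets $\tr(\Omega\sigma)=\tr(\Omega' V\sigma V^\dag)\le P'$, whence $P\le P'$. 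For the reverse inequality I would extend a near-optimal small-space verifier $\Omega$ by rejection outside the embedded subspace, $\Omega':=V\Omega V^\dag$; this is again a valid local (separable) verifier, and for any bad state $\sigma'\in\caS'$ its pass probability is $\tr(\Omega'\sigma')=\tr(\Omega\, V^\dag\sigma'V)$, which I must bound by $P$.

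The main obstacle is precisely this last bound, especially for $\caS=\ser(E)$. Here $X:=V^\dag\sigma'V$ is subnormalized with $q:=\tr X\le1$; for $\caS=\caS_r$ it is clear that $X/q\in\caS_r$ because the local contraction $V^\dag$ cannot raise the Schmidt rank, so $\tr(\Omega X)=q\,\tr(\Omega(X/q))\le q\,P\le P$. For $\caS=\ser(E)$ the analogous step is delicate, since a single-branch local filtering can \emph{increase} $\caE_r$; I would instead invoke the on-average monotonicity of $\caE_r$ under the local projective measurement $\{P_A\otimes P_B,\ldots\}$ with $P_A=V_AV_A^\dag$ and $P_B=V_BV_B^\dag$ to obtain $q\,\caE_r(X/q)\le\caE_r(\sigma')\le E$, then use convexity of $\caE_r$ together with a separable filler $\rho_0$, forming $\tau'':=X+(1-q)\rho_0$, so that $\caE_r(\tau'')\le q\,\caE_r(X/q)\le E$ and hence $\tau''\in\ser(E)$. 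Since $\tr(\Omega\tau'')\ge\tr(\Omega X)$, this gives $\tr(\Omega'\sigma')\le\tr(\Omega\tau'')\le P$, completing $P'\le P$ and therefore $P=P'$. The separable-operation versions of both extensions preserve separability of $\Omega$ and of $\bbone-\Omega$ (using $\bbone'-V V^\dag=(\bbone_A'-P_A)\otimes\bbone_B'+P_A\otimes(\bbone_B'-P_B)$), so the same argument applies without change.
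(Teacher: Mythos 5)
Your proposal is correct and follows essentially the same route as the paper's proof (Lemmas~\ref{lem:SepProbEmbed} and \ref{lem:EmbedProbId}): reduce to an embedding of Hilbert spaces, transfer verifiers both ways (compression $V^\dag\Omega' V$ for one inequality, zero-padded extension $V\Omega V^\dag$ for the other), handle $\caS_r$ via invariance of the Schmidt rank under local contractions, handle $\ser(E)$ via the weighted monotonicity $q\,\caE_r(X/q)\leq\caE_r(\sigma')$, and settle the homogeneous case by the exact formula of \pref{pro:SepProbUBQSV} combined with \psref{pro:FPsiSk} and \ref{pro:SpectralGapDimInd}. The only (cosmetic) difference is that where the paper invokes monotonicity and concavity of $P_\Omega(\Psi,\ser(E))$ in $E$ (Lemma~\ref{lem:SepProbMonoConcave}), you mix in a separable filler state $\tau''=X+(1-q)\rho_0$ and use convexity of $\caE_r$; both encode the same mixing argument.
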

Thanks to \pref{pro:SepProbDimInd}, to determine the separation probabilities of $|\Psi\>$ we can consider any bipartite pure state that has the same nonzero Schmidt coefficients as $|\Psi\>$, which is very helpful for simplifying the analysis.

\subsection{\label{ssec:CertHDEinMES}HDE in maximally entangled states}

To start with, here we focus on the maximally entangled state $|\Phi\>\in \caH_{AB}$ defined in \eref{eq:MES}, assuming that $d=d_A=d_B$.  Any other maximally entangled state in $\caH_{AB}$ is equivalent to $|\Phi\>$ under local unitary transformations. Note that $|\Phi\>$ is invariant under arbitrary  unitary transformations of the form $U\otimes U^*$ with $U\in \rmU(d)$. Moreover, any operator on $\caH_{AB}$ that shares this symmetry is a linear combination of $|\Phi\>\<\Phi|$ and the identity operator. Therefore, by \pref{pro:OptOmegaSym}, to construct an optimal verification operator, it suffices to consider homogeneous verification operators. Moreover, the verification operator $\Omega_\opt$ defined in \pref{pro:QSVopt} with $|\Psi\>=|\Phi\>$ is optimal for HDE certification among all verification operators based on  separable operations, including LOCC.

Now, suppose $\caS$ is a closed convex subset  of $\caD(\caH_{AB})$ that is invariant under local unitary transformations. 
Then, by \pref{pro:SepProbUBQSV} with $\beta(\Phi)=\beta(\Omega_\opt)=1/(d+1)$  and $\nu(\Phi)=\nu(\Omega_\opt)=d/(d+1)$, the separation probability $P(\Phi,\caS)$ reads
\begin{equation}\label{eq:SepProbMES}
	P(\Phi,\caS)=P_{\Omega_\opt}(\Phi,\caS)=\frac{d}{d+1}F(\Phi, \caS)+\frac{1}{d+1}.
\end{equation}
The following theorem is a simple corollary of \eref{eq:SepProbMES} and \pref{pro:FPsiSk}.
\begin{theorem}\label{thm:SepProbforEMCinMES}
Suppose $r\in [d-1]$ and $0\leq E< (d-r)/d$. Then 
\begin{align}
P(\Phi,\caS_r)&=\frac{r+1}{d+1}, \label{eq:SepProbforSNCinMES}\\
P(\Phi,\caS_{\caE_r}(E))&=\frac{\left[\sqrt{(d-r)E}+\sqrt{r(1-E)}\lsp\right]^2+1 }{d+1}. \label{eq:SepProbforEMCinMES}
\end{align}
\end{theorem}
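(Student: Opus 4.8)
The plan is to treat the theorem exactly as advertised: a direct substitution into \eref{eq:SepProbMES} once the relevant maximum fidelities have been evaluated. The only genuine computation is to feed the Schmidt spectrum of $|\Phi\>$ into \pref{pro:FPsiSk}. Since $|\Phi\>$ is maximally entangled, all its Schmidt coefficients equal $1/d$, so the defining sum in \eref{eq:Er} gives
\begin{equation}
	\caE_r(\Phi)=\sum_{j=r}^{d-1}\frac{1}{d}=\frac{d-r}{d},
\end{equation}
which fixes the value $E'=\caE_r(\Phi)=(d-r)/d$ appearing in \eref{eq:FPsiSkE}. Note that this is precisely the upper endpoint of the stated range $0\leq E<(d-r)/d$, so throughout the theorem we are in the nontrivial branch $0\leq E<E'$ of \pref{pro:FPsiSk}.

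For the first identity I would apply \eref{eq:FPsiSk} to get $F(\Phi,\caS_r)=1-\caE_r(\Phi)=r/d$ and then substitute into \eref{eq:SepProbMES}:
\begin{equation}
	P(\Phi,\caS_r)=\frac{d}{d+1}\cdot\frac{r}{d}+\frac{1}{d+1}=\frac{r+1}{d+1}.
\end{equation}
For the second identity I would use the first branch of \eref{eq:FPsiSkE} with $E'=(d-r)/d$ and $1-E'=r/d$, pulling the common factor $1/\sqrt{d}$ out of each square root, so that
\begin{equation}
	F\bigl(\Phi,\ser(E)\bigr)=\frac{1}{d}\left[\sqrt{(d-r)E}+\sqrt{r(1-E)}\right]^2 .
\end{equation}
Substituting this into \eref{eq:SepProbMES}, the prefactor $d/(d+1)$ cancels the $1/d$, and adding the constant term $1/(d+1)$ yields exactly \eref{eq:SepProbforEMCinMES}.

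There is no real analytic obstacle here; the substance has already been packaged into \eref{eq:SepProbMES} (which encodes the symmetry and optimality argument for $\Omega_\opt$ via \pref{pro:OptOmegaSym} and \pref{pro:QSVopt}) and into \pref{pro:FPsiSk}. The only points requiring care are bookkeeping: one must confirm that \eref{eq:SepProbMES} is applicable to the two sets in question, i.e. that both $\caS_r$ and $\ser(E)$ are closed convex subsets of $\caD(\caH_{AB})$ invariant under local unitary transformations. This follows because Schmidt rank and the entanglement measure $\caE_r$ are both local-unitary invariants, so $\caS_r$ is a convex hull of a local-unitarily invariant set and $\ser(E)$ is a sublevel set of the convex (convex-roof) function $\caE_r$. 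With these hypotheses verified and the domain restriction $E<\caE_r(\Phi)$ matched to the active branch of \eref{eq:FPsiSkE}, the two displayed formulas follow immediately.
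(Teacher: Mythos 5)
Your proposal is correct and follows essentially the same route as the paper, which likewise obtains the theorem as a direct corollary of \eref{eq:SepProbMES} and \pref{pro:FPsiSk} by substituting $\caE_r(\Phi)=(d-r)/d$ into the fidelity formulas. Your extra check that $\caS_r$ and $\ser(E)$ are closed, convex, and local-unitarily invariant (so that \eref{eq:SepProbMES} applies) is a sound piece of bookkeeping that the paper leaves implicit.
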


\begin{figure}[t]
    \centering
    \includegraphics[scale=0.316]{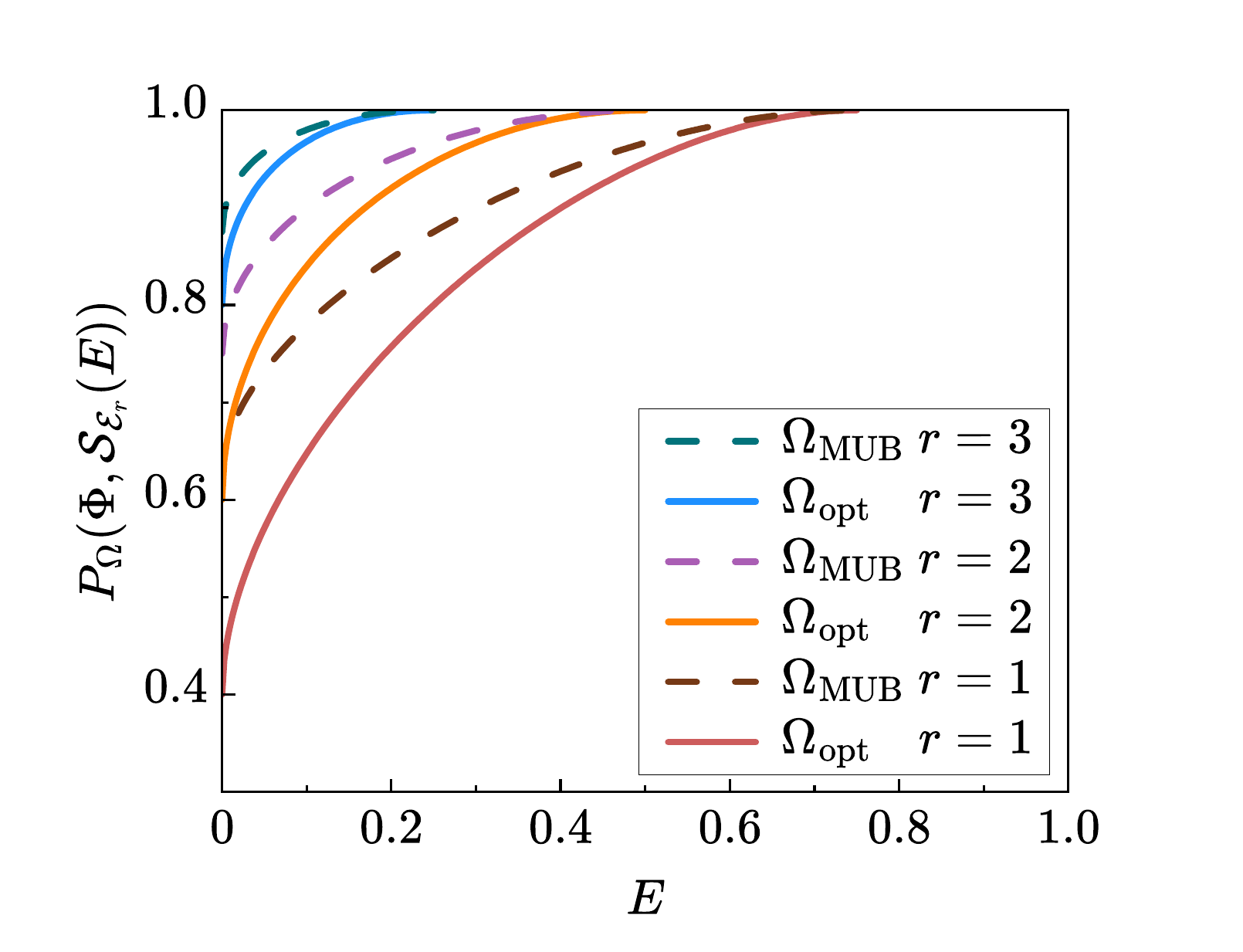}
        \caption{The separation probabilities  $P_\Omega(\Phi,\caS_{\caE_r}(E))$ achieved by the MUB strategy $\Omega_\MUB$ and optimal strategy $\Omega_\opt$ as functions of $E$ for $d=4$ and $r=1,2,3$.        	
}
    \label{fig:Entmon}
\end{figure}

In conjunction with \eref{eq:NCertHDE}, it is straightforward to determine the number of tests required to certify the lower bound $r+1$ for the Schmidt number with significance level $\delta$:
\begin{equation}\label{eq:NSNC}
    N=\left\lceil\left(\ln{\frac{d+1}{r+1}}\right)^{-1}\ln\left(\delta^{-1}\right)\right\rceil.
\end{equation}
This number increases monotonically with $r$, but decreases monotonically with $d$. When $d+1\geq (r+1)/\delta$, we have $N=1$, hence HDE can be certified using a single test when $d$ is sufficiently large compared with $r$. This conclusion is robust to noise thanks to the formula for  $P(\Phi,\caS_{\caE_r}(E))$ in \eref{eq:SepProbforEMCinMES}, as illustrated in \fref{fig:Entmon}. Note that the separation probability $P(\Phi,\caS_{\caE_r}(E))$ is nondecreasing in $r$ and $E$ and concave in $E$. In the special case $r=1$ and $E=0$, which means $\ser(E)=\caS_r=\caS_\sep$,  we obtain the separation probability for certifying entanglement and the corresponding number of required tests:
\begin{equation}\label{eq:NEC}
P(\Phi)=\frac{2}{d+1},\quad 	N=\left\lceil\left(\ln{\frac{d+1}{2}}\right)^{-1}\ln\left(\delta^{-1}\right)\right\rceil.
\end{equation}

To realize the optimal strategy $\Omega_\opt$ discussed above, we need to construct at least $d+1$ distinct local projective tests, which might be quite challenging. For comparison, next, we turn to the  simpler strategy $\Omega_\MUB$ originally introduced for QSV in \rscite{ZhuHOVMES19,LiHZ19} as mentioned in \sref{ssec:VBPS}, which is based on only two distinct local projective tests. Previously, similar strategies have been explored for certifying HDE \cite{Huang16,Erker17,Bavaresco18,Valencia20}, but the distinction between $\Omega_\MUB$ and $\Omega_\opt$ is not clear. The following proposition clarifies the separation probabilities achieved by $\Omega_\MUB$; it may be regarded as a special case of \pref{pro:Omega2} presented in \sref{ssec:CertHDEinGBPS}. 

\begin{proposition}\label{pro:Omega2MUB}
	Suppose $r\in[d-1]$ and $0\leq E< (d-r)/d$. Then
	\begin{align}
	P_{\Omega_\MUB}(\Phi,\caS_r)&=\frac{r+d}{2d}, \\
	P_{\Omega_\MUB}(\Phi,\caS_{\caE_r}(E))&=\frac{\left[\sqrt{(d-r)E}+\sqrt{r(1-E)}\lsp\right]^2+d}{2d}.
	\end{align}
\end{proposition}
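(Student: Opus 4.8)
The plan is to pin down each separation probability by squeezing it between a matching upper and lower bound, both written in terms of the maximum fidelity $F(\Phi,\caS)$. The essential point to keep in mind is that, unlike $\Omega_\opt$, the MUB operator $\Omega_\MUB$ is \emph{not} homogeneous for $|\Phi\>$: its nontrivial eigenvalues are not all equal to $\beta(\Omega_\MUB)=1/2$. Consequently the equality case of \pref{pro:PassProbUB} does not apply automatically, and the lower bound must be supplied by an explicitly constructed optimal state. This is also why the conclusion differs in general from the homogeneous formula.

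For the upper bound I would simply invoke \pref{pro:PassProbUB} together with the spectral data $\nu(\Omega_\MUB)=\beta(\Omega_\MUB)=1/2$ from \eref{eq:Gap2}, which gives
\begin{equation}
P_{\Omega_\MUB}(\Phi,\caS)\leq \frac{1}{2}F(\Phi,\caS)+\frac{1}{2}
\end{equation}
for $\caS=\caS_r$ and for $\caS=\ser(E)$. The needed fidelities follow from \pref{pro:FPsiSk} evaluated at the flat Schmidt spectrum $s_j=1/d$, for which $\caE_r(\Phi)=(d-r)/d$: one obtains $F(\Phi,\caS_r)=r/d$ and $F(\Phi,\ser(E))=\frac{1}{d}[\sqrt{(d-r)E}+\sqrt{r(1-E)}\lsp]^2$. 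Substituting these into the displayed bound reproduces exactly the two claimed right-hand sides, so they are correct upper bounds.

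The substance of the proof is the matching lower bound, for which I would exhibit a single pure state $|\Upsilon\>=\sum_{j=0}^{d-1}\sqrt{t_j}\lsp|jj\>$ that is \emph{diagonal in the Schmidt basis of $|\Phi\>$}, choosing $\{t_j\}$ so that $\Upsilon\in\caS$ while attaining the optimal fidelity: $t_j=1/r$ for $j<r$ (and $0$ otherwise) when $\caS=\caS_r$, and $t_j=(1-E)/r$ for $j<r$, $t_j=E/(d-r)$ for $j\geq r$, when $\caS=\ser(E)$ (the ordering $t_{j<r}\geq t_{j\geq r}$ is guaranteed by $E<(d-r)/d$, and by \pref{pro:FidelityUB} this state saturates the fidelity). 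The merit of the diagonal form is that the two test projectors in \eref{eq:OmegaMUB}, with $M=\bbone$ since $s_j=1/d$, evaluate in closed form. The computational-basis term gives $\<\Upsilon|\bigl(\sum_j|jj\>\<jj|\bigr)|\Upsilon\>=\sum_j t_j=1$, while the Fourier term collapses via the unbiasedness identity $\<kk|u_j u_j^*\>=|\<k|u_j\>|^2=1/d$, which makes $\<\Upsilon|u_j u_j^*\>=\sum_k\sqrt{t_k}/d$ independent of $j$; summing over $j$ then yields $\<\Upsilon|\bigl(\sum_j|u_j u_j^*\>\<u_j u_j^*|\bigr)|\Upsilon\>=\frac{1}{d}\bigl(\sum_k\sqrt{t_k}\bigr)^2=|\<\Phi|\Upsilon\>|^2=F(\Phi,\caS)$. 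Hence $\<\Upsilon|\Omega_\MUB|\Upsilon\>=\frac{1}{2}\bigl(1+F(\Phi,\caS)\bigr)$, which saturates the upper bound and completes the argument.

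I expect the only genuine obstacle to be recognizing that homogeneity fails, so that a bespoke optimal state is required rather than a blind appeal to the equality case of \pref{pro:PassProbUB}. Once the diagonal ansatz is guessed, the Fourier-term evaluation is immediate from unbiasedness, and the fidelity-optimal spectra are already handed to us by \psref{pro:FidelityUB} and \ref{pro:FPsiSk}.
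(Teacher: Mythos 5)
Your proof is correct and takes essentially the same route as the paper: the paper establishes the general statement (\pref{pro:Omega2}) by combining the upper bound of \pref{pro:PassProbUB} with $\nu(\Omega_\MUB)=\beta(\Omega_\MUB)=1/2$ and an explicit Schmidt-diagonal state $|\tPsi\>$ that saturates it, then specializes to $|\Phi\>$ via $\caS_r=\ser(0)$, and your state $|\Upsilon\>$ is exactly that $|\tPsi\>$ evaluated at the flat spectrum $s_j=1/d$. Your remark that $\Omega_\MUB$ fails to be homogeneous, so the saturating state must be exhibited rather than inferred from the equality case of \pref{pro:PassProbUB}, is precisely the point implicit in the paper's argument.
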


In conjunction with \eref{eq:NCertHDE}, it is straightforward to determine the number of tests required by the strategy $\Omega_\MUB$ to certify the Schmidt number of the maximally entangled state $|\Phi\>$. \Fref{fig:2} illustrates the performance of $\Omega_\MUB$ in comparison with $\Omega_\opt$.  Although $\Omega_\MUB$ is not as efficient as $\Omega_\opt$, still it enables us to certify HDE using a constant sample cost.

\begin{figure}[t]
	\centering
	\includegraphics[scale=0.317]{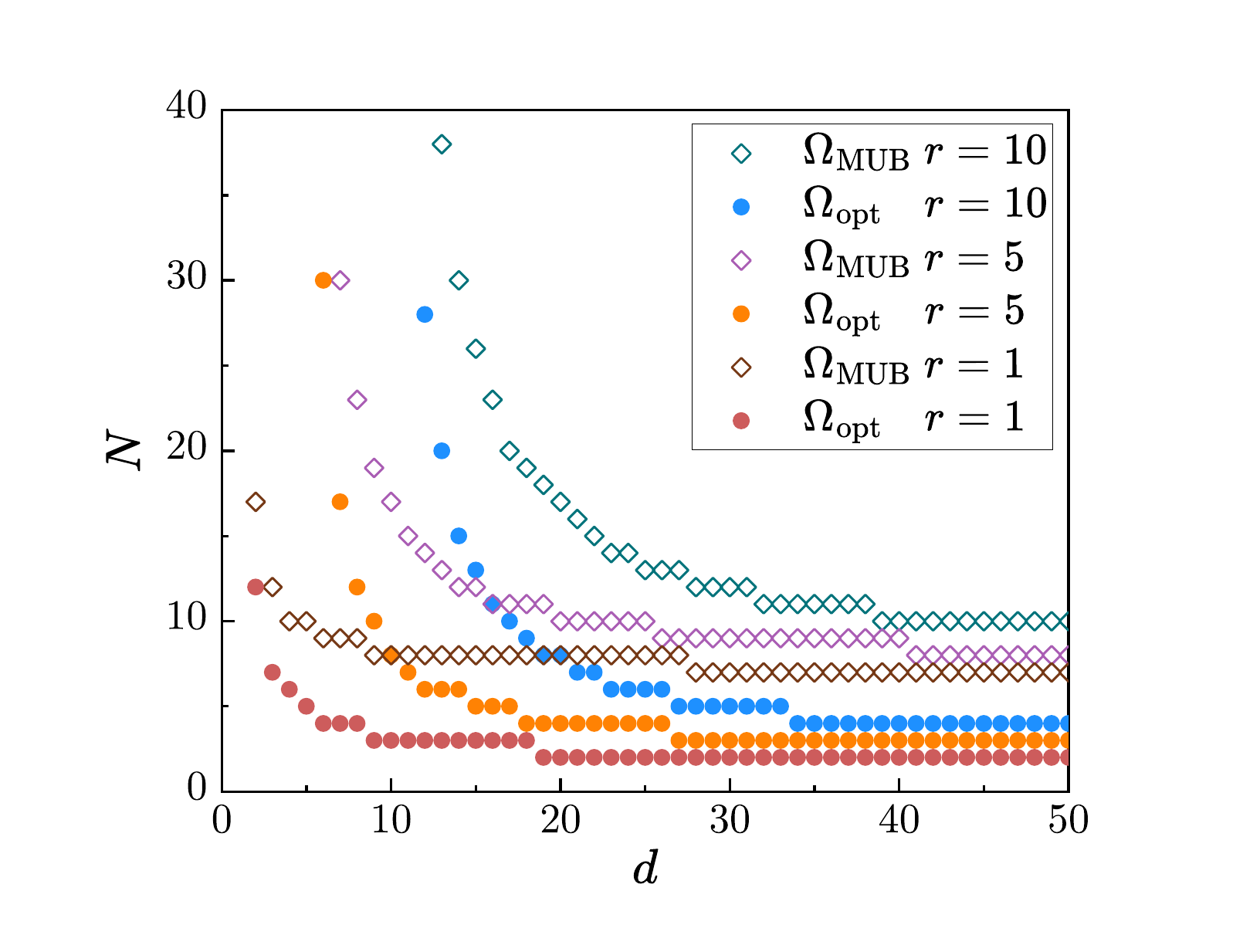}
	\caption{Certification of the Schmidt number of the $d\times d$ maximally entangled state based on the two strategies $\Omega_\MUB$ and $\Omega_\opt$. The figure shows the number of tests required by each strategy to certify the lower bound $r+1$ for the Schmidt number with significance level $\delta=0.01$. }
	\label{fig:2}
\end{figure}

\subsection{\label{ssec:CertHDEinGBPS}HDE in general bipartite pure states}

Now, we turn to a general bipartite pure state $|\Psi\>$ in $\caH_{AB}$, without assuming $d_A=d_B$. Since $|\Psi\>$ lacks the high degree of symmetry enjoyed by the maximally entangled state $|\Phi\>$, it is in general extremely difficult to derive an exact formula for the separation probabilities $P_\lc(\Psi,\caS)$ and $P_\sep(\Psi,\caS)$ with $\caS=\caS_r$ or $\caS=\ser(E)$, assuming that $r\in [d-1]$ and $0\leq E< \caE_r(\Psi)$. Nevertheless, we can clarify their key properties and derive nearly tight upper and lower bounds. 
\Thsref{thm:SepProb&Major}-\ref{thm:ConcSepProbforSNCinGBPS} below are proved in \aref{app:SepProbGBPSproofs}.

\begin{theorem}\label{thm:SepProb&Major}
	Suppose $|\Psi\>\in \caH_{AB}$, $r\in [d-1]$, $0\leq E< (d-r)/d$, and  $\caS=\caS_r$ or $\caS=\ser(E)$.
	Then the separation probabilities $P_\lc(\Psi,\caS)$ and $P_\sep(\Psi,\caS)$ are Schur convex in $|\Psi\>$ and do not decrease if $|\Psi\>$ is subjected to LOCC. In addition,  
	\begin{align}
	\!\!P_\lc(\Phi,\caS)=P_\sep(\Phi,\caS)\leq P_\sep(\Psi,\caS)\leq P_\lc(\Psi,\caS), \label{eq:SepProbPsiPhi}\\
	\!\!P_\lc(\Phi,\caS)=P_\sep(\Phi,\caS)\leq P_\sep^\rmH(\Psi,\caS)\leq P_\lc^\rmH(\Psi,\caS). \label{eq:SepProbPsiPhiH}
	\end{align}		
\end{theorem}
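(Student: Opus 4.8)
The plan is to deduce the whole theorem from one structural observation: running a deterministic LOCC conversion $|\Psi\>\to|\Upsilon\>$ and then applying an optimal test for $|\Upsilon\>$ is itself an admissible test for $|\Psi\>$, and such a conversion can never improve the separation probability. Everything else is then bookkeeping with inclusions of operator classes.

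First I would prove that $P_\lc(\Psi,\caS)$ and $P_\sep(\Psi,\caS)$ are Schur convex, which by \pref{pro:Majorization} is exactly the asserted non-decrease under LOCC. Assume $\bfs_\Psi\prec\bfs_\Upsilon$; by \pref{pro:Majorization} there is a deterministic LOCC channel $\Lambda$ with $\Lambda(\Psi)=\Upsilon$. Let $\Omega_\Upsilon=\sum_l p_l\Pi_l$ be an optimal verification operator for $|\Upsilon\>$ against $\caS$ — local when treating $P_\lc$, separable when treating $P_\sep$. The composite ``convert, then test'' protocol for $|\Psi\>$ has verification operator given by the Heisenberg dual $\Omega_\Psi:=\Lambda^\dagger(\Omega_\Upsilon)=\sum_l p_l\Lambda^\dagger(\Pi_l)$.

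Next I would check that $\Omega_\Psi$ is a legitimate verification operator of $|\Psi\>$ in the correct class. Trace preservation of $\Lambda$ makes $\Lambda^\dagger$ unital and positive, so $0\le\Omega_\Psi\le\bbone$; moreover $\<\Psi|\Lambda^\dagger(\Pi_l)|\Psi\>=\tr\!\big(\Pi_l\Lambda(\Psi)\big)=\<\Upsilon|\Pi_l|\Upsilon\>=1$, so $|\Psi\>$ passes every composite test with certainty. Realizability is transparent from the protocol reading: composing a deterministic LOCC conversion with an LOCC (resp.\ separable) test yields an LOCC (resp.\ separable) test; equivalently, $\Lambda^\dagger$ sends separable effects to separable effects, so both $\Omega_\Psi$ and $\bbone-\Omega_\Psi=\Lambda^\dagger(\bbone-\Omega_\Upsilon)$ remain separable when $\Omega_\Upsilon$ is. The decisive estimate is then
\[
P_{\Omega_\Psi}(\Psi,\caS)=\max_{\sigma\in\caS}\tr\!\big(\Omega_\Upsilon\Lambda(\sigma)\big)\le P_{\Omega_\Upsilon}(\Upsilon,\caS),
\]
which holds because $\Lambda$ maps $\caS$ into itself: LOCC cannot increase the Schmidt number or $\caE_r$ (\pref{pro:Majorization}), so $\Lambda(\caS_r)\subseteq\caS_r$ and $\Lambda(\ser(E))\subseteq\ser(E)$. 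Taking $\Omega_\Upsilon$ optimal gives $P(\Psi,\caS)\le P_{\Omega_\Psi}(\Psi,\caS)\le P(\Upsilon,\caS)$, i.e.\ Schur convexity, for both $P_\lc$ and $P_\sep$.

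Finally I would assemble the two chains. Since the flat vector $\bfs_\Phi=(1/d,\dots,1/d)$ is majorized by every Schmidt vector, Schur convexity immediately yields $P_\sep(\Phi,\caS)\le P_\sep(\Psi,\caS)$. Restricting the minimization to a smaller class can only raise the separation probability, giving $P_\sep(\Psi,\caS)\le P_\lc(\Psi,\caS)$, $P_\sep(\Psi,\caS)\le P_\sep^\rmH(\Psi,\caS)$, and $P_\sep^\rmH(\Psi,\caS)\le P_\lc^\rmH(\Psi,\caS)$. The leading equality $P_\lc(\Phi,\caS)=P_\sep(\Phi,\caS)$ is precisely \eref{eq:SepProbMES} together with \pref{pro:QSVopt}, since the optimal operator $\Omega_\opt$ for $|\Phi\>$ is simultaneously local and separable. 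Chaining these facts reproduces \eqsref{eq:SepProbPsiPhi}{eq:SepProbPsiPhiH}. I expect the main obstacle to be the middle step: confirming that the dual/composite construction remains a bona fide verification operator \emph{within} the prescribed class — that $\Lambda^\dagger$ preserves separability of the effects and fixes $|\Psi\>$ — while verifying that $\Lambda$ genuinely carries the witness set $\caS$ into itself, which ultimately rests on the LOCC-monotonicity of $\caE_r$.
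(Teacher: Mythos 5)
Your proposal is correct and follows essentially the same route as the paper's own proof: pulling back an optimal test for $|\Upsilon\>$ through the Heisenberg dual $\Lambda^\dagger$ of the Nielsen LOCC conversion channel, verifying that this yields a verification operator of $|\Psi\>$ in the same class (local or separable), bounding $P_{\Omega_\Psi}(\Psi,\caS)\leq P_{\Omega_\Upsilon}(\Upsilon,\caS)$ via $\Lambda(\caS)\subseteq\caS$, and then assembling the chains from the majorization of $\bfs_\Phi$ by every Schmidt vector, the class inclusions, and the fact that $\Omega_\opt$ from \pref{pro:QSVopt} is simultaneously local and optimal among separable strategies. Your explicit justifications of the two points the paper leaves implicit (that $\Lambda^\dagger$ preserves separability of effects, and that $\Lambda(\caS_r)\subseteq\caS_r$, $\Lambda(\ser(E))\subseteq\ser(E)$ via LOCC-monotonicity of the Schmidt number and $\caE_r$) are accurate refinements of the same argument, not a different one.
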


According to \thref{thm:SepProb&Major}, if $|\Psi\>$ is subjected to LOCC, then it is more difficult to distinguish the resulting state from  states with no or limited entanglement as expected. However, the situation is quite different if we restrict the analysis to homogeneous verification strategies. 
Notably, the separation probability $P_\sep^\rmH(\Psi,\caS_r)$ is not necessarily Schur convex. For example, consider the following two two-qutrit states:
\begin{equation}
\begin{aligned}
|\Psi_1\>&=\sqrt{\frac{2}{5}}\lsp|00\>+\sqrt{\frac{2}{5}}\lsp|11\>+\sqrt{\frac{1}{5}}\lsp|22\>,\\
|\Psi_2\>&=\sqrt{\frac{3}{5}}\lsp|00\>+\sqrt{\frac{1}{5}}\lsp|11\>+\sqrt{\frac{1}{5}}\lsp|22\>, 
\end{aligned}
\end{equation} 
where $|\Psi_1\>$ is majorized by $|\Psi_2\>$, which means  $|\Psi_1\>$ can be transformed into  $|\Psi_2\>$ under LOCC by \pref{pro:Majorization}. On the other hand, when $r=2$, from \thref{thm:SepProbGBPSLUB} below, we can deduce that
\begin{align} P_\sep^\rmH(\Psi_2,\caS_r)=\frac{4+\sqrt{3}}{5+\sqrt{3}}<P_\sep^\rmH(\Psi_1,\caS_r)=\frac{6}{7}. 
\end{align}
Therefore, $P_\sep^\rmH(\Psi,\caS_r)$ is not Schur convex.

\begin{figure*}[t]
	\centering
	\includegraphics[scale=0.41]{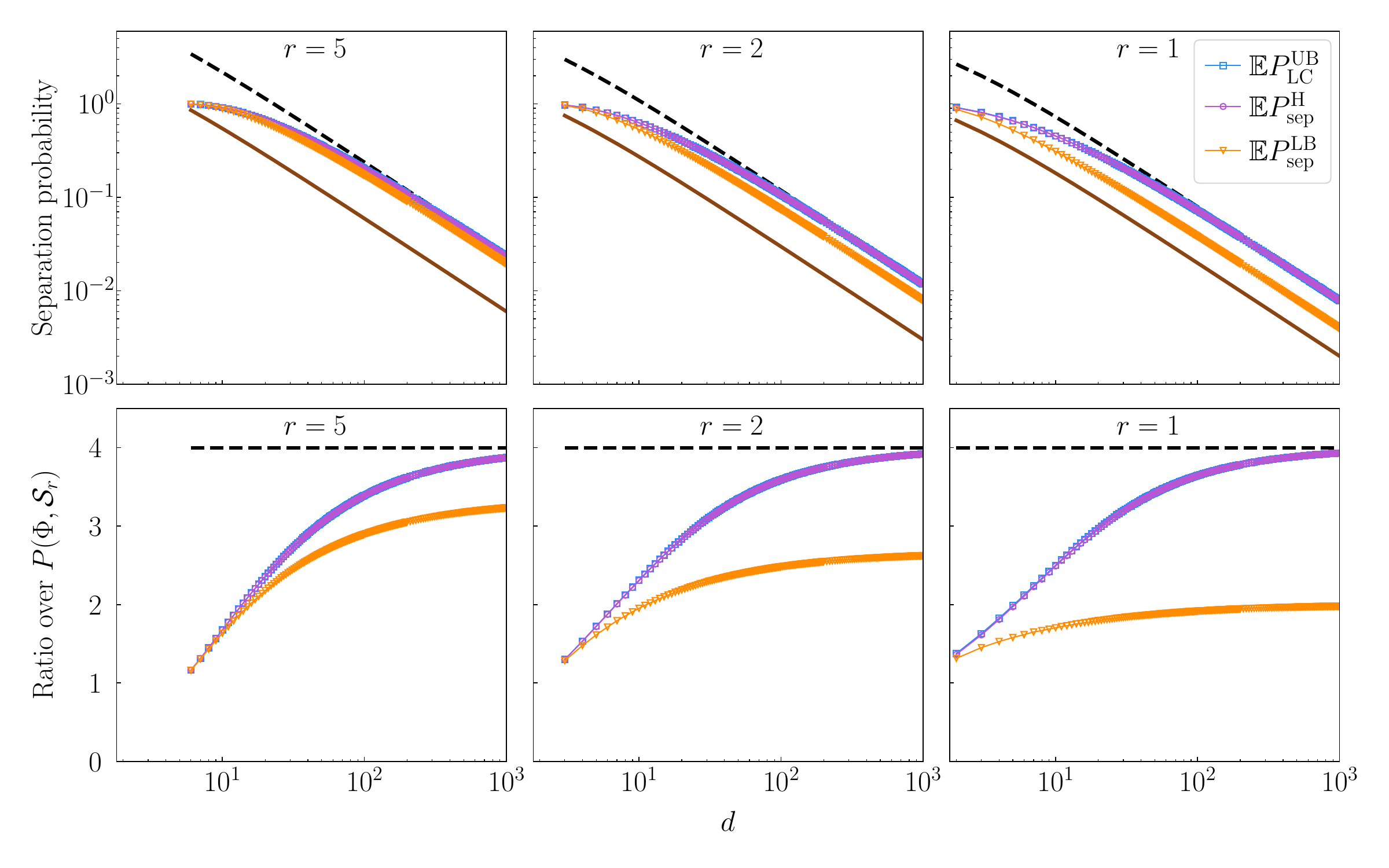}
	\caption{The mean separation probabilities (upper plots)
and their ratios over $P(\Phi,\caS_r)=P_\sep(\Phi,\caS_r)=(r+1)/(d+1)$ (lower plots)		
as functions of $d=d_A=d_B$ for $r=1,2,5$. Here $\bbE P_\lc^\ub(\Psi,\caS_r)$ is an upper bound for $\bbE P_\lc^\rmH(\Psi,\caS_r)$ and $\bbE P_\sep^\rmH(\Psi,\caS_r)$, while
 $\bbE P_\sep^\lb(\Psi,\caS_r)$ is a lower bound for $\bbE P_\sep(\Psi,\caS_r)$ and $\bbE P_\sep^\rmH(\Psi,\caS_r)$. Note that $\bbE P_\lc^\ub(\Psi,\caS_r)$ and $\bbE P_\sep^\rmH(\Psi,\caS_r)$ almost coincide.
  The brown solid line and black dashed line in each plot denote the lower bound $P(\Phi,\caS_r)$ and  upper bound $4P(\Phi,\caS_r)$, respectively, as presented in \thref{thm:SepProbGBPSmeanLUB}.
For each local dimension $d$,  $10000$ Haar-random pure states are sampled. }
	\label{fig:ConGBPS}
\end{figure*}

\begin{figure*}[t]
	\centering
	\includegraphics[scale=0.41]{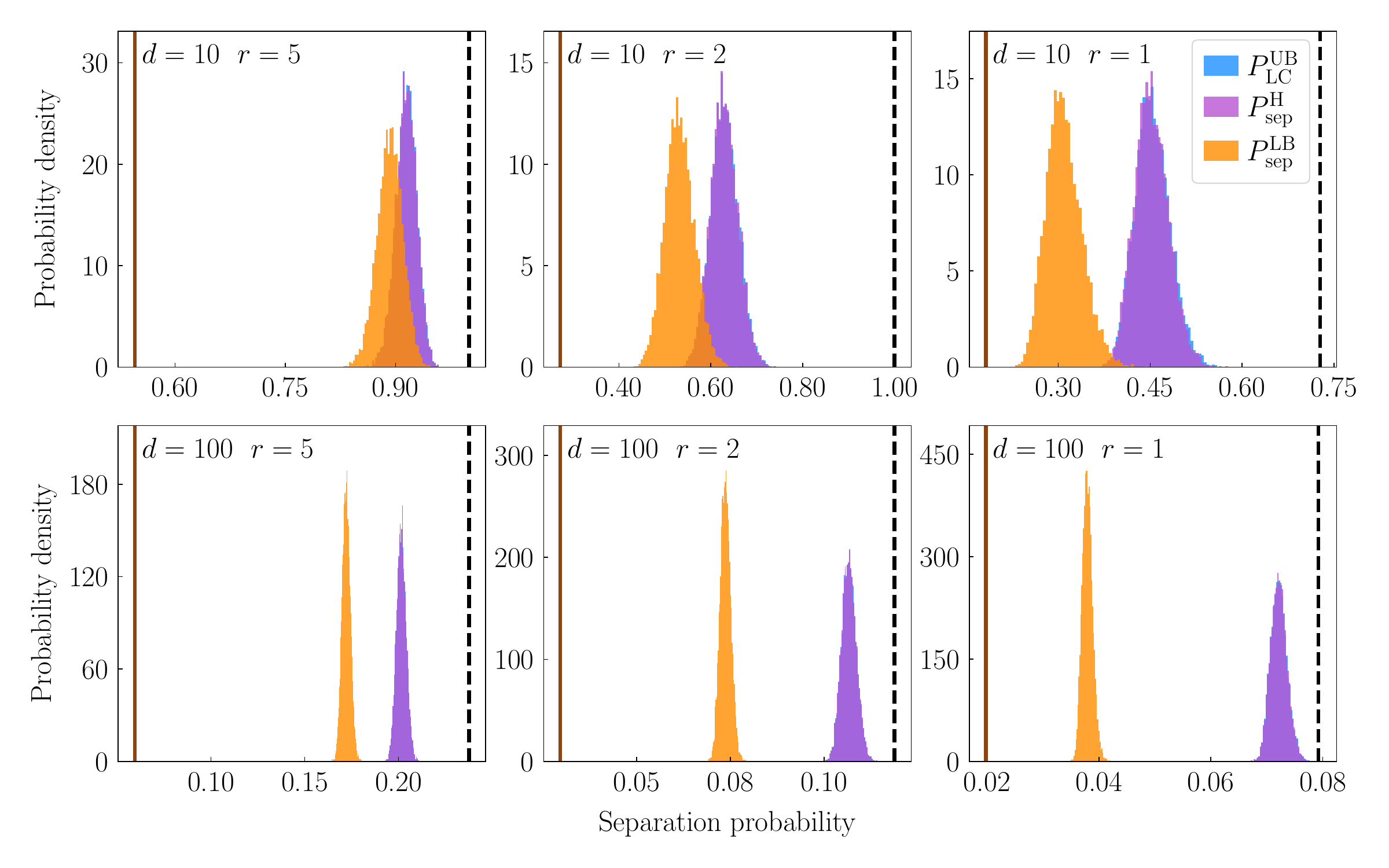}
	\caption{Distributions of the separation probabilities and relevant bounds for Haar-random pure states with $d=d_A=d_B=10,100$ and $r=1,2,5$. Note that the distributions of  $P_\lc^\ub(\Psi,\caS_r)$ and $P_\sep^\rmH(\Psi,\caS_r)$ are almost indistinguishable.
 The brown solid line and black dashed  line in each plot denote the values $P(\Phi,\caS_r)$ and $4P(\Phi,\caS_r)$, respectively (cf. \fref{fig:ConGBPS}). For each local dimension $d$,  $10000$ Haar-random pure states are sampled.}
	\label{fig:ProbGBPS}
\end{figure*}

Next, suppose  $|\Psi\>$ has Schmidt spectrum $\{s_j\}_{j=0}^{d-1}$ with $s_0\geq s_1 \geq \cdots \geq s_{d-1} \geq 0$.
Here, we provide a number of informative upper and lower bounds for four types of separation probabilities and show that HDE in a general bipartite pure state can be certified efficiently. Define
\begin{equation}\label{eq:PsepLBUB}
\begin{aligned}
P_\sep^\lb(\Psi,\caS_r)&=1-\caE_r(\Psi),\\
P_\lc^\ub(\Psi,\caS_r)&=1-\frac{2\caE_r(\Psi)}{2+s_0+ s_1}, \\
P_\sep^\lb(\Psi,\ser(E))&=f_r(\Psi,E), \\
P_\lc^\ub(\Psi,\ser(E))&=\frac{2f_r(\Psi,E)+s_0+s_1}{2+s_0+ s_1}, 
\end{aligned}
\end{equation}
where $\caE_r(\Psi)=\sum_{j=r}^{d-1}s_j$ and $f_r(\Psi,E)=F(\Psi,\ser(E))$ as determined in \eref{eq:FPsiSkE} of \pref{pro:FPsiSk}.
The following theorem is a direct consequence of \psref{pro:FPsiSk}, \ref{pro:betaOmegaLB}, \ref{pro:PassProbUB}, and~\ref{pro:SepProbUBQSV} as shown in \aref{app:SepProbGBPSproofs}.

\begin{widetext}
\begin{theorem}\label{thm:SepProbGBPSLUB}
	Suppose $|\Psi\>\in \caH_{AB}$ has Schmidt spectrum $\{s_j\}_{j=0}^{d-1}$, $r\in [d-1]$, and $0\leq E< \caE_r(\Psi)$. Then   
	\begin{gather}
		P_\sep^\lb(\Psi,\caS_r)\leq P_\sep(\Psi,\caS_r)\leq P_\lc(\Psi,\caS_r) \leq P_\lc^\ub(\Psi,\caS_r)\leq \frac{2}{1+s_0}P_\sep^\lb(\Psi,\caS_r), \label{eq:SepProbGBPSLUB}\\
		1-\frac{\caE_r(\Psi)}{1+\sqrt{s_0 s_1}}=P_\sep^\rmH(\Psi,\caS_r)\leq P_\lc^\rmH(\Psi,\caS_r)\leq P_\lc^\ub(\Psi,\caS_r)\leq \frac{3}{2+s_0}P_\sep^\rmH(\Psi,\caS_r), \label{eq:SepProbGBPSLUBH}\\
		P_\sep^\lb(\Psi,\ser(E))\leq P_\sep(\Psi,\ser(E))\leq P_\lc(\Psi,\ser(E)) \leq P_\lc^\ub(\Psi,\ser(E))\leq \frac{2}{1+s_0}P_\sep^\lb(\Psi,\ser(E)), \label{eq:SepProbErGBPSLUB}\\
		\frac{f_r(\Psi,E)+\sqrt{s_0 s_1}}{1+\sqrt{s_0 s_1}}=P_\sep^\rmH(\Psi,\ser(E))\leq P_\lc^\rmH(\Psi,\ser(E))\leq P_\lc^\ub(\Psi,\ser(E))\leq \frac{3}{2+s_0}P_\sep^\rmH(\Psi,\ser(E)).  \label{eq:SepProbErGBPSLUBH}
	\end{gather}
\end{theorem}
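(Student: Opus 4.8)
The plan is to read off the four chains from the ingredients \pref{pro:FPsiSk} (maximum fidelities), \psref{pro:OmegaLCsep} and \ref{pro:betaOmegaLB} (spectral-gap data), and \psref{pro:PassProbUB} and \ref{pro:SepProbUBQSV} (passing-probability identities), reserving the genuine work for the outermost ``ratio'' bound in each chain. In every chain three of the steps are structural. The leftmost inequality is the universal lower bound $F(\Psi,\caS)\leq P_\sep(\Psi,\caS)$ (the separable analogue of \pref{pro:SepProbUBQSV}), which combined with \pref{pro:FPsiSk} identifies the left endpoints $P_\sep^\lb$ with $F(\Psi,\caS_r)=1-\caE_r(\Psi)$ and $F(\Psi,\ser(E))=f_r$. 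The step $P_\sep\leq P_\lc$ and its homogeneous analogue hold because LOCC strategies form a subset of the separable ones, so the minimum defining $P_\lc$ is taken over a smaller set. The step $P_\lc\leq P_\lc^\ub$ (likewise $P_\lc^\rmH\leq P_\lc^\ub$) comes from evaluating the explicit local homogeneous operator $\Omega_\lc^\rmH$ of \pref{pro:OmegaLCsep}: being homogeneous, \pref{pro:PassProbUB} makes $P_{\Omega_\lc^\rmH}(\Psi,\caS)=\nu(\Omega_\lc^\rmH)F(\Psi,\caS)+\beta(\Omega_\lc^\rmH)$ an equality, and inserting $\nu(\Omega_\lc^\rmH)=2/(2+s_0+s_1)$ and $\beta(\Omega_\lc^\rmH)=(s_0+s_1)/(2+s_0+s_1)$ reproduces $P_\lc^\ub$ exactly. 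Finally, the two ``$=$'' statements are the homogeneous identity of \pref{pro:SepProbUBQSV} applied to separable operations, with $\nu_\sep^\rmH=1/(1+\sqrt{s_0s_1})$ and $\beta_\sep^\rmH=\sqrt{s_0s_1}/(1+\sqrt{s_0s_1})$ taken from \pref{pro:betaOmegaLB}.

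For the non-homogeneous $\caS_r$ chain \eref{eq:SepProbGBPSLUB}, clearing denominators turns $P_\lc^\ub\leq\frac{2}{1+s_0}P_\sep^\lb$ into $(2+s_0+s_1)(1-s_0)\geq 2\caE_r(\Psi)(1+s_1)$; because $r\geq 1$ forces $\caE_r(\Psi)\leq 1-s_0$, it suffices that $2+s_0+s_1\geq 2(1+s_1)$, i.e. $s_0\geq s_1$, which is the Schmidt ordering. The $\ser(E)$ counterpart \eref{eq:SepProbErGBPSLUB} reduces similarly to $2f_r(1+s_1)\geq(1+s_0)(s_0+s_1)$; since $F(\Psi,\ser(E))$ is nondecreasing in $E$ one has $f_r\geq 1-\caE_r(\Psi)\geq s_0$, and substituting $f_r\geq s_0$ again collapses the claim to $s_0\geq s_1$.

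The homogeneous $\caS_r$ bound \eref{eq:SepProbGBPSLUBH} is where I expect the main obstacle. Writing both sides of $P_\lc^\ub\leq\frac{3}{2+s_0}P_\sep^\rmH$ as $1$ minus a positive multiple of $\caE_r(\Psi)$, the difference is an affine function of $\caE_r(\Psi)$, so it suffices to check nonnegativity at the endpoints of the feasible interval $\caE_r(\Psi)\in[0,1-s_0]$. The endpoint $\caE_r(\Psi)=0$ gives the harmless $1\leq 3/(2+s_0)$, while $\caE_r(\Psi)=1-s_0$ yields a genuine polynomial inequality in $s_0,s_1$ whose sharpness is dictated by the AM--GM gap $s_0+s_1-2\sqrt{s_0s_1}\geq 0$ between $\nu(\Omega_\lc^\rmH)$ and $\nu_\sep^\rmH$; indeed the bound is attained in the limit $d=2$, $s_0\to 1$, so no slack may be discarded and the inequality must be verified carefully rather than estimated.

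The last chain \eref{eq:SepProbErGBPSLUBH} I would deduce from \eref{eq:SepProbGBPSLUBH} without repeating that algebra. Both $P_\lc^\ub$ and $P_\sep^\rmH$ are affine in $f_r$, so $\frac{3}{2+s_0}P_\sep^\rmH-P_\lc^\ub$ is affine in $f_r$ and attains its minimum over the admissible range $f_r\in[1-\caE_r(\Psi),1]$ at an endpoint; the endpoint $f_r=1-\caE_r(\Psi)$ is exactly \eref{eq:SepProbGBPSLUBH}, whereas $f_r=1$ reduces to $1\leq 3/(2+s_0)$. This endpoint reduction, together with the analogous reduction already used for \eref{eq:SepProbErGBPSLUB}, lets the two $E$-dependent chains inherit their outermost bounds from the $\caS_r$ chains, so the only nontrivial computation is the single tight inequality behind \eref{eq:SepProbGBPSLUBH}.
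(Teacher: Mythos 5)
Your structural skeleton is correct and matches the paper's own: the left endpoints follow from $F(\Psi,\caS)\leq P_\sep(\Psi,\caS)$ together with \pref{pro:FPsiSk}; the inequalities $P_\sep\leq P_\lc$ and $P_\sep^\rmH\leq P_\lc^\rmH$ hold by definition; $P_\lc^\ub$ is exactly the separation probability of the homogeneous operator $\Omega_\lc^\rmH$ via \pref{pro:PassProbUB}; and the two equalities are the homogeneous identity of \pref{pro:SepProbUBQSV} with the data of \pref{pro:betaOmegaLB}. Your handling of the outermost bounds genuinely differs from the paper's: the paper proves them by direct algebraic chains using only $f_r(\Psi,E)\geq 1-\caE_r(\Psi)\geq s_0$ and $s_0\geq s_1$, whereas you reduce \eref{eq:SepProbGBPSLUB} and \eref{eq:SepProbErGBPSLUB} to the single fact $s_0\geq s_1$, and reduce \eref{eq:SepProbErGBPSLUBH} to \eref{eq:SepProbGBPSLUBH} by affineness in $f_r$ — all of these reductions are correct, and the endpoint-reduction organization is arguably tidier, funneling everything into one two-variable inequality.

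The gap is precisely that one inequality: the endpoint case $\caE_r(\Psi)=1-s_0$ of \eref{eq:SepProbGBPSLUBH}, which you yourself identify as ``the only nontrivial computation,'' is never proved. Saying it ``must be verified carefully rather than estimated'' defers the crux rather than establishing it; as submitted, the theorem's central estimate rests on an unproven claim. For the record, the verification does go through. At that endpoint the claim reads
\begin{equation*}
\frac{3s_0+s_1}{2+s_0+s_1}\;\leq\;\frac{3\left(s_0+\sqrt{s_0 s_1}\right)}{(2+s_0)\left(1+\sqrt{s_0 s_1}\right)},
\end{equation*}
and writing $t=\sqrt{s_0 s_1}$, clearing denominators and simplifying yields
\begin{equation*}
3(s_0+t)(2+s_0+s_1)-(3s_0+s_1)(2+s_0)(1+t)=(1-s_0)\left[t\lsp(6+3s_0+s_1)-2s_1\right]\geq 0,
\end{equation*}
which holds because $t=\sqrt{s_0 s_1}\geq s_1$. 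Note that the operative fact here is that the geometric mean dominates the smaller coefficient, $\sqrt{s_0 s_1}\geq s_1$, not the AM--GM gap $s_0+s_1-2\sqrt{s_0 s_1}\geq 0$ that you invoke: AM--GM bounds $\sqrt{s_0 s_1}$ from \emph{above}, which is the wrong direction for this inequality (it is what makes the separable homogeneous strategy beat the local one, i.e., what the factor $3/(2+s_0)$ must compensate). With the displayed computation inserted, your argument becomes a complete and correct proof; without it, it is an outline whose key step is missing.
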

\end{widetext}

If $|\Psi\>=\sum_{j=0}^{d-1} \sqrt{s_j}\lsp|jj\>$ as in \eref{eq:GBPS}, then $P_\lc^\ub(\Psi,\caS_r)$ and $P_\lc^\ub(\Psi,\ser(E))$ correspond to the separation probabilities of the local homogeneous verification operator $\Omega_\lc^\rmH$ defined in \eref{eq:OmegaLCH}, which is reproduced from \rcite{LiHZ19}.
By contrast, $P_\sep^\lb(\Psi,\caS_r)$ and $P_\sep^\lb(\Psi,\ser(E))$ correspond to the separation probabilities of the verification operator $\Omega=|\Psi\>\<\Psi|$, which is in general not separable.
Nevertheless, for some bipartite entangled pure states, these separation probabilities can also be attained by a separable verification operator, as we shall see in \sref{sec:CertHDEin2qPS}.
The separation probabilities $P_\sep^\rmH(\Psi,\caS_r)$ and $P_\sep^\rmH(\Psi,\ser(E))$ can be attained by the optimal separable homogeneous verification operator $\Omega_\sep^\rmH$ in \eref{eq:OmegaSepH} (see also \pref{pro:OmegaLCsep}).

Next, we clarify the separation probabilities of the strategy $\Omega_\MUB$ defined in \eref{eq:OmegaMUB} for certifying the Schmidt number of the state $|\Psi\>$ in \eref{eq:GBPS}. 
Let
\begin{equation}
	|\tPsi\>=\sum_{j=0}^{r-1} \sqrt{\frac{(1-E)s_j}{1-\caE_r(\Psi)}}\lsp|jj\>+\sum_{j=r}^{d-1} \sqrt{\frac{E s_j}{\caE_r(\Psi)}}\lsp|jj\>;
\end{equation}
then $\tPsi\in \ser(E)$ and 
\begin{align}
	\tr\left(\Omega_\MUB |\tPsi\>\<\tPsi|\right)&=\frac{f_r(\Psi,E)+1}{2}.
\end{align}
Therefore, the inequality in \eref{eq:PassProbUB} of \pref{pro:PassProbUB} is saturated when $\Omega=\Omega_\MUB$ and 
$\caS=\ser(E)$, given that $\nu(\Omega_\MUB)=\beta(\Omega_\MUB)=1/2$. In conjunction with the observation $\caS_r=\ser(0)$, we can immediately deduce the following proposition.

\begin{proposition}\label{pro:Omega2}
	Suppose $|\Psi\>\in \caH_{AB}$ has Schmidt spectrum $\{s_j\}_{j=0}^{d-1}$, $r\in [d-1]$, and $0\leq E< \caE_r(\Psi)$. Then
	\begin{align}
	P_{\Omega_\MUB}(\Psi,\caS_r)&=1-\frac{\caE_r(\Psi)}{2}, \\
	P_{\Omega_\MUB}(\Psi,\caS_{\caE_r}(E))&=\frac{f_r(\Psi,E)+1}{2}.
	\end{align}
\end{proposition}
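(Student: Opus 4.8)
The plan is to determine $P_{\Omega_\MUB}(\Psi,\caS)$ by trapping it between a general upper bound and an explicit lower bound that coincide. For the upper bound I would apply \pref{pro:PassProbUB}. Since $\nu(\Omega_\MUB)=\beta(\Omega_\MUB)=1/2$ by \eref{eq:Gap2}, the bound \eref{eq:PassProbUB} specializes to
\begin{equation}
P_{\Omega_\MUB}(\Psi,\ser(E))\leq \tfrac{1}{2}F(\Psi,\ser(E))+\tfrac{1}{2}=\frac{f_r(\Psi,E)+1}{2},
\end{equation}
where the last equality invokes $f_r(\Psi,E)=F(\Psi,\ser(E))$ from \pref{pro:FPsiSk}. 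It then remains to produce a single state in $\ser(E)$ that passes $\Omega_\MUB$ with exactly this probability.

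For the matching lower bound I would use the trial state $|\tPsi\>=\sum_{c=0}^{d-1}\sqrt{\tilde t_c}\,|cc\>$ introduced above, whose Schmidt coefficients are $\tilde t_c=(1-E)s_c/[1-\caE_r(\Psi)]$ for $c<r$ and $\tilde t_c=Es_c/\caE_r(\Psi)$ for $c\geq r$. A direct check gives $\caE_r(\tPsi)=\sum_{c\geq r}\tilde t_c=E$, so $\tPsi\in\ser(E)$. To evaluate $\tr(\Omega_\MUB|\tPsi\>\<\tPsi|)$ I would split $\Omega_\MUB$ into its two test projectors as in \eref{eq:OmegaMUB}. Because $|\tPsi\>$ is diagonal in the $\{|jj\>\}$ basis, the computational-basis test contributes $\sum_j|\<jj|\tPsi\>|^2=1$.

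The crux is the Fourier-basis term $\sum_j|\<u_jv_j|\tPsi\>|^2$. Inserting $|u_j\>=d^{-1/2}\sum_a\omega^{ja}|a\>$ and $|v_j\>=M|u_j^*\>=\sum_b\sqrt{s_b}\,\omega^{-jb}|b\>$ with $M=\sqrt{d}\,\diag(\sqrt{s_0},\ldots,\sqrt{s_{d-1}})$, the $\{|jj\>\}$-diagonality of $|\tPsi\>$ collapses the double sum over the two subsystems onto the diagonal $a=b=c$, so that the phases cancel and
\begin{equation}
\<u_jv_j|\tPsi\>=\frac{1}{\sqrt{d}}\sum_c\sqrt{s_c\,\tilde t_c}=\frac{1}{\sqrt{d}}\<\Psi|\tPsi\>,
\end{equation}
which is \emph{independent of} $j$. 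Summing over the $d$ values of $j$ yields $\sum_j|\<u_jv_j|\tPsi\>|^2=|\<\Psi|\tPsi\>|^2$, and a short computation gives $|\<\Psi|\tPsi\>|^2=\bigl[\sqrt{(1-E)(1-\caE_r(\Psi))}+\sqrt{E\caE_r(\Psi)}\,\bigr]^2=f_r(\Psi,E)$, consistent with \pref{pro:FPsiSk}. Hence $\tr(\Omega_\MUB|\tPsi\>\<\tPsi|)=\tfrac{1}{2}(1+f_r(\Psi,E))$, which meets the upper bound and establishes the second identity.

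The $\caS_r$ formula then follows by specializing to $E=0$: here $\ser(0)=\caS_r$ and $f_r(\Psi,0)=F(\Psi,\caS_r)=1-\caE_r(\Psi)$ by \pref{pro:FPsiSk}, giving $P_{\Omega_\MUB}(\Psi,\caS_r)=1-\caE_r(\Psi)/2$. I expect the Fourier-term calculation to be the main obstacle: the $j$-independence of $\<u_jv_j|\tPsi\>$ is precisely what makes the two-projector MUB test register the fidelity $|\<\Psi|\tPsi\>|^2$, and this relies on $|\tPsi\>$ sharing the Schmidt basis of $|\Psi\>$ together with the tailored choice $|v_j\>=M|u_j^*\>$. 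Without that alignment the MUB strategy would not reproduce $f_r(\Psi,E)$ so cleanly, and the inequality of \pref{pro:PassProbUB} would generally be strict.
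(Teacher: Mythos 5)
Your proposal is correct and follows essentially the same route as the paper: both saturate the bound of \pref{pro:PassProbUB} (using $\nu(\Omega_\MUB)=\beta(\Omega_\MUB)=1/2$) with the same interpolating state $|\tPsi\>$ sharing the Schmidt basis of $|\Psi\>$, then obtain the $\caS_r$ formula via $\caS_r=\ser(0)$. The only difference is that you spell out the Fourier-basis computation $\sum_j|\<u_jv_j|\tPsi\>|^2=|\<\Psi|\tPsi\>|^2=f_r(\Psi,E)$ explicitly, which the paper states without detail.
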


In conjunction with \eref{eq:PsepLBUB} we can derive the following relations: 
\begin{align}
	P_{\Omega_\MUB}(\Psi,\caS_r)&=\frac{P_\sep^\lb(\Psi,\caS_r)+1}{2}, \\
	P_{\Omega_\MUB}(\Psi,\caS_{\caE_r}(E))&=\frac{P_\sep^\lb(\Psi,\ser(E))+1}{2}.
\end{align}
These equations mean $P_{\Omega_\MUB}(\Psi,\caS_r)-P_{\sep}(\Psi,\caS_r)<1/2$ and the inequality still holds if $\caS_r$ is replaced by $\ser(E)$.

Next, we consider the general behavior of the separation probabilities for a Haar-random bipartite pure state  $|\Psi\>\in \caH_{AB}$.
With high probability we have $s_0,s_1 \sim 1/d$, $\caE_r(\Psi)\sim (d-r)/d$, and $P_\sep^\lb(\Psi,\caS_r)$, $P_\lc^\ub(\Psi,\caS_r)\sim r/d$ for $r<(d-3)/4$.
Therefore, HDE can be certified efficiently by LOCC, which is corroborated by rigorous analytical derivation and extensive numerical simulations as shown below.

\begin{theorem}\label{thm:SepProbGBPSmeanLUB}
	Suppose $|\Psi\>$ is a Haar-random pure state in $\caH_{AB}$ and $r\in [d-1]$. Then   
	\begin{align}
		\frac{r+1}{d+1}&\leq \bbE P_\sep(\Psi,\caS_r)\leq \bbE P_\lc(\Psi,\caS_r) \leq \bbE P_\lc^\rmH(\Psi,\caS_r) \nonumber\\
		&\leq \bbE P_\lc^\ub(\Psi,\caS_r)\leq \frac{4(r+1)}{d+1}=4P(\Phi,\caS_r). \label{eq:SepProbGBPSmeanLUB}
	\end{align}
\end{theorem}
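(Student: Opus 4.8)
The plan is to read off the six-term chain in \eqref{eq:SepProbGBPSmeanLUB} by observing that every interior inequality already holds pointwise in $|\Psi\>$, so it is preserved under the Haar expectation, leaving only the two extreme bounds with genuine content. For the leftmost inequality I would invoke \thref{thm:SepProb&Major}, which gives $P_\sep(\Phi,\caS_r)\le P_\sep(\Psi,\caS_r)$ for every $|\Psi\>$ (the state $|\Phi\>$ is majorized by all states and $P_\sep(\cdot,\caS_r)$ is Schur convex), together with $P_\sep(\Phi,\caS_r)=P(\Phi,\caS_r)=(r+1)/(d+1)$ from \thref{thm:SepProbforEMCinMES}; taking expectations yields $\bbE P_\sep(\Psi,\caS_r)\ge(r+1)/(d+1)$. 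The three middle inequalities follow by monotonicity of the expectation from the pointwise relations $P_\sep\le P_\lc\le P_\lc^\rmH\le P_\lc^\ub$, in which the outer two are contained in \thref{thm:SepProbGBPSLUB} and the middle one $P_\lc\le P_\lc^\rmH$ is immediate because $P_\lc^\rmH$ optimizes over the smaller class of homogeneous local strategies.

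Everything then reduces to the rightmost bound $\bbE P_\lc^\ub(\Psi,\caS_r)\le 4(r+1)/(d+1)$. The key preparatory step is a purely deterministic inequality that collapses the explicit expression for $P_\lc^\ub$ onto the single largest Schmidt coefficient $s_0$. Starting from $P_\lc^\ub(\Psi,\caS_r)=1-2\caE_r(\Psi)/(2+s_0+s_1)$ in \eqref{eq:PsepLBUB} and using $s_1\le s_0$, hence $2+s_0+s_1\le 2(1+s_0)$, I would write
\begin{equation}
P_\lc^\ub(\Psi,\caS_r)\le 1-\frac{\caE_r(\Psi)}{1+s_0}=\frac{s_0+\sum_{j=0}^{r-1}s_j}{1+s_0}\le(r+1)\lsp s_0 ,
\end{equation}
where the equality uses $1-\caE_r(\Psi)=\sum_{j=0}^{r-1}s_j$ and the final step uses $s_j\le s_0$ for all $j$ together with $1+s_0\ge1$. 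Taking the Haar expectation reduces the whole theorem to the single scalar estimate $\bbE[s_0]\le 4/(d+1)$.

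The main obstacle is therefore to prove this bound on the expected largest Schmidt coefficient, i.e.\ on the expected largest eigenvalue of the reduced state $\rho_A=\tr_B(|\Psi\>\<\Psi|)$, whose spectrum is governed by the induced measure (the Hilbert--Schmidt measure when $d_A=d_B$). I expect the extremal case to be $d_A=d_B=d$: when $d_B\gg d_A$ the state $\rho_A$ concentrates near the maximally mixed state and $\bbE[s_0]\to 1/d$, comfortably below $4/(d+1)$, so it suffices to control the balanced case. There the rescaled spectrum of $\rho_A$ converges to the Marchenko--Pastur law supported on $[0,4/d]$, and the constant $4$ in the target is exactly this upper spectral edge; the estimate $\bbE[s_0]\le 4/(d+1)$ is thus essentially tight, and the typical value of $P_\lc^\ub$ already saturates $4P(\Phi,\caS_r)$ up to lower-order terms. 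To make this rigorous I would combine the elementary threshold bound $s_0\le t+\sum_j(s_j-t)_+$, valid for any $t\ge0$, with the explicit finite-$d$ one-point eigenvalue density $R_1$ of $\rho_A$ (normalized to unit mass), so that $\bbE[s_0]\le t+d\int_t^1(\lambda-t)\,R_1(\lambda)\,d\lambda$; choosing $t$ at the spectral edge and evaluating the integral should deliver the constant $4$, while the finite-size edge correction is what upgrades $4/d$ to the sharp $4/(d+1)$. Pinning down this correction precisely --- rather than merely recovering the leading $O(1/d)$ behavior --- is the delicate part of the argument.
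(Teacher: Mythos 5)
Your handling of the first five relations coincides with the paper's: the leftmost bound comes from \thref{thm:SepProb&Major} (Schur convexity, $\Phi$ majorized by everything) together with \thref{thm:SepProbforEMCinMES}, and the three middle inequalities are pointwise and survive the expectation. The genuine gap is in the final, decisive step. Your deterministic reduction
\begin{equation}
P_\lc^\ub(\Psi,\caS_r)\leq 1-\frac{\caE_r(\Psi)}{1+s_0}=\frac{s_0+\sum_{j=0}^{r-1}s_j}{1+s_0}\leq (r+1)\lsp s_0
\end{equation}
is correct, but by bounding the denominator as $1+s_0\geq 1$ you discard exactly the factor that makes the theorem provable from known inputs. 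After this step you need $\bbE s_0\leq 4/(d+1)$, which is strictly stronger than the available random-matrix estimate $\bbE s_0\leq 4/d$ (Lemma 22 of \rcite{Liu18}, which is what the paper invokes). Since $\bbE s_0$ converges to $4/d$ in the balanced case $d_A=d_B=d\to\infty$, your required bound matches the leading asymptotics exactly and could only hold thanks to negative subleading edge corrections; you acknowledge this yourself ("the delicate part") and leave it unproven, and it is not obvious it even holds at every finite $d$ and every $d_A\leq d_B$ without a substantial new argument. As written, the key inequality $\bbE P_\lc^\ub(\Psi,\caS_r)\leq 4(r+1)/(d+1)$ is therefore not established.

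The paper's proof (\lref{lem:UrPsi}) avoids this entirely by keeping the denominator. With $U_r(\Psi)=1-\caE_r(\Psi)/(1+s_0)$, which dominates $P_\lc^\ub(\Psi,\caS_r)$ by \eref{eq:UrPsileqPlcH}, it splits
\begin{equation}
U_r(\Psi)=\frac{\sum_{j=0}^{r-1}s_j}{1+s_0}+\frac{s_0}{1+s_0}\leq \frac{d}{d+1}\sum_{j=0}^{r-1}s_j+\frac{s_0}{1+s_0},
\end{equation}
using $s_0\geq 1/d$, i.e.\ $1/(1+s_0)\leq d/(d+1)$. The first term then yields $\frac{d}{d+1}\lsp r\lsp\bbE s_0\leq \frac{d}{d+1}\cdot\frac{4r}{d}=\frac{4r}{d+1}$ directly from $\bbE s_0\leq 4/d$, while for the second term concavity of $s_0/(1+s_0)$ and Jensen's inequality give $\bbE[s_0/(1+s_0)]\leq (\bbE s_0)/(1+\bbE s_0)\leq 4/(d+4)<4/(d+1)$. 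The factor $d/(d+1)$ you lost is precisely what upgrades $4/d$ to $4/(d+1)$, so no sharpened spectral-edge analysis is needed. To repair your proof, stop at $U_r(\Psi)$ instead of passing to $(r+1)s_0$, and use this splitting.
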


\begin{theorem}\label{thm:ConcSepProbforSNCinGBPS}
	Suppose $|\Psi\>$ is a Haar-random pure state in $\caH_{AB}$, $r\in [d-1]$, and $\epsilon\geq0$. Then the separation probability $ P_\lc^\rmH(\Psi,\caS_r)$ satisfies
	\begin{equation}\label{eq:ConcSepProbforSNCinGBPS}
	\!\!\Pr\left\{ P_\lc^\rmH(\Psi,\caS_r)\geq \frac{4(r+1)}{d+1}+\epsilon \right\} \leq 2\exp\left( -\frac{D\epsilon^2}{50\pi}\right),
	\end{equation}
	where $D=\dim(\caH_{AB})$, and
	the same  result still holds if $P_\lc^\rmH(\Psi,\caS_r)$ is replaced by $P_\lc(\Psi,\caS_r)$, $P_\sep^\rmH(\Psi,\caS_r)$, or $P_\sep(\Psi,\caS_r)$. 
\end{theorem}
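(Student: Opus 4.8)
The plan is to establish the concentration inequality for the single upper bound $P_\lc^\ub(\Psi,\caS_r)$ and then transfer it simultaneously to all four separation probabilities. By the chains of inequalities in \thref{thm:SepProbGBPSLUB}, each of $P_\sep(\Psi,\caS_r)$, $P_\lc(\Psi,\caS_r)$, $P_\sep^\rmH(\Psi,\caS_r)$, and $P_\lc^\rmH(\Psi,\caS_r)$ is bounded above by
\[
g(\Psi):=P_\lc^\ub(\Psi,\caS_r)=1-\frac{2\caE_r(\Psi)}{2+s_0+s_1}.
\]
Hence, for each of these four quantities, the event that it exceeds $4(r+1)/(d+1)+\epsilon$ is contained in the event $\{g(\Psi)\geq 4(r+1)/(d+1)+\epsilon\}$, so it suffices to bound the probability of the latter.

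Next I would control the mean of $g$. By \thref{thm:SepProbGBPSmeanLUB} we have $\bbE g(\Psi)=\bbE P_\lc^\ub(\Psi,\caS_r)\leq 4(r+1)/(d+1)$, so $\{g\geq 4(r+1)/(d+1)+\epsilon\}\subseteq\{g\geq \bbE g+\epsilon\}$. This reduces the claim to an upper-tail estimate for $g$ around its own mean. Identifying $\caH_{AB}\cong\bbR^{2D}$ and viewing $|\Psi\>$ as a uniform point on the unit sphere $S^{2D-1}$, the natural tool is Levy's lemma: for any function $f$ on $S^{2D-1}$ that is $L$-Lipschitz with respect to the Euclidean metric, $\Pr\{f\geq \bbE f+\epsilon\}\leq 2\exp(-cD\epsilon^2/L^2)$ for an absolute constant $c$.

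The core technical step is to bound the Lipschitz constant of $g$ as a function of $|\Psi\>$, and the only subtlety is the appearance of $s_0+s_1$ in the denominator. The key observation is that $s_0+s_1=1-\caE_2(\Psi)$, so both $\caE_r(\Psi)$ and $s_0+s_1$ are Lipschitz functions of $|\Psi\>$ with constants $\ell(r,d)\leq1$ and $\ell(2,d)\leq1$, respectively, by \pref{pro:ErLip} (the case $d=2$ being trivial, since then $s_0+s_1=1$). Writing $g=1-2a/(2+b)$ with $a=\caE_r(\Psi)\in[0,1]$ and $b=s_0+s_1\in[0,1]$, I would bound the partial derivatives $|\partial g/\partial a|=2/(2+b)\leq1$ and $|\partial g/\partial b|=2a/(2+b)^2\leq\half$, and combine these with the Lipschitz constants of $a$ and $b$ to obtain an explicit Lipschitz constant $L=O(1)$ for $g$.

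Substituting this explicit $L$ together with the mean bound $\bbE g\leq 4(r+1)/(d+1)$ into Levy's lemma yields the stated tail $2\exp(-D\epsilon^2/50\pi)$; the precise constant $50\pi$ results from combining the value of $L$ with the chosen form of the concentration inequality (and, if the median version is used, absorbing the $O(L/\sqrt{D})$ gap between median and mean). The main obstacle is exactly this Lipschitz estimate: handling the $s_0+s_1$ denominator cleanly and tracking the resulting constant carefully enough that it reproduces the $50\pi$ in the exponent. Everything else is a routine reduction to the single bound $g$ followed by an invocation of standard measure concentration on the sphere.
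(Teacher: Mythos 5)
Your proposal is correct, and it shares the paper's overall architecture: dominate all four separation probabilities by a single Lipschitz function of $|\Psi\>$, bound its mean by $4(r+1)/(d+1)$, and apply Levy's lemma on the sphere $S^{2D-1}$. The genuine difference lies in the choice of majorant and hence in the key technical lemma. The paper does \emph{not} prove a Lipschitz bound for $P_\lc^\ub(\Psi,\caS_r)$ itself; instead it relaxes one step further, using $s_0\geq s_1$ to pass to $U_r(\Psi)=1-\caE_r(\Psi)/[1+s_0(\Psi)]$ [see \eref{eq:UrPsileqPlcH}], and then proves in \lref{lem:UrPsi} that $U_r$ is $2$-Lipschitz (via $s_0=1-\caE_1(\Psi)$ and \pref{pro:ErLip}) and that $\bbE U_r(\Psi)<4(r+1)/(d+1)$ (via $\bbE s_0\leq 4/d$); Levy's lemma with $\eta=2$, $n=2D$ then gives exactly the constant $50\pi$. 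You instead keep $g=P_\lc^\ub$ and handle the troublesome denominator through the identity $s_0+s_1=1-\caE_2(\Psi)$, so that both arguments of $g=1-2a/(2+b)$ are $1$-Lipschitz by \pref{pro:ErLip}; combined with your derivative bounds $|\partial g/\partial a|\leq 1$ and $|\partial g/\partial b|\leq 1/2$ on the convex domain $[0,1]^2$, this yields a Lipschitz constant $L\leq 3/2$, and Levy's lemma then gives the tail $2\exp\bigl(-8D\epsilon^2/(225\pi)\bigr)\leq 2\exp\bigl(-D\epsilon^2/(50\pi)\bigr)$, which is in fact marginally stronger than the stated bound (any $L\leq 2$ suffices). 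Your route also lets you quote the mean bound $\bbE P_\lc^\ub(\Psi,\caS_r)\leq 4(r+1)/(d+1)$ directly from \thref{thm:SepProbGBPSmeanLUB} rather than re-deriving it; the paper's relaxation to $U_r$ buys a one-variable Lipschitz computation and packages the mean bound needed for both \thsref{thm:SepProbGBPSmeanLUB} and \ref{thm:ConcSepProbforSNCinGBPS} into a single lemma. One small remark: your hedge about the median form of Levy's lemma is unnecessary, since the version stated in \lref{lem:Levy'sLemma} is already formulated with the mean, so no median-to-mean correction arises.
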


As a complement to  \thsref{thm:SepProbGBPSmeanLUB} and \ref{thm:ConcSepProbforSNCinGBPS}, 
\fref{fig:ConGBPS} shows the mean separation probability $\bbE P_\sep^\rmH(\Psi,\caS_r)$ as a function of $d$ for $r=1,2,5$.
Also shown are the upper bound $\bbE P_\lc^\ub(\Psi,\caS_r)$ for $\bbE P_\lc^\rmH(\Psi,\caS_r)$ and $\bbE P_\sep^\rmH(\Psi,\caS_r)$ and the lower bound $\bbE P_\sep^\lb(\Psi,\caS_r)$ for $\bbE P_\sep(\Psi,\caS_r)$ and $\bbE P_\sep^\rmH(\Psi,\caS_r)$. This figure indicates that
\begin{align}
\bbE P_\lc^\ub(\Psi,\caS_r)\approx \bbE P_\lc^\rmH(\Psi,\caS_r)\approx\bbE P_\sep^\rmH(\Psi,\caS_r),
\end{align}
so the  verification strategy in  \eref{eq:OmegaLCH} is nearly optimal among local homogeneous strategies for the Haar-random pure state $|\Psi\>$. In addition, the values in the above equation are close to the upper bound $4P(\Phi,\caS_r)$ when $d$ is sufficiently large compared with $r$, in which case the last two inequalities in \eref{eq:SepProbGBPSmeanLUB} are approximately saturated. \Fref{fig:ProbGBPS}  shows  the probability density distributions of $P_\lc^\ub(\Psi,\caS_r)$, $P_\sep^\rmH(\Psi,\caS_r)$, and $P_\sep^\lb(\Psi,\caS_r)$. All these distributions are concentrated within the interval $[P(\Phi,\caS_r),4P(\Phi,\caS_r)]$ and become increasingly concentrated as $d$ increases.  These results further demonstrate that HDE in general bipartite pure states can be certified efficiently.

\section{\label{sec:CertHDEin2qPS}Certification of entanglement in two-qubit pure states}

In this section, we construct an optimal separable strategy and two nearly optimal local strategies for certifying the entanglement of a general two-qubit pure state. When the state has sufficiently high concurrence, we further show that the optimal separable strategy can also be implemented by LOCC.

Up to a local unitary transformation, any two-qubit entangled pure state can be expressed as follows:
\begin{equation}\label{eq:2qPS}
	|\Psi_\theta\>=\cos\theta|00\>+\sin\theta|11\>,\quad 0< \theta\leq \frac{\pi}{4}. 
\end{equation}
Note that $|\Psi_\theta\>$ is invariant under the swap operation and complex conjugation (with respect to the computational basis). In addition, $|\Psi_\theta\>$ is invariant under any local unitary transformation of the form $V_\zeta\otimes V_\zeta^*$, where 
\begin{align}\label{eq:Vzeta}
	V_\zeta=|0\>\<0|+\rme^{-\rmi \zeta}|1\>\<1|, \quad 0\leq \zeta< 2\pi. 
\end{align}
Therefore, we can restrict our attention to verification operators that enjoy the same symmetry when searching for an optimal  strategy.

Given $0< \theta\leq \pi/4$ and $0\leq p\leq 1$, define the following verification operators:
\begin{align}
	&\Omega_0:=|\Psi_\theta\>\<\Psi_\theta|+\cos\theta\sin\theta(|01\>\<01|+|10\>\<10|), \label{eq:Omega0}\\
	&\Omega_1:=|\Psi_\theta\>\<\Psi_\theta|+\frac{\cos\theta\sin\theta}{1+\cos\theta\sin\theta}(\bbone-|\Psi_\theta\>\<\Psi_\theta|), \label{eq:Omega1} \\
	&\Omega(\theta,p):=p\Omega_1+(1-p)\Omega_0. \label{eq:Omegathetap}
\end{align}
Note that $\Omega_0=\Omega(\theta,0)$ and $\Omega_1=\Omega(\theta,1)$. In addition, $\Omega_0$, $\Omega_1$, and $\Omega(\theta,p)$ are invariant under the swap operation, complex conjugation, and any local unitary transformation of the form $V_\zeta\otimes V_\zeta^*$. 
Furthermore, by virtue of the PPT criterion \cite{Peres96,Horo96}, it is straightforward to verify that both $\Omega_0$ and $\Omega_1$ are separable. 
Here $\Omega_1$ is the optimal QSV strategy proposed in \rcite{WangH19}, which can be implemented by local operations with two-way classical communication, and 
$\Omega_0$ is a separable strategy proposed in the proofs of Lemma~1 and Theorem~2 of \rcite{Owari08}; however, no LOCC implementation of $\Omega_0$ has been found.
To simplify the following discussion, the separation probability $P_{\Omega(\theta,p)}(\Psi_\theta)$ will be abbreviated as $P(\theta,p)$ henceforth.

By virtue of \thref{thm:SepProbGBPSLUB} and the above observation, we can immediately derive the following proposition, which clarifies the separation probability associated with an optimal homogeneous strategy that can be realized by separable operations or LOCC. The result is illustrated in \fref{fig:2q}.
\begin{proposition}
	Suppose $0< \theta\leq \pi/4$. Then the local strategy $\Omega_1$ defined in \eref{eq:Omega1} is  optimal among all separable homogeneous strategies for $|\Psi_\theta\>$. The separation probabilities $	P_\sep^\rmH(\Psi_\theta)$ and $	P_\lc^\rmH(\Psi_\theta)$ read
	\begin{align}
		P_\sep^\rmH(\Psi_\theta)=P_\lc^\rmH(\Psi_\theta)=P_{\Omega_1}(\Psi_\theta)=
		\frac{\cos^2\theta+\cos\theta\sin\theta}{1+\cos\theta\sin\theta}.
	\end{align}
	If  $0\leq E<\sin^2\theta$, then 
	\begin{align}
		P_\sep^\rmH(\Psi_\theta,\serone(E))\!&=\!P_\lc^\rmH(\Psi_\theta,\serone(E))\!=\!P_{\Omega_1}(\Psi_\theta,\serone(E))\nonumber\\
		\!&=\!\frac{\cos^2(\theta-\theta_E)+\cos\theta\sin\theta}{1+\cos\theta\sin\theta},
	\end{align}
	where $\theta_E=\arcsin{\sqrt{E}}$.
\end{proposition}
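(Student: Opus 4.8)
The plan is to reduce the entire statement to the two-qubit specialization of \thref{thm:SepProbGBPSLUB} at $d=2$ and $r=1$, so that $\caS_1=\caS_\sep$, the Schmidt coefficients are $s_0=\cos^2\theta$ and $s_1=\sin^2\theta$, $\sqrt{s_0 s_1}=\cos\theta\sin\theta$, and $\caE_1(\Psi_\theta)=\sin^2\theta$. First I would observe that substituting these values into the definition \eref{eq:OmegaSepH} of $\Omega_\sep^\rmH$ reproduces \emph{exactly} the operator $\Omega_1$ of \eref{eq:Omega1}; hence $\Omega_1$ is the optimal separable homogeneous verification operator identified in \pref{pro:OmegaLCsep} and \pref{pro:betaOmegaLB}, carrying $\beta(\Omega_1)=\beta_\sep^\rmH=\cos\theta\sin\theta/(1+\cos\theta\sin\theta)$.

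For the optimality claim I would invoke \pref{pro:PassProbUB}, which is saturated for homogeneous strategies: any homogeneous $\Omega$ satisfies $P_\Omega(\Psi_\theta,\caS)=F(\Psi_\theta,\caS)+\beta(\Omega)[1-F(\Psi_\theta,\caS)]$. Since $F(\Psi_\theta,\caS)\le 1$, this quantity is monotonically increasing in $\beta(\Omega)$, so minimizing the separation probability over separable homogeneous strategies is equivalent to minimizing $\beta(\Omega)$. As $\Omega_1$ attains the minimum value $\beta_\sep^\rmH$, it is optimal among separable homogeneous strategies, and simultaneously $P_{\Omega_1}(\Psi_\theta,\caS)=P_\sep^\rmH(\Psi_\theta,\caS)$ for $\caS=\caS_\sep$ and $\caS=\serone(E)$.

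The explicit values then come directly from the equality statements in \eref{eq:SepProbGBPSLUBH} and \eref{eq:SepProbErGBPSLUBH}. For $\caS_1$ one gets $P_\sep^\rmH(\Psi_\theta)=1-\sin^2\theta/(1+\cos\theta\sin\theta)=(\cos^2\theta+\cos\theta\sin\theta)/(1+\cos\theta\sin\theta)$ using $1-\sin^2\theta=\cos^2\theta$. For the robust case with $0\le E<\sin^2\theta=\caE_1(\Psi_\theta)$, I would replace $F(\Psi_\theta,\caS_1)$ by $f_1(\Psi_\theta,E)=F(\Psi_\theta,\serone(E))$ supplied by \pref{pro:FPsiSk}; the only extra computation is the trigonometric identity turning \eref{eq:FPsiSkE} into closed form, namely $f_1=[\sin\theta\sqrt{E}+\cos\theta\sqrt{1-E}\,]^2=\cos^2(\theta-\theta_E)$ via $\cos(\theta-\theta_E)=\cos\theta\cos\theta_E+\sin\theta\sin\theta_E$ with $\sin\theta_E=\sqrt{E}$ and $\cos\theta_E=\sqrt{1-E}$. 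Substituting yields the stated $(\cos^2(\theta-\theta_E)+\cos\theta\sin\theta)/(1+\cos\theta\sin\theta)$.

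The remaining and genuinely nontrivial step is the middle equality $P_\lc^\rmH(\Psi_\theta,\caS)=P_\sep^\rmH(\Psi_\theta,\caS)$, i.e.\ that the separable optimum is actually achievable under LOCC. The general theory only delivers the inequality $\beta_\sep^\rmH\le\beta_\lc^\rmH$ (equivalently $P_\sep^\rmH\le P_\lc^\rmH$) from \eref{eq:betaOmegaLB2}. To close the gap I would invoke the cited fact that $\Omega_1$ — the optimal strategy of \rcite{WangH19} — admits a realization by local operations with two-way classical communication; then $\Omega_1$ is itself an LOCC homogeneous strategy with $\beta(\Omega_1)=\beta_\sep^\rmH$, forcing $\beta_\lc^\rmH=\beta_\sep^\rmH$ and hence the three-way equality $P_\lc^\rmH=P_\sep^\rmH=P_{\Omega_1}$. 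This LOCC-realizability is the crux of the argument: every other ingredient is a mechanical specialization of earlier propositions, whereas the coincidence of the LOCC and separable homogeneous optima is special to $\Omega_1$ in the two-qubit setting and is precisely what makes the proposition's equalities hold.
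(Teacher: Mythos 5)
Your proposal is correct and follows essentially the same route as the paper, which derives this proposition "by virtue of Theorem~\ref{thm:SepProbGBPSLUB} and the above observation": you identify $\Omega_1$ with the two-qubit specialization of $\Omega_\sep^\rmH$, read off the explicit values from the equalities in Eqs.~\eqref{eq:SepProbGBPSLUBH} and \eqref{eq:SepProbErGBPSLUBH} (with the same trigonometric simplification $f_1=\cos^2(\theta-\theta_E)$), and close the LOCC--separable gap via the Wang--Hayashi LOCC realization of $\Omega_1$, exactly as the paper does. Your extra monotonicity-in-$\beta(\Omega)$ argument from Proposition~\ref{pro:PassProbUB} is just an unpacking of Eq.~\eqref{eq:SepProbH}, so nothing is genuinely different.
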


Next, we turn to general verification strategies. For $0<\theta\leq \pi/4$ and $0\leq p\leq 1$, define
\begin{equation}\label{eq:thetaFun}
	\begin{aligned}
		\kappa(\theta)&:=\cos\theta\sin\theta,\\
		q(\theta)&:=\max\left\{\frac{(1+\kappa)(2\kappa-\cos^2\theta)}{\kappa(2\kappa+\sin^2\theta)},0\right\},\\
		a^*(\theta,p)&:=\begin{cases}
			0 &	q(\theta)\leq p\leq 1,\\
			\arctan{\sqrt{h(\theta,p)}} & 
			0\leq p <q(\theta),
		\end{cases} \\
		h(\theta,p)&:=\frac{(1+\kappa)(2\kappa-\cos^2\theta)-p\kappa(2\kappa+\sin^2\theta)}{(1+\kappa)(2\kappa-\sin^2\theta)-p\kappa(2\kappa+\cos^2\theta)},	
	\end{aligned}
\end{equation}
where the arguments $\theta$ and $p$ can be omitted to simplify the notation. Here $q(\theta)=0$ when $0<\theta\leq \arctan(1/2)$ and $q(\theta)$ increases  from 0 to 1 when 
$\theta$ increases from $\arctan(1/2)$ to $\pi/4$.
In addition, if $\arctan(1/2)< \theta\leq \pi/4$ and $0\leq p <q(\theta)$, then we have $0< h(\theta,p)\leq 1$ and thus $0< a^*(\theta,p)\leq \pi/4$. 
\Lsref{lem:OptStratforECin2qPS}-\ref{lem:POmegathetap2} below are proved in \aref{app:OS&SPfor2qPS}.

\begin{figure}[t]
	\centering
	\includegraphics[scale=0.32]{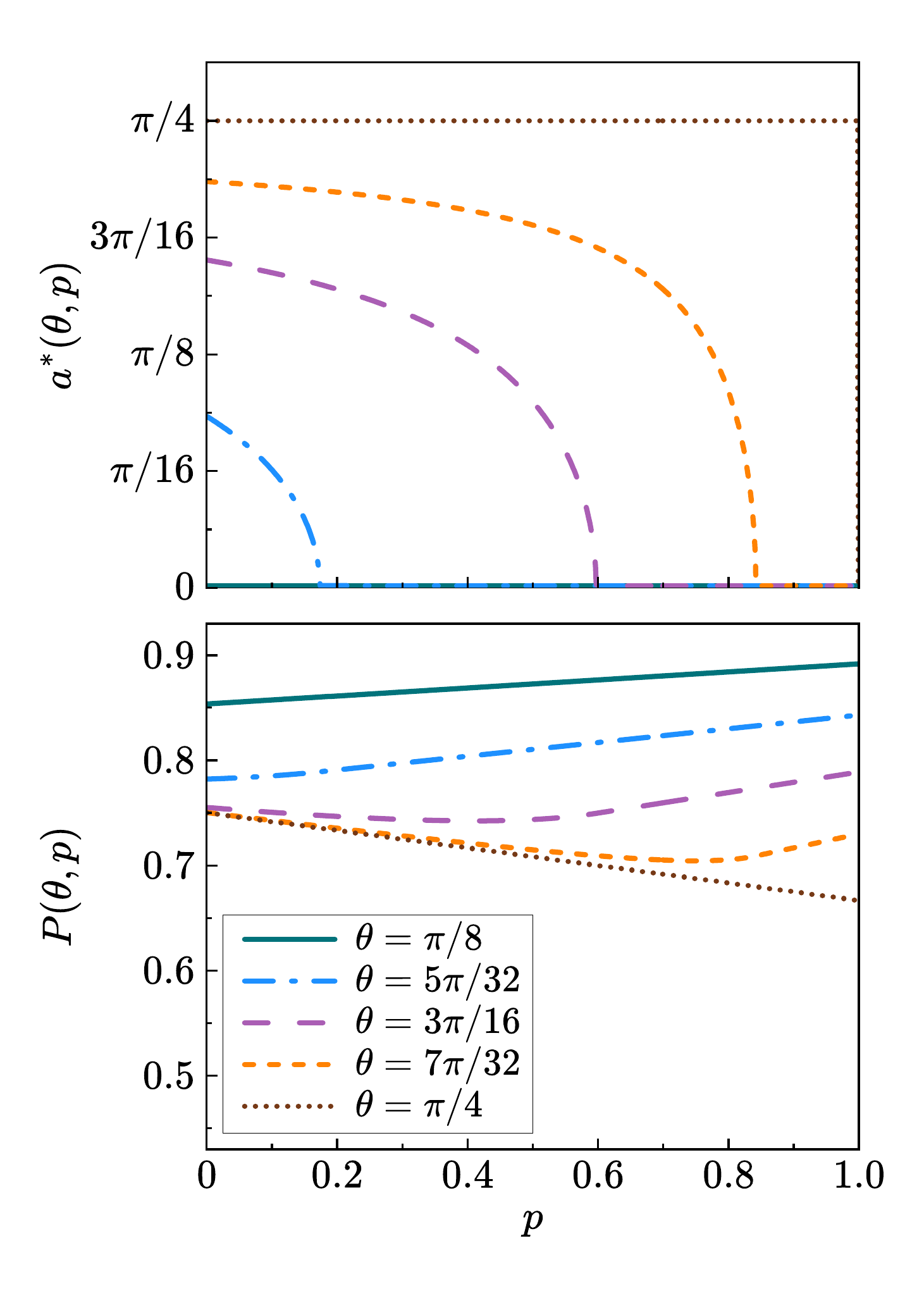}
	\caption{\label{fig:SepProbthetap}The function $a^*(\theta,p)$ and the separation probability $P(\theta,p)=P_{\Omega(\theta,p)}(\Psi_\theta)$ for certifying the entanglement of $|\Psi_\theta\rangle$. For a given value of $\theta$, $P(\theta,p)$ is convex in $p$ and has a unique minimizer in $p$. }
\end{figure}

\begin{lemma}\label{lem:OptStratforECin2qPS}
	Suppose $0< \theta\leq \pi/4$. Then, for some $0\leq p\leq 1$, $\Omega(\theta,p)$ is an optimal separable verification operator of the target state $|\Psi_\theta\>$, that is,
	\begin{equation}\label{eq:SepProb2qubitthetap}
		P_\sep(\Psi_\theta)=\min_{0\leq p\leq 1} P(\theta,p). 
	\end{equation}
\end{lemma}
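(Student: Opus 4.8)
The plan is to use the large symmetry group of $|\Psi_\theta\>$ to cut the search down to a two-parameter family of operators, impose separability through the PPT criterion, and then collapse the optimization onto the segment $\{\Omega(\theta,p)\}_{0\le p\le1}$ by a monotonicity-plus-convexity argument.

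First I would collect the symmetries fixing $|\Psi_\theta\>$: complex conjugation and the local unitaries $V_\zeta\otimes V_\zeta^*$ of \eref{eq:Vzeta}, which lie in $\trmU_\rmL(\caH)$, together with the swap. By the separable analogue of \pref{pro:OptOmegaSym} (for the conjugation and $V_\zeta\otimes V_\zeta^*$ part) and a direct symmetrization as in \pref{pro:OmegaAve} for the swap---legitimate because the swap fixes $|\Psi_\theta\>$, maps $\caS_\sep$ to itself, and preserves the separable cone even though it is not local---there is an optimal separable verification operator $\Omega$ invariant under the whole group. Invariance under $V_\zeta\otimes V_\zeta^*$ makes $\Omega$ block diagonal with respect to the charge sectors $\{|00\>,|11\>\}$, $\{|01\>\}$, $\{|10\>\}$; swap invariance equates the $|01\>$ and $|10\>$ diagonal entries; and the test condition $\Omega|\Psi_\theta\>=|\Psi_\theta\>$ forces $|\Psi_\theta\>$ to be a unit eigenvector of the charge-$0$ block, so with $|\Psi_\theta^\perp\>=-\sin\theta|00\>+\cos\theta|11\>$ one obtains $\Omega=|\Psi_\theta\>\<\Psi_\theta|+b|\Psi_\theta^\perp\>\<\Psi_\theta^\perp|+c(|01\>\<01|+|10\>\<10|)$ with $b,c\ge0$.

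Next I would impose separability via PPT (equivalent for two qubits): $\Omega\ge0$, $\bbone-\Omega\ge0$, and both partial transposes positive. Computing the partial transpose, which sends the charge-$0$ off-diagonal $\kappa(1-b)$ (here $\kappa=\cos\theta\sin\theta$) into the $\{|01\>,|10\>\}$ block, yields the feasible trapezoid $0\le b\le1$, $\kappa(1-b)\le c\le1-\kappa(1-b)$, whose lower edge is $c=\kappa(1-b)$. Since $\tr(\Omega\sigma)$ has $c$-derivative $\<01|\sigma|01\>+\<10|\sigma|10\>\ge0$ for every $\sigma\ge0$, the separation probability $P_\Omega(\Psi_\theta,\caS_\sep)$ is nondecreasing in $c$; hence for each $b$ an optimum sits at the smallest feasible $c=\kappa(1-b)$, i.e.\ on the lower edge. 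A direct check identifies this edge with the line through $\Omega_0$ (at $b=0$) and $\Omega_1$ (at $b=\kappa/(1+\kappa)$), with $\Omega(\theta,p)$ of \eref{eq:Omegathetap} tracing the portion $b\in[0,\kappa/(1+\kappa)]$ as $p$ runs over $[0,1]$.

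The main obstacle is to show the edge-minimizer lies in this portion, i.e.\ to exclude $b>\kappa/(1+\kappa)$. I would parametrize the edge by $p$ and note that $P(\theta,p)$, being a maximum of functions affine in $p$, is convex in $p$; it then suffices to check that its right derivative at $p=1$ is nonnegative. By the envelope theorem this derivative is governed by the separable states maximizing $\tr(\Omega_1\sigma)$, which---because $\Omega_1$ is homogeneous---are exactly those maximizing $\<\Psi_\theta|\sigma|\Psi_\theta\>=F(\Psi_\theta,\caS_\sep)=\cos^2\theta$, whose unique maximizer is the product state $|00\>\<00|$ aligned with the largest Schmidt coefficient. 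Evaluating the edge derivative $\<\Psi_\theta^\perp|\sigma|\Psi_\theta^\perp\>-\kappa(\<01|\sigma|01\>+\<10|\sigma|10\>)$ at $\sigma=|00\>\<00|$ gives $\sin^2\theta>0$, so by convexity $P(\theta,p)$ is nondecreasing for $p\ge1$ and the minimum over the full edge is attained at some $p\in[0,1]$. Since every $\Omega(\theta,p)$ with $p\in[0,1]$ is itself a separable verification operator (a convex mixture of the separable $\Omega_0$ and $\Omega_1$ that fixes $|\Psi_\theta\>$), this yields $P_\sep(\Psi_\theta)=\min_{0\le p\le1}P(\theta,p)$. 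The subtle steps are the PPT computation and the uniqueness of the maximizer $|00\>\<00|$; convexity of $P(\theta,p)$ is what lets the right-derivative argument go through cleanly even if the maximizers were non-unique.
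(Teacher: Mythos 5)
Your proposal is correct and follows the paper's proof in all essentials: the same symmetrization over swap, conjugation, and $V_\zeta\otimes V_\zeta^*$ reducing to the two-parameter family of \eref{eq:Omegalambda23}, the same PPT computation yielding the feasible trapezoid, and the same monotonicity-in-$\lambda_3$ argument pushing the optimum onto the lower edge $\lambda_3=\kappa(1-\lambda_2)$. The one genuine divergence is how the portion of that edge beyond $\Omega_1$ (where $\lambda_2>\lambda_3$) is excluded. The paper applies \pref{pro:PassProbUB}: on that region $\beta(\Omega)=\lambda_2$ is the second-largest eigenvalue, so $P_\Omega(\Psi_\theta)=\cos^2\theta+\lambda_2\sin^2\theta$ exactly (attained at $|00\>\<00|$), which is monotone in $\lambda_2$; this gives the value of the separation probability on the entire region $\lambda_2\geq\lambda_3$, a formula reused later in \eref{eq:PthetapProof1}. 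You instead use convexity of $P(\theta,p)$ along the extended edge plus a Danskin-type right-derivative estimate at $p=1$, lower-bounded by evaluating the edge direction at the maximizer $|00\>\<00|$, giving $\sin^2\theta\cdot\kappa/(1+\kappa)>0$. This is sound, and in fact can be made even more elementary: for $p'>1$ one has $P(\theta,p')\geq\tr[\Omega(\theta,p')\lsp|00\>\<00|]>\tr[\Omega_1|00\>\<00|]=P(\theta,1)$ directly, with no envelope theorem needed, since $|00\>\<00|$ already attains the maximum for $\Omega_1$. Two small remarks: your uniqueness claim for the maximizer fails at $\theta=\pi/4$ (a continuum of product states attains fidelity $1/2$ there), but as you anticipate this is harmless because the right derivative is a maximum over all maximizers, so non-uniqueness only strengthens the bound; and your explicit justification that swap-symmetrization is legitimate even though the swap lies outside $\trmU_\rmL(\caH)$ --- it fixes $|\Psi_\theta\>$, preserves $\caS_\sep$, and preserves the separable cone --- is a point the paper's appeal to \pref{pro:OptOmegaSym} glosses over, so your treatment is if anything more careful.
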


\begin{lemma}\label{lem:POmegathetap}
	Suppose $0< \theta\leq \pi/4$ and $0\leq p\leq 1$. Then
	\begin{align}\label{eq:POmegathetap}
		&P(\theta,p)=\max_{0\leq a\leq \pi/2} \tr\left[\Omega(\theta,p)\rho_a^{\otimes 2}\right]= \tr\left[\Omega(\theta,p)\rho_{a^*}^{\otimes 2}\right]
		\nonumber\\	
		&=\begin{cases}
			\cos^2\theta+\frac{p\cos\theta\sin^3\theta}{1+\cos\theta\sin\theta} & 0<\theta<\pi/4,\ q(\theta)\leq p\leq 1,\\[1ex]
			\frac{9-p}{12} & \theta=\pi/4,
		\end{cases}		
	\end{align}
	where $\rho_a=|\psi_a\>\<\psi_a|$ and $|\psi_a\>=\cos a|0\>+\sin a|1\>$;  in addition, $P(\theta,p)$ is strictly increasing in $p$ for $p\in [q(\theta),1]$.	If  $p<1$ or $\theta<\pi/4$, then the maximum over $a$ is attained iff $a=a^*$. 
	If  $\arctan(1/2) < \theta<\pi/4$, then  $P(\theta,p)$ is strictly convex in $p$ for $p\in [0,q(\theta)]$.
\end{lemma}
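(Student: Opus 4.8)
The plan is to reduce the maximization over separable states to a one‑parameter problem on the diagonal. Since $P(\theta,p)=\max_{\sigma\in\caS_\sep}\tr[\Omega(\theta,p)\sigma]$ is a linear functional maximized over a convex set, it is attained at an extreme point, i.e.\ at a pure product state $|\alpha\>\otimes|\beta\>$. In the computational basis $\Omega(\theta,p)$ is block diagonal with respect to the splitting $\{|00\>,|11\>\}\oplus\{|01\>,|10\>\}$: on the first block it acts as a real symmetric matrix $\bigl(\begin{smallmatrix}A&B\\B&C\end{smallmatrix}\bigr)$ and on the second as $M(\,|01\>\<01|+|10\>\<10|\,)$, where $A,B,C,M$ are explicit nonnegative functions of $\theta,p$. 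The crucial observation, obtained by a direct computation from \eqsref{eq:Omega0}{eq:Omegathetap}, is the identity $B=M=\kappa(1+\kappa-p\kappa)/(1+\kappa)$ with $\kappa=\cos\theta\sin\theta$. Because the second block depends only on the moduli of the amplitudes and $B\geq0$, aligning the phases never decreases the objective, so I may take $|\alpha\>=\cos a|0\>+\sin a|1\>$ and $|\beta\>=\cos b|0\>+\sin b|1\>$ with $a,b\in[0,\pi/2]$.

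Next I introduce $w=\cos a\cos b$ and $z=\sin a\sin b$; then $w+z=\cos(a-b)$, $w-z=\cos(a+b)$, the feasible set is exactly the triangle $w,z\geq0,\ w+z\leq1$, and $\tr[\Omega(\theta,p)\rho_a\otimes\rho_b]=(A-M)w^2+(C-M)z^2+2Mwz+M$. Writing $s=w+z$ and $\delta=w-z$, the identity $B=M$ collapses the coefficient of $s^2$ to $A+C>0$, so the partial derivative in $s$ equals $\tfrac12[(A+C)s+(A-C)\delta]$, which on $|\delta|\leq s$ is bounded below by $\min(A,C)\,s\geq0$ (and $\min(A,C)=C>0$ for $\theta>0$). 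Hence for each fixed $\delta$ the objective is nondecreasing in $s$, strictly so for $s>0$, and its maximum is forced onto the edge $s=1$, that is, onto the diagonal $a=b$. This yields the first equality $P(\theta,p)=\max_{a}\tr[\Omega(\theta,p)\rho_a^{\otimes2}]$ and shows off‑diagonal product states are strictly suboptimal.

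On the diagonal $w=\cos^2a$, the objective becomes the quadratic $\Lambda w^2+(4M-2C)w+C$ with $\Lambda=A+C-4M$, to be maximized over $w\in[0,1]$. Its vertex $w^*=(C-2M)/\Lambda$ translates into $\tan^2a^*=(A-2M)/(C-2M)$, and a short calculation identifies this ratio with $h(\theta,p)$ and the boundary $A=2M$ with $p=q(\theta)$. For $q(\theta)\leq p\leq1$ one has $A\geq2M$ together with $A\geq C$ (since $A-C=(\cos^2\theta-\sin^2\theta)[1-p\kappa/(1+\kappa)]\geq0$), so the maximum sits at $w=1$, i.e.\ $a^*=0$, with value $A=\cos^2\theta+p\cos\theta\sin^3\theta/(1+\kappa)$, which is affine and strictly increasing in $p$ when $\theta<\pi/4$. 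At $\theta=\pi/4$ the quadratic reduces to $(1-p)w(1-w)+C$, maximized at $w=\tfrac12$ ($a^*=\pi/4$) with value $(9-p)/12$. Uniqueness of the maximizer when $p<1$ or $\theta<\pi/4$ holds because in each subcase the quadratic is either strictly concave ($\Lambda<0$) with an interior vertex or has a strict endpoint maximum ($A>C$); the degenerate constant case occurs only at $(\theta,p)=(\pi/4,1)$, which is excluded.

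For the last claim, on $[0,q(\theta)]$ with $\arctan(1/2)<\theta<\pi/4$ one is in the strictly concave regime $\Lambda<0$, where the maximum value is $P=C-(4M-2C)^2/(4\Lambda)$. Since $C$, $4M-2C$, and $\Lambda$ are all affine in $p$, writing $4M-2C=\mu\Lambda+k$ with $\mu,k$ independent of $p$ gives $P''=-k^2\lambda_1^2/(2\Lambda^3)$ where $\lambda_1=\partial_p\Lambda$; as $\Lambda<0$ this is strictly positive once $k\lambda_1\neq0$, the nondegeneracy that $4M-2C$ and $\Lambda$ are not proportional, which I verify on this $\theta$‑range. (Conceptually, $P(\theta,\cdot)$ is convex for free, being a pointwise maximum over $a$ of functions affine in $p$, and strictness reflects that the maximizer $a^*$ moves, $h(\theta,p)$ being a nonconstant Möbius function of $p$.) I expect the main obstacle to be the diagonal reduction itself: a priori the optimal product state need not be symmetric, and it is precisely the nonobvious identity $B=M$ that renders the objective monotone in $s=w+z$ and pins the optimum to $a=b$; the remaining work—matching $h$ and $q(\theta)$, computing the two explicit values, and checking the nondegeneracy for strict convexity—is elementary but demands careful bookkeeping.
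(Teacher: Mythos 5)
Your proof is correct, and while it shares the paper's overall skeleton (reduce to pure product states, remove phases via nonnegativity of the matrix entries of $\Omega(\theta,p)$ in the computational basis, collapse to the diagonal $b=a$, then solve a one-variable problem), the two decisive steps are carried out by genuinely different means. For the diagonal reduction, the paper sets $u=(\cos^2a+\cos^2b)/2$, $v=(\cos^2a-\cos^2b)/2$ and shows $\partial_v\tr[\Omega(\theta,p)(\rho_a\otimes\rho_b)]=-2\gamma v$ with $\gamma\geq1$, forcing $v=0$; you instead isolate the identity $B=M=\kappa(1+\kappa-p\kappa)/(1+\kappa)$ (which indeed holds separately for $\Omega_0$ and $\Omega_1$, hence for every convex combination) and use it to make the objective monotone in $s=w+z$, pinning the optimum to $\cos(a-b)=1$; this makes the structural reason for the reduction more transparent. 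Your vertex analysis of the quadratic $\Lambda w^2+(4M-2C)w+C$ is then exactly equivalent to the paper's derivative computation, since $2M-A=g_1$ and $2M-C=g_2$, so $\tan^2a^*=(A-2M)/(C-2M)$ is the paper's $h=g_1/g_2$ and the boundary $A=2M$ is $p=q(\theta)$; your endpoint/vertex case split ($A\geq 2M$ puts the concave vertex at $w^*\geq1$, $A>C$ settles the convex case) correctly delivers uniqueness. The largest divergence is strict convexity: the paper argues variationally, using $\partial h/\partial p<0$ so that the unique maximizer $a^*(\theta,p)$ moves strictly with $p$, whereas you differentiate the closed-form envelope $P=C-(4M-2C)^2/(4\Lambda)$ twice, which is legitimate on all of $[0,q(\theta)]$ because there $g_1\geq0$ and $g_2>0$ give $\Lambda=-(g_1+g_2)<0$ with vertex $w^*=g_2/(g_1+g_2)\in(0,1]$. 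The one assertion you leave unverified, the nondegeneracy $k\lambda_1\neq0$, does hold: writing $4M-2C=N_0+N_1p$ and $\Lambda=\Lambda_0+\Lambda_1p$, one finds
\begin{equation*}
N_0\Lambda_1-N_1\Lambda_0=\frac{2\kappa\cos(2\theta)}{1+\kappa}>0,\qquad \lambda_1=\Lambda_1=\frac{\kappa(1+4\kappa)}{1+\kappa}>0 \quad\text{for } 0<\theta<\frac{\pi}{4},
\end{equation*}
so $P''=-k^2\lambda_1^2/(2\Lambda^3)>0$ and your argument closes. In exchange for this extra determinant check, your route yields an explicit formula for $P(\theta,p)$ on $[0,q(\theta)]$, which the paper's variational argument does not provide.
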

The dependence of $P(\theta,p)$ on $\theta$ and $p$ is illustrated in \fref{fig:SepProbthetap}. 
As a simple corollary of \eref{eq:thetaFun} and \lref{lem:POmegathetap},  the separation probability $P_{\Omega_0}(\Psi_\theta)=P(\theta,0)$ reads
\begin{equation}\label{eq:POmega2}
\!\!	P_{\Omega_0}(\Psi_\theta)=
	\begin{cases}
		\cos^2\theta &\! 0<\theta\leq\arctan(1/2), \\
		\frac{3\cos^2\theta\sin^2\theta}{4\cos\theta\sin\theta-1} &\!\arctan(1/2)<\theta\leq\pi/4,
	\end{cases}
\end{equation}
as illustrated in \fref{fig:2q}.
By virtue of \lsref{lem:OptStratforECin2qPS} and~\ref{lem:POmegathetap}, we can further determine the separation probability $P_\sep(\Psi_\theta)$ and identify an optimal separable verification operator. Let $\theta^*$ be the unique root of the following equation:
\begin{equation}\label{eq:theta*}
	17 - 9 \cos(4 \theta) - 25 \sin(2 \theta) + 3\sin(6 \theta)=0,\quad \theta\in [0,\pi/4];
\end{equation}
note that $\theta^*>\arctan(1/2)$ and $\theta^*\approx 0.51095$.

\begin{figure}[t]
	\centering
	\includegraphics[scale=0.318]{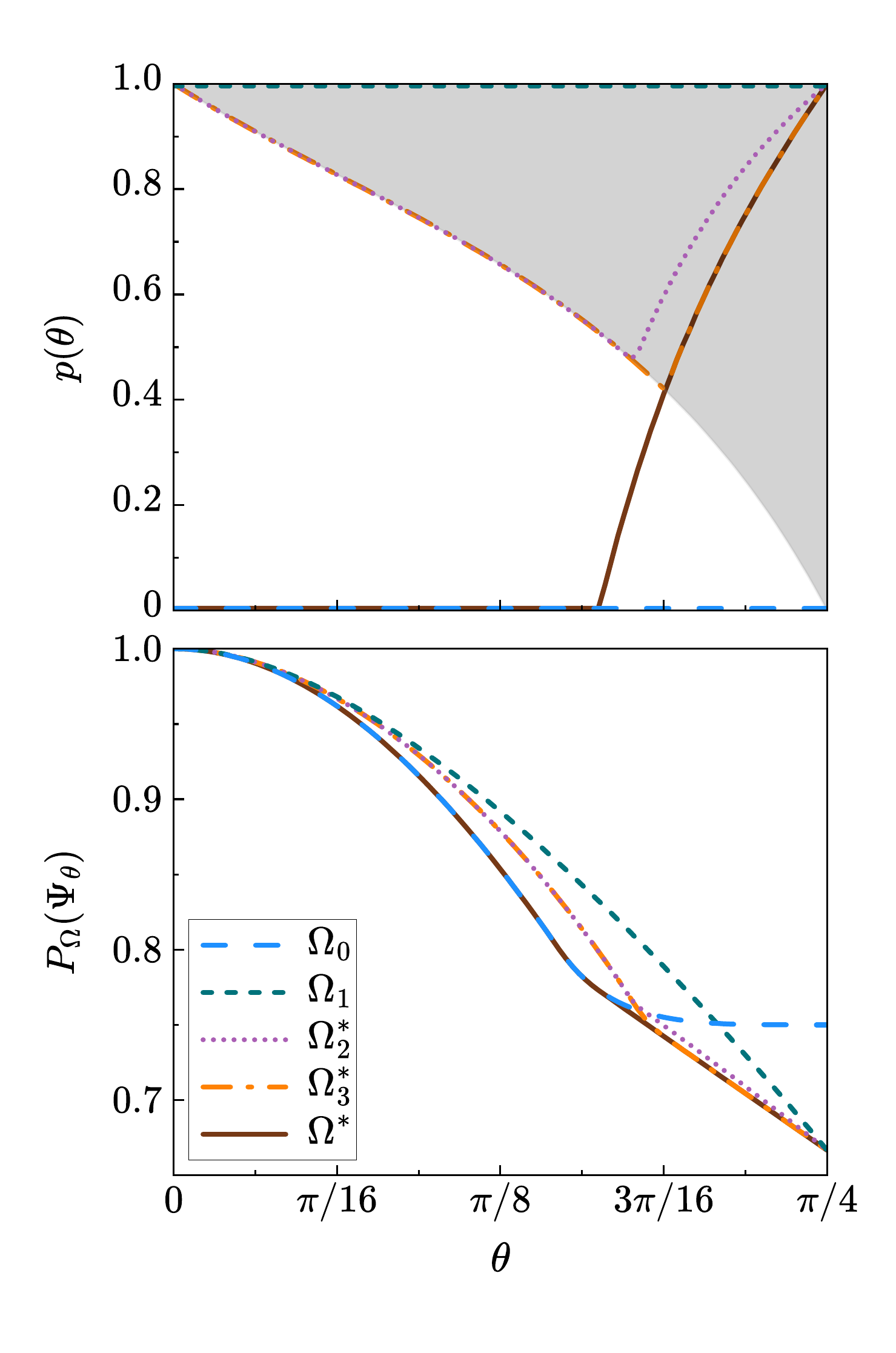}
	\caption{Several verification strategies $\Omega(\theta, p(\theta))$ of $|\Psi_\theta\>$ encoded by $\theta$ and $p(\theta)$  according to \eref{eq:Omegathetap} (upper plot) and their separation probabilities (lower plot). Here $\Omega^*$, $\Omega_2^*$, and $\Omega_3^*$ are shorthands of $\Omega(\theta,p^*(\theta))$, $\Omega(\theta,p_2^*(\theta))$, and $\Omega(\theta,p_3^*(\theta))$, respectively. Verification stategies associated with the shaded region in the upper plot can be realized by LOCC.}
	\label{fig:2q}
\end{figure}

\begin{lemma}\label{lem:POmegathetap2}
	Suppose $0< \theta\leq \pi/4$ and $0\leq p\leq 1$. Then the separation probability $P(\theta,p)$ is convex in $p$ and has a unique minimizer $p^*(\theta)$, which is contained in $[0,q(\theta)]$. The  probability $P(\theta,p)$ is strictly decreasing (increasing) in $p$ for $p\in [0,p^*(\theta)]$ $(p\in [p^*(\theta),1])$. In addition,
	\begin{equation}\label{eq:p*}
		p^*(\theta)=\begin{cases}
			0& 0<\theta\leq\theta^*,\\[0.5ex]			
			1 &\theta=\pi/4. 
		\end{cases}
	\end{equation}	
	If $\theta^*<\theta<\pi/4$, then $p^*(\theta)$ coincides with the unique zero  of the partial derivative $\partial P(\theta,p)/\partial p$ over $p \in (0,q(\theta))$.  	 
\end{lemma}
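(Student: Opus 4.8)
The plan is to build everything on the variational formula $P(\theta,p)=\max_{0\le a\le\pi/2}\tr\!\left[\Omega(\theta,p)\rho_a^{\otimes2}\right]$ from \lref{lem:POmegathetap}. Since $\Omega(\theta,p)=p\,\Omega_1+(1-p)\,\Omega_0$ is affine in $p$, for every fixed $a$ the map $p\mapsto\tr[\Omega(\theta,p)\rho_a^{\otimes2}]$ is affine, and a pointwise maximum of affine functions is convex; hence $P(\theta,p)$ is convex in $p$ on $[0,1]$. Because \lref{lem:POmegathetap} already shows $P(\theta,p)$ is strictly increasing on $[q(\theta),1]$, any minimizer must lie in $[0,q(\theta)]$, which establishes the asserted containment.

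Next I would differentiate using the envelope (Danskin) theorem. The maximizer $a^*(\theta,p)$ is unique for $p<1$ or $\theta<\pi/4$ by \lref{lem:POmegathetap}, so only the explicit $p$-dependence of $\Omega(\theta,p)$ contributes and $\partial P/\partial p=\tr[(\Omega_1-\Omega_0)\rho_{a^*}^{\otimes2}]$. Writing $\Omega_1-\Omega_0=\frac{\kappa}{1+\kappa}\bigl[\,|\Psi_\theta^\perp\>\<\Psi_\theta^\perp|-\kappa(|01\>\<01|+|10\>\<10|)\,\bigr]$ with $|\Psi_\theta^\perp\>=-\sin\theta|00\>+\cos\theta|11\>$ and $\kappa=\cos\theta\sin\theta$, a direct evaluation gives $\partial P/\partial p=\frac{\kappa}{1+\kappa}\bigl[(\cos\theta\sin^2a^*-\sin\theta\cos^2a^*)^2-2\kappa\cos^2a^*\sin^2a^*\bigr]$. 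At $p=q(\theta)$ one has $a^*=0$ (indeed $h(\theta,q(\theta))=0$), so this reduces to $\kappa\sin^2\theta/(1+\kappa)=\cos\theta\sin^3\theta/(1+\cos\theta\sin\theta)>0$, which matches the constant slope of $P$ on $[q(\theta),1]$ read off from \lref{lem:POmegathetap} and confirms $P\in C^1$ across $q(\theta)$.

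The three regimes are then immediate. For $0<\theta\le\arctan(1/2)$ we have $q(\theta)=0$, so $P$ is strictly increasing on all of $[0,1]$ and $p^*=0$; for $\theta=\pi/4$ the explicit form $P=(9-p)/12$ is strictly decreasing, so $p^*=1$. In the main case $\arctan(1/2)<\theta<\pi/4$, \lref{lem:POmegathetap} gives strict convexity of $P$ on $[0,q(\theta)]$, so $\partial P/\partial p$ is strictly increasing there, rising from $\phi(\theta):=\partial P/\partial p|_{p=0}$ to the positive value $\cos\theta\sin^3\theta/(1+\cos\theta\sin\theta)$ at $p=q(\theta)$. If $\phi(\theta)\ge0$ then $P$ is strictly increasing throughout and $p^*=0$; if $\phi(\theta)<0$ then $\partial P/\partial p$ passes continuously from a negative value to a positive one and has a unique zero in $(0,q(\theta))$, which is the unique minimizer $p^*$. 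In either subcase the strict monotonicity of $P$ on $[0,p^*]$ and on $[p^*,1]$ follows at once, giving the stated monotonicity and uniqueness.

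The crux is to locate the threshold. Substituting $a^*(\theta,0)=\arctan\sqrt{h(\theta,0)}$, for which $\tan^2a^*=h(\theta,0)=(2\kappa-\cos^2\theta)/(2\kappa-\sin^2\theta)$, the sign of $\phi(\theta)$ equals that of the quadratic form $\cos^2\theta\,A^2-4\cos\theta\sin\theta\,AB+\sin^2\theta\,B^2$ with $A=2\kappa-\cos^2\theta$ and $B=2\kappa-\sin^2\theta$ (both positive for $\theta>\arctan(1/2)$). Setting $s=\sin2\theta$ and simplifying reduces this form to $-\frac14(6s^3-9s^2+8s-4)$, so $\phi(\theta)$ and $6s^3-9s^2+8s-4$ carry opposite signs and $\phi(\theta)=0$ is exactly \eqref{eq:theta*}. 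Since $18s^2-18s+8$ has negative discriminant, the cubic is strictly increasing with a single real root $s^*\in(4/5,1)$; as $s=\sin2\theta$ increases on $(\arctan(1/2),\pi/4)$, this root corresponds to a unique $\theta^*$, with $\phi(\theta)\ge0$ for $\theta\le\theta^*$ and $\phi(\theta)<0$ for $\theta>\theta^*$. This yields $p^*=0$ on $(0,\theta^*]$ and identifies $p^*$ with the interior zero of $\partial P/\partial p$ for $\theta^*<\theta<\pi/4$. The main obstacle is precisely this algebraic reduction of $\phi(\theta)$ to the cubic in $s=\sin2\theta$ together with verifying its monotonicity, which is what makes $\theta^*$—and hence the minimizer—unique.
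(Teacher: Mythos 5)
Your proposal is correct, and it follows the same overall skeleton as the paper's proof: convexity of $P(\theta,p)$ as a pointwise maximum of functions affine in $p$, the three-regime case analysis driven by \lref{lem:POmegathetap}, and the sign of $\partial P/\partial p$ at $p=0$ as the criterion deciding between $p^*(\theta)=0$ and an interior minimizer. The differences lie in how the two key technical ingredients are obtained, and here your treatment is actually more self-contained. First, you compute the derivative by the envelope (Danskin) theorem, $\partial P/\partial p=\tr\bigl[(\Omega_1-\Omega_0)\rho_{a^*}^{\otimes2}\bigr]$, exploiting the uniqueness of the maximizer $a^*(\theta,p)$ from \lref{lem:POmegathetap}, whereas the paper simply displays the closed-form expression for $P_p(\theta,0)$ without derivation; your formula reproduces it exactly, since $\phi(\theta)=sQ/\bigl[(2+s)(2s-1)^2\bigr]$ with $Q=-\frac{1}{4}(6s^3-9s^2+8s-4)$ and $s=\sin2\theta$ coincides with the paper's displayed $P_p(\theta,0)$, using $17-9\cos4\theta-25\sin2\theta+3\sin6\theta=-2(6s^3-9s^2+8s-4)$. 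Second, and more substantively, you reduce the threshold condition to the cubic $6s^3-9s^2+8s-4$, note that its derivative $18s^2-18s+8$ has negative discriminant so the cubic is strictly increasing, and locate its unique root in $(4/5,1)$. This simultaneously establishes that \eref{eq:theta*} has a unique root with $\theta^*>\arctan(1/2)$ and that $P_p(\theta,0)\geq0$ precisely for $\theta\leq\theta^*$ — facts the paper's proof uses but only asserts (the uniqueness of $\theta^*$ is claimed in the main text where it is defined, not proved). What each approach buys: the paper's version is shorter because it takes the explicit derivative and the properties of $\theta^*$ as given; yours costs the extra algebra of the cubic reduction but closes the logical gap concerning the sign pattern and uniqueness of the threshold, which is exactly the step you correctly identify as the crux.
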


As a simple corollary of \lsref{lem:OptStratforECin2qPS}-\ref{lem:POmegathetap2} and \eref{eq:POmega2},  the following theorem determines the separation probability $P_\sep(\Psi_\theta)$. The variations 
of $p^*(\theta)$ and   $P_\sep(\Psi_\theta)$ are  illustrated in \fref{fig:2q} together with several related quantities.

\begin{theorem}\label{thm:OptStrat&SepProb2qPS}
	Suppose $0<\theta\leq\pi/4$. Then $\Omega(\theta,p^*(\theta))$ is an optimal separable strategy 
	for certifying the entanglement of $|\Psi_\theta\>$, that is, 
	$P_\sep(\Psi_\theta)=P(\theta, p^*(\theta))$. In addition,	
	\begin{equation}\label{eq:SepProb2qPS12}
	\!\!	P_\sep(\Psi_\theta)=\begin{cases}
			\cos^2\theta &0<\theta\leq\arctan(1/2), \\[0.5ex]
			\frac{3\cos^2\theta\sin^2\theta}{4\cos\theta\sin\theta-1} &\arctan(1/2)<\theta\leq\theta^*,\\[0.5ex]			
			2/3 &\theta=\pi/4.
		\end{cases}
	\end{equation}
\end{theorem}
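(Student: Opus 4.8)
The plan is to assemble the theorem directly from the three preceding lemmas, treating it as the final bookkeeping step rather than a fresh computation. By \lref{lem:OptStratforECin2qPS}, the separable separation probability equals $\min_{0\leq p\leq 1}P(\theta,p)$, achieved by some $\Omega(\theta,p)$; by \lref{lem:POmegathetap2}, the minimizer is the unique value $p^*(\theta)\in[0,q(\theta)]$. Hence $\Omega(\theta,p^*(\theta))$ is an optimal separable strategy and $P_\sep(\Psi_\theta)=P(\theta,p^*(\theta))$, which establishes the first assertion immediately. All that remains is to evaluate the right-hand side in the three regimes of \eref{eq:SepProb2qPS12} by substituting the explicit $p^*(\theta)$ from \eref{eq:p*} into the formula for $P(\theta,p)$.

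For $0<\theta\leq\theta^*$ I would use $p^*(\theta)=0$ from \eref{eq:p*}, so $P_\sep(\Psi_\theta)=P(\theta,0)=P_{\Omega_0}(\Psi_\theta)$, which is exactly the quantity already computed in \eref{eq:POmega2}. This reproduces both the $\cos^2\theta$ branch on $(0,\arctan(1/2)]$ and the $3\cos^2\theta\sin^2\theta/(4\cos\theta\sin\theta-1)$ branch on $(\arctan(1/2),\theta^*]$ with no further work, since the split at $\arctan(1/2)$ is inherited verbatim from the two cases of \eref{eq:POmega2}. For $\theta=\pi/4$ I would use $p^*(\pi/4)=1$, and read off from the $\theta=\pi/4$ line of \eref{eq:POmegathetap} in \lref{lem:POmegathetap} that $P(\pi/4,p)=(9-p)/12$, giving $P(\pi/4,1)=8/12=2/3$. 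These substitutions together yield \eref{eq:SepProb2qPS12}.

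The one genuinely delicate point is the intermediate regime $\theta^*<\theta<\pi/4$, where \eref{eq:p*} does not supply a closed form for $p^*(\theta)$; instead \lref{lem:POmegathetap2} only characterizes it implicitly as the unique zero of $\partial P(\theta,p)/\partial p$ on $(0,q(\theta))$. Correspondingly, \eref{eq:SepProb2qPS12} lists no explicit formula on this open interval — the theorem statement is silent there, reporting only the three closed-form regimes. I therefore do not attempt a closed expression for $P_\sep(\Psi_\theta)$ on $(\theta^*,\pi/4)$; the convexity and strict monotonicity established in \lref{lem:POmegathetap2} already guarantee that the implicitly defined minimizer exists and is unique, which is all the theorem asserts. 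The main obstacle, then, is not any hard estimate but rather verifying that the boundary cases match up consistently — in particular checking continuity of $P_\sep(\Psi_\theta)$ at $\theta=\theta^*$, where $p^*$ transitions from $0$ to a positive interior value, and confirming that $\theta^*$ as defined by \eref{eq:theta*} is precisely the threshold at which the interior minimizer detaches from the boundary $p=0$. This amounts to showing that $\partial P(\theta,p)/\partial p\big|_{p=0}$ changes sign exactly at $\theta^*$, which should follow from differentiating the convex branch in \lref{lem:POmegathetap} and identifying the resulting condition with \eref{eq:theta*}.
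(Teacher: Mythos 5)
Your proposal is correct and takes essentially the same route as the paper, which presents \thref{thm:OptStrat&SepProb2qPS} precisely as a corollary of \lsref{lem:OptStratforECin2qPS}--\ref{lem:POmegathetap2} and \eref{eq:POmega2}: \lref{lem:OptStratforECin2qPS} and \lref{lem:POmegathetap2} give $P_\sep(\Psi_\theta)=P(\theta,p^*(\theta))$, and the three branches of \eref{eq:SepProb2qPS12} follow by substituting $p^*(\theta)=0$ into \eref{eq:POmega2} for $0<\theta\leq\theta^*$ and $p^*(\pi/4)=1$ into the $\theta=\pi/4$ line of \lref{lem:POmegathetap}. Your closing concern about verifying that $\partial P(\theta,p)/\partial p$ at $p=0$ changes sign exactly at $\theta^*$ is already settled inside \lref{lem:POmegathetap2} itself [its \eref{eq:p*} and proof], so no additional boundary check is required in the theorem's proof.
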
 
Thanks to \lref{lem:POmegathetap2} and \thref{thm:OptStrat&SepProb2qPS}, the optimal separable strategy for certifying the entanglement of $|\Psi_\theta\>$ is unique if we focus on strategies of the form $\Omega(\theta,p)$. Notably,
$\Omega_0$ is optimal when $0<\theta\leq\theta^*$, while $\Omega_1$ is optimal only when $\theta=\pi/4$, although it is an optimal separable strategy  for  QSV whenever $0<\theta\leq \pi/4$. In general, the optimal separable strategy for entanglement certification and the counterpart for QSV are different except when $\theta=\pi/4$.
Moreover, when $0<\theta\leq\arctan(1/2)$, the separable strategy $\Omega_0$ even achieves the same separation probability as the globally optimal strategy $\Omega=|\Psi_\theta\>\<\Psi_\theta|$, which is nonseparable. This result is in sharp contrast with the counterpart in QSV, in which the maximum spectral gap cannot be attained by separable operations for any entangled pure state.

In the rest of this section, we turn to certification strategies for $|\Psi_\theta\>$ based on LOCC. Now, the problem is much more tricky because it is not clear what strategies of the form $\Omega(\theta,p)$ can be realized by LOCC. Nevertheless,  we can  pinpoint a specific parameter region in which  this is the case. Note that, if $\Omega(\theta,p)$ can be realized by LOCC, then $\Omega(\theta,p')$ for $p'\in [p,1]$ can also  be realized by LOCC given that $\Omega(\theta,1)$ can be realized by LOCC according to Wang and Hayashi \cite{WangH19}. In addition, we can prove the following proposition  as shown in \aref{app:OmegaWH}.  
\begin{proposition}\label{pro:OmegathetapLC}
	Suppose $0<\theta\leq \pi/4$ and $\tp(\theta)\leq p\leq 1$ with 
	\begin{equation}\label{eq:tptheta}
		\tp(\theta):=	1-\frac{\tan\theta}{\cos^2\theta+\cos\theta\sin\theta}.
	\end{equation}	
	Then $\Omega(\theta,p)$ can be realized by LOCC.
\end{proposition}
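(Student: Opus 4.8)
The plan is to reduce the claim to a single boundary case and then build an explicit LOCC protocol there. By \eref{eq:Omegathetap} the map $p\mapsto\Omega(\theta,p)$ is affine, so for any $\tp(\theta)\le p\le p'\le 1$ one has $\Omega(\theta,p')=\alpha\Omega(\theta,p)+(1-\alpha)\Omega(\theta,1)$ with $\alpha=(1-p')/(1-p)\in[0,1]$. Since $\Omega(\theta,1)=\Omega_1$ is realizable by LOCC by Wang and Hayashi \cite{WangH19}, any LOCC protocol for $\Omega(\theta,p)$ upgrades to one for every $\Omega(\theta,p')$ with $p'\ge p$ by flipping a biased coin to decide which of the two strategies to run. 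Hence it suffices to exhibit a single LOCC realization at the threshold $p=\tp(\theta)$, and the whole range $[\tp(\theta),1]$ then follows. This is exactly the monotonicity remark preceding the statement.

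To organize the search I would first exploit the symmetry. Since $|\Psi_\theta\>$ is invariant under swap, complex conjugation, and all $V_\zeta\otimes V_\zeta^*$, \pref{pro:OptOmegaSym} lets me restrict to $G$-invariant verification operators, which are precisely those of the form $\Omega=|\Psi_\theta\>\<\Psi_\theta|+\lambda|\Psi_\theta^\perp\>\<\Psi_\theta^\perp|+\mu(|01\>\<01|+|10\>\<10|)$ with $|\Psi_\theta^\perp\>=-\sin\theta|00\>+\cos\theta|11\>$. In these coordinates the family $\Omega(\theta,p)$ is the line segment $\mu=\kappa(1-\lambda)$ (with $\kappa=\cos\theta\sin\theta$) from $\Omega_0=(\lambda,\mu)=(0,\kappa)$ to $\Omega_1=(\kappa/(1+\kappa),\kappa/(1+\kappa))$. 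A short computation shows that the natural one-way tests, in which Alice measures $\{|u_a\>,|u_a^\perp\>\}$ and Bob answers in the induced conditional basis, all symmetrize onto the single line $\lambda+2\mu=1$, which for $\theta<\pi/4$ lies strictly above the target segment. Thus no one-way protocol, nor any mixture of $\Omega_1$ with one-way tests, can reach $\Omega(\theta,p)$ for $p<1$, and a genuinely two-way protocol is unavoidable.

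The substance of the proof is therefore the construction of such a two-way strategy, which I would obtain by extending the Wang--Hayashi realization of $\Omega_1$. The key enabling observation is that on the \emph{fail} branch of a conditional test the joint state collapses to a known product state orthogonal to $|\Psi_\theta\>$; further local measurements with feedback on these branches are free under LOCC and do not disturb the condition $\Omega|\Psi_\theta\>=|\Psi_\theta\>$, since they can only add weight inside the subspace $\{|\Psi_\theta^\perp\>,|01\>,|10\>\}$. I would introduce into the protocol a tunable parameter governing a non-projective first measurement together with a feedback-conditioned final acceptance, yielding a one-parameter family of LOCC strategies that pushes $\lambda+2\mu$ below the one-way floor. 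Driving the parameter to the extreme permitted by positivity of every conditional POVM element and by keeping all acceptance probabilities in $[0,1]$, I would compute the resulting $(\lambda,\mu)$ and check that it equals $\Omega(\theta,\tp(\theta))$, with $\tp(\theta)=1-\tan\theta/(\cos^2\theta+\cos\theta\sin\theta)$ arising precisely as the value at which this extremal constraint is first met.

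The main obstacle is this last construction together with its verification. Proving membership in LOCC is the genuinely hard direction: unlike separability of $\Omega_0$ and $\Omega_1$, which is settled by the PPT criterion, LOCC-realizability demands an explicit adaptive protocol that is shown to yield exactly the target operator. I expect the delicate points to be arranging the two-way protocol so that its extremal member lands on the target line rather than merely inside the reachable region, and confirming that the positivity and probability constraints bind exactly at $\tp(\theta)$; pinning down this sufficient threshold is where the detailed computation of \aref{app:OmegaWH} is required.
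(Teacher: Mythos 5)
Your reduction step is fine (and matches the remark the paper makes just before the proposition): since $p\mapsto\Omega(\theta,p)$ is affine and $\Omega(\theta,1)=\Omega_1$ is LOCC-realizable by \rcite{WangH19}, realizability at any single $p$ propagates to all $p'\in[p,1]$ by coin-flipping. But everything after that is a plan, not a proof. The substantive content of the proposition --- an explicit LOCC protocol whose verification operator equals $\Omega(\theta,p)$ for some $p<1$, together with the algebraic check and the emergence of the threshold $\tp(\theta)$ --- is exactly what you defer to ``I would introduce\dots I would compute\dots and check.'' You never specify the measurement elements, the feedback rule, or the acceptance conditions, so no protocol is exhibited and nothing is verified; the proposition remains unproved.

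Moreover, the intuition guiding your sketch points away from the actual solution. The paper does not need any ``genuinely two-way'' extension with feedback-conditioned acceptance: it observes that Wang and Hayashi already provide a \emph{two-parameter} LOCC family $\Omega_{\mathrm{WH}}(\theta,\eta,p')$, built by mixing four adaptive one-way tests $T_1^{A\to B},T_2^{A\to B},T_1^{B\to A},T_2^{B\to A}$ (non-projective POVMs on the measuring side, conditional bases on the other) with the computational-basis test $T_3$. Choosing $\eta=1-\tan\theta$ (rather than the QSV-optimal value that yields $\Omega_1$) and $p'$ affine in $p$, one verifies the identity $\Omega(\theta,p)=\Omega_{\mathrm{WH}}(\theta,\eta,p')$; the constraint $p'\geq 0$ holds precisely when $p\geq\tp(\theta)$, which is where the threshold comes from. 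This also falsifies your intermediate claim that no mixture of one-way tests can reach $\Omega(\theta,p)$ for $p<1$: your symmetrization argument covers only \emph{projective} one-way tests, and the paper's realization is exactly a mixture of non-projective, adaptive one-way tests. So the gap is twofold: the core construction is missing, and the structural claim you use to motivate its shape is incorrect.
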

Note that $\tp(\theta)$ is strictly decreasing in $\theta$, $\tp(\pi/4)=0$, and $\lim_{\theta\to 0} \tp(\theta)=1$  as illustrated in \fref{fig:2q}.  Thanks to \pref{pro:OmegathetapLC}, we can construct two efficient  strategies for certifying the entanglement of $|\Psi_\theta\>$ based on LOCC. Let  $\theta_2^*=\arctan[(\sqrt{5}-1)/2]$ and define
\begin{align}
	p_2^*(\theta)&:=\max \{\tp(\theta),q(\theta)\}=\begin{cases}
		\tp(\theta) & 0< \theta\leq \theta_2^*,\\
		q(\theta)  & \theta_2^*< \theta\leq 1,
	\end{cases}\\
	p_3^*(\theta)&:=\max \{\tp(\theta),p^*(\theta)\}.
\end{align}
Then 
\begin{align}
	0\leq p^*(\theta)\leq 	p_3^*(\theta)\leq 	p_2^*(\theta)\leq 1,
\end{align}
given that $0\leq \tp(\theta)\leq 1$ and $0\leq p^*(\theta)\leq q(\theta)\leq1$. 
So the corresponding  strategies $\Omega(\theta, p_2^*(\theta))$ and $\Omega(\theta, p_3^*(\theta))$ can be realized by LOCC. By virtue of \lref{lem:POmegathetap} we can derive an explicit formula for the separation probability of $\Omega(\theta, p_2^*(\theta))$:
\begin{align}
	&P(\theta, p_2^*(\theta))=\cos^2\theta+\frac{p_2^*\cos\theta\sin^3\theta}{1+\cos\theta\sin\theta}\nonumber\\
	&=\begin{cases}
	\cos^2\theta+\frac{(1-\tan\theta)\tan\theta\sin^2\theta}{1+\tan\theta} & 0<\theta\leq \theta_2^*,\\[1ex]
	\cos^2\theta+\frac{(2\tan\theta-1)\sin^2\theta}{(\tan\theta+2)\tan\theta} & \theta_2^*<\theta\leq \pi/4.
	\end{cases}
\end{align}
In conjunction with \lref{lem:POmegathetap2} we can deduce that
\begin{align}
	&P_\sep(\Psi_\theta)= P(\theta, p^*(\theta))\leq P(\theta, p_3^*(\theta))\leq P(\theta, p_2^*(\theta))\nonumber\\
	&\leq P(\theta, 1)= P_{\Omega_1}(\Psi_\theta)
	=\frac{\cos^2 \theta+\cos\theta\sin\theta}{1+\cos\theta\sin\theta}.
\end{align} 
This equation means the strategy $\Omega(\theta, p_3^*(\theta))$ is no worse than $\Omega(\theta, p_2^*(\theta))$, which in turn is no worse than $\Omega_1$, although $\Omega_1$ is the optimal local strategy for QSV. Actually, both strategies $\Omega(\theta, p_2^*(\theta))$ and $\Omega(\theta, p_3^*(\theta))$ are nearly optimal as illustrated in \fref{fig:2q}. This observation also shows that  the optimal local strategy for entanglement certification is in general different from the counterpart for QSV except when $\theta=\pi/4$, which echoes the result on separable operations as mentioned above.

Finally, numerical calculation shows that $\tp(\theta)\leq p^*(\theta)$, which means $p_3^*(\theta)=p^*(\theta)$ when $\theta_3^*\leq \theta\leq \pi/4$ for some threshold $\theta_3^*\approx 0.59079$. In this case, the concurrence of the target state $|\Psi_\theta\>$ satisfies
\begin{equation}
	C(\Psi_\theta)=2\sin\theta\cos\theta\geq 2\sin\theta_3^*\cos\theta_3^*\approx 0.92521.
\end{equation}
So the optimal separable entanglement certification strategy $\Omega(\theta,p^*(\theta))$ for $|\Psi_\theta\>$ can be realized by LOCC when $\theta_3^*\leq \theta\leq \pi/4$, that is, $C(\Psi_\theta)\geq 2\sin\theta_3^*\cos\theta_3^*$, which means the target state $|\Psi_\theta\>$ has sufficiently high entanglement.  

\section{Summary}\label{sec:Summary}

Inspired by QSV, we proposed a simple and general framework for certifying HDE under restricted operations and introduced the concept of separation probabilities.  As concrete examples, separation probabilities associated with the sets $\caS_\sep$, $\caS_r$, and $\caS_{\caE_r}(E)$ were discussed in detail. On this basis, we showed that HDE in general bipartite pure states, including the maximally entangled states in particular,  can be certified efficiently using LOCC. Notably, the sample cost for certifying a given degree of HDE even decreases monotonically with the local dimensions. For a general two-qubit pure state, we constructed an optimal entanglement certification strategy based on separable operations and clarified the properties of the separation  probabilities. In addition, we showed that this optimal strategy can be realized by LOCC when the target state has sufficiently high entanglement. Our work offers a new perspective on and a practical approach for HDE certification. The basic idea may also find applications in certifying many other important resources, such as coherence and nonstabilizerness, which deserves further studies.

\section*{Acknowledgment}
This work is supported by Shanghai Science and Technology Innovation Action Plan (Grant No.~24LZ1400200), Shanghai Municipal Science and Technology Major Project (Grant No.~2019SHZDZX01), and the National Key Research and Development Program of China (Grant No.~2022YFA1404204).

\bibliography{references.bib}

\clearpage

\onecolumngrid

\tableofcontents

\bigskip

In this appendix, we prove the results presented in the main text, including \psref{pro:ErLUB}-\ref{pro:ErLip}, \ref{pro:SpectralGapDimInd}-\ref{pro:QSVopt}, \ref{pro:SepProbDimInd} and \ref{pro:OmegathetapLC}, \thsref{thm:SepProb&Major}-\ref{thm:ConcSepProbforSNCinGBPS}, and \Lsref{lem:OptStratforECin2qPS}-\ref{lem:POmegathetap2}. Following the notation in the main text, $\caH_{AB}=\caH_A\otimes \caH_B$ denotes a bipartite Hilbert space with local dimensions $d=d_A, d_B$ ($d_A\leq d_B$) and total dimension $D=d_Ad_B=dd_B$. In addition, $|\Phi\>$  and $|\Psi_\theta\>$ denote the maximally entangled state defined in \eref{eq:MES} and the two-qubit pure state defined in \eref{eq:2qPS}, respectively, while $|\Psi\>$ denotes a general bipartite pure state.

\clearpage

\appendix

\section{\label{app:ErProofs}Proofs of \psref{pro:ErLUB}-\ref{pro:ErLip}}

\begin{proof}[Proof of \pref{pro:ErLUB}]
	First, suppose $\sigma=|\Psi\>\<\Psi|$ is a pure state with Schmidt spectrum $\{s_j\}_{j=0}^{d-1}$, which is arranged in nonincreasing order, that is, $s_0\geq s_1\geq \cdots\geq s_{d-1}$. Then $\caE_r(\sigma)=\caE_r(\Psi)=\sum_{j=r}^{d-1}s_j$ satisfies the inequalities in \eref{eq:ErLUB}, that is, $0\leq \caE_r(\sigma)\leq(d-r)/d$. The lower bound is saturated iff $s_r= s_{r+1}= \cdots= s_{d-1}=0$, that is, $\SN(\sigma)\leq r$; the upper bound is saturated iff $s_0= s_1= \cdots= s_{d-1}=1/d$, in which case $|\Psi\>$ is maximally entangled. 
	
	Next, we turn to a general density operator $\sigma$ in $\caD(\caH_{AB})$. Now, \eref{eq:ErLUB} still holds because $\caE_r$ is defined via the convex-roof construction. If $\SN(\sigma)\leq r$, then $\sigma$ can be expressed as a convex sum of states in $\caS_r$, which means $\caE_r(\sigma)=0$. Conversely, if  $\caE_r(\sigma)=0$, then $\sigma$ can  be expressed as a convex sum of states in $\caS_r$, which means  $\SN(\sigma)\leq r$. If $\sigma$ is maximally entangled, then every pure state $|\Upsilon\>$ in its support is maximally entangled, that is, $\caE_r(\Upsilon)=(d-r)/d$. Therefore, $\caE_r(\sigma)=(d-r)/d$ and the upper bound in \eref{eq:ErLUB} is saturated. Conversely, if $\caE_r(\sigma)=(d-r)/d$, then  every pure state $|\Upsilon\>$ in the support of $\sigma$ satisfies $\caE_r(\Upsilon)=(d-r)/d$ and is thus maximally entangled.  Therefore, $\sigma$ is also  maximally entangled. 
\end{proof}

\begin{proof}[Proof of \pref{pro:FidelityUB}]
	Expand $|\Psi\>$ and $|\Upsilon\>$ in the computational basis:
	\begin{align}
		|\Psi\>=\sum_{j=0}^{d-1}\sum_{k=0}^{d_B-1} A_{jk}|jk\>,\quad |\Upsilon\>=\sum_{j=0}^{d-1}\sum_{k=0}^{d_B-1} B_{jk}|jk\>,
	\end{align}
	where $A$ and $B$ are the coefficient matrices. 
	Then $\{s_j\}_{j=0}^{d-1}$ and $\{t_j\}_{j=0}^{d-1}$ are the  singular value spectra of $A$ and $B$, respectively. In addition, by virtue of von Neumann's trace theorem in matrix analysis \cite{HornMA85} we can deduce that
	\begin{align}
		|\< \Psi|\Upsilon \>|=\left|\tr\left(A^\dag B\right)\right|\leq \sum_{j=0}^{d-1}\sqrt{s_j t_j},
	\end{align}
which  confirms the inequality in \eref{eq:FidelityUB}. In addition, this inequality is indeed saturated when $|\Psi\>=\sum_{j=0}^{d-1}\sqrt{s_j}\lsp |jj\>$ and $|\Upsilon\>=\sum_{j=0}^{d-1}\sqrt{t_j}\lsp |jj\>$, which completes the proof of \pref{pro:FidelityUB}.
\end{proof}

\begin{proof}[Proof of \pref{pro:FPsiSk}]
	The first equality in \eref{eq:FPsiSk} holds because $\caS_r$ is the convex hull of $\tcaS_r$. By virtue of \pref{pro:FidelityUB}  the second equality in \eref{eq:FPsiSk} can be proved as follows:
	\begin{align}
		F\left(\Psi,\tcaS_r\right)=\max_{\Upsilon\in \tcaS_r}|\<\Psi|\Upsilon\>|^2
		=\max_{t_j\geq 0,\, \sum_{j=0}^{r-1} t_j=1}\left(\sum_{j=0}^{r-1}\sqrt{s_j t_j}\right)^2=\sum_{j=0}^{r-1} s_j=1-\caE_r(\Psi).
	\end{align}

	Next, we turn to \eref{eq:FPsiSkE}. 
	If $E\geq E'=\caE_r(\Psi)=\sum_{j=r}^{d-1}s_j$, then $\Psi\in \tser(E)$ and  $F(\Psi,\tser(E))=1$. If instead $0\leq E<E'$, then 
	\begin{align}
		F\left(\Psi,\tser(E)\right)=\max_{\Upsilon\in \tser(E)}|\<\Psi|\Upsilon\>|^2
		=\max_{t_j\geq 0,\, \sum_{j=r}^{d-1} t_j\leq E}\left(\sum_{j=0}^{d-1}\sqrt{s_j t_j}\right)^2=\bigl[\sqrt{E' E}+\sqrt{(1-E')(1-E)}\lsp\bigr]^2.
	\end{align}	
	In both cases, the second equality in \eref{eq:FPsiSkE} holds. Now it is straightforward to verify that $F(\Psi,\tser(E))$ is nondecreasing and concave in $E$. Note that the monotonicity of $F(\Psi,\tser(E))$ also holds by definition. 
	
	To prove the first equality in \eref{eq:FPsiSkE}, suppose $F(\Psi,\ser(E))=\<\Psi|\sigma|\Psi\>$ with $\sigma\in \ser(E)$, which means $\caE_r(\sigma)\leq E$. Let $\sigma=\sum_l p_l|\Psi_l\>\<\Psi_l|$ be an optimal decomposition of $\sigma$ into pure states such that $\caE_r(\sigma)=\sum_l p_l \caE_r(\Psi_l)$. Then 
	\begin{align}
		F\left(\Psi,\ser(E)\right)=\<\Psi|\sigma|\Psi\>=\sum_l p_l|\<\Psi|\Psi_l\>|^2\leq \sum_l p_l F\left(\Psi,\tser(\caE_r(\Psi_l))\right)
		\leq  F\left(\Psi,\tser(\caE_r(\sigma))\right)	\leq F\left(\Psi,\tser(E)\right), 
	\end{align}	
	where the first inequality holds because $\Psi_l \in \tser(\caE_r(\Psi_l))$ by definition, while the second and third inequalities hold because $F(\Psi,\tser(E))$ is concave and nondecreasing in $E$ as proved above. Therefore, $F(\Psi,\ser(E))=F(\Psi,\tser(E))$, which confirms the first equality in \eref{eq:FPsiSkE}, given that $F(\Psi,\ser(E))\geq F(\Psi,\tser(E))$ by definition. Consequently, $F(\Psi,\ser(E))$ is also nondecreasing and concave in $E$, which completes the proof of  \pref{pro:FPsiSk}. 
\end{proof}

\begin{proof}[Proof of \pref{pro:ErLip}]
	Let $E=\caE_r(\Psi)$, $E'=\caE_r(\Upsilon)$, $a=\arcsin\sqrt{E}$, and $b=\arcsin\sqrt{E'}$; then $0\leq E, E'\leq (d-r)/d$ and $0\leq \sin(a+b)\leq 1$. If in addition $r>d/2$, then $\sin(a+b)\leq 2\sqrt{r(d-r)}/d$. Therefore, $0\leq \sin(a+b)\leq \ell(r,d)$ for $r\in [d-1]$.

	On the other hand, we have
	\begin{align}
		|E-E'|&=|\caE_r(\Psi)-\caE_r(\Upsilon)|=\left| \sin^2 a-\sin^2 b \right|=\left| \sin^2a \cos^2b-\sin^2b\cos^2a \right| =\left| \sin(a+b)\sin(a-b) \right| \nonumber\\
		&=\left| 2\sin(a+b)\sin\left( \frac{a-b}{2} \right)\cos\left( \frac{a-b}{2} \right) \right|\leq \left| 2\sin(a+b)\sin\left( \frac{a-b}{2} \right)\right|\leq 2\ell(r,d)  \left| \sin\left( \frac{a-b}{2} \right)\right|.
	\end{align}
	Furthermore, by virtue of \pref{pro:FPsiSk} we can deduce that 
	\begin{gather}
		|\<\Psi|\Upsilon\>|\leq \sqrt{E' E}+\sqrt{(1-E')(1-E)}=\sin a\sin b+\cos a\cos b=\cos(a-b),\\
		\sqrt{2-2|\<\Psi|\Upsilon\>|}\geq \sqrt{2-2\cos(a-b)}=2\left|\sin\left( \frac{a-b}{2} \right)\right|.
	\end{gather}
	The above equations together imply that 
	\begin{align}
		|\caE_r(\Psi)-\caE_r(\Upsilon)|&\leq \ell(r,d)\sqrt{2-2|\<\Psi|\Upsilon\>|}\leq\ell(r,d) \| |\Psi\>-|\Upsilon\>\|_2,
	\end{align}
	which confirms \eref{eq:ErLip} and completes the proof of \pref{pro:ErLip}. Here the last inequality  follows from the equation below:
	\begin{align}
		\||\Psi\>-|\Upsilon\>\|_2^2=2-\<\Psi|\Upsilon\>-\<\Upsilon|\Psi\>\geq 2-2|\<\Psi|\Upsilon\>|. 
	\end{align}
\end{proof}

\section{\label{app:QSVProofs}Proofs of \psref{pro:SpectralGapDimInd}-\ref{pro:QSVopt}}
\Pref{pro:SpectralGapDimInd} is a simple corollary of the following lemma.
\begin{lemma}\label{lem:nuOmegaEmbed}
	Suppose $\caH_{AB}=\caH_A\otimes \caH_B$ and $\caH_{AB}'=\caH_A'\otimes \caH_B'$ are two bipartite Hilbert spaces and the two states
	$|\Psi\>\in \caH_{AB}$ and $|\Psi'\>\in \caH_{AB}'$ have the same nonzero Schmidt coefficients (including the multiplicities). 
	Then 
	\begin{align}
		\nu_\lc (\Psi)&=\nu_\lc (\Psi'), \quad \nu_\sep (\Psi)=\nu_\sep (\Psi'), \label{eq:betaOmegaEmbed} \\
		\nu_\lc^\rmH (\Psi)&=\nu_\lc^\rmH (\Psi'), \quad \nu_\sep^\rmH (\Psi)=\nu_\sep^\rmH (\Psi'). \label{eq:betaOmegaEmbedH}
	\end{align}
\end{lemma}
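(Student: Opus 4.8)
The plan is to reduce everything to a single canonical representative and then show that each spectral gap computed in the ambient space equals the one computed in the ``minimal'' $r\times r$ space. Write $r=\SR(\Psi)=\SR(\Psi')$ for the common Schmidt rank, let $\caH_A^{(r)}=\supp(\rho_A)$ and $\caH_B^{(r)}=\supp(\rho_B)$ be the $r$-dimensional local supports of $|\Psi\>$, and set $\caR=\caH_A^{(r)}\otimes\caH_B^{(r)}$, so that $|\Psi\>\in\caR$. Choosing the Schmidt bases as coordinates identifies $(\caR,|\Psi\>)$ with $(\mathbb{C}^r\otimes\mathbb{C}^r,|\Psi_r\>)$, where $|\Psi_r\>=\sum_{j=0}^{r-1}\sqrt{s_j}\lsp|jj\>$, via a local unitary; since the separable cone, the LOCC class and homogeneity are preserved by local unitaries while $\beta(U\Omega U^\dagger)=\beta(\Omega)$, each of the four spectral gaps is a local-unitary invariant. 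Hence it suffices to prove that, for a Schmidt-rank-$r$ state $|\Psi\>\in\caH_{AB}$, each spectral gap in $\caH_{AB}$ equals the one in $\caR\cong\mathbb{C}^r\otimes\mathbb{C}^r$; applying this to both $|\Psi\>$ and $|\Psi'\>$ and comparing with the common value attached to $|\Psi_r\>$ gives \eqsref{eq:betaOmegaEmbed}{eq:betaOmegaEmbedH}. I establish the two inequalities separately.

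For the \emph{compression} direction ($\caR$ is at least as good as $\caH_{AB}$), let $P_A,P_B$ project onto $\caH_A^{(r)},\caH_B^{(r)}$ and, given any admissible $\Omega$ on $\caH_{AB}$, set $\Omega_r=(P_A\otimes P_B)\Omega(P_A\otimes P_B)$. Since $|\Psi\>\in\caR$ we have $\Omega_r|\Psi\>=|\Psi\>$, so $\Omega_r$ is again a verification operator. Compression by the local projector $P_A\otimes P_B$ sends a separable $\Omega=\sum_iA_i\otimes B_i$ to $\sum_i(P_AA_iP_A)\otimes(P_BB_iP_B)$, which is separable, an LOCC strategy to the strategy obtained by running it on inputs confined to $\caR$, and a homogeneous $\Omega$ to a homogeneous $\Omega_r$ with the \emph{same} $\beta$. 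Finally, since $1=\<\Psi|\Omega_r|\Psi\>\leq\lambda_1(\Omega_r)\leq\lambda_1(\Omega)=1$, the Cauchy interlacing theorem gives $\beta(\Omega_r)=\lambda_2(\Omega_r)\leq\lambda_2(\Omega)=\beta(\Omega)$, hence $\nu(\Omega_r)\geq\nu(\Omega)$. Optimizing over $\Omega$ shows that the gap in $\caR$ dominates the gap in $\caH_{AB}$ for all four variants.

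For the \emph{extension} direction I lift an admissible $\Omega_r$ on $\caR$ back to $\caH_{AB}$. In the non-homogeneous cases I pad by zero, $\Omega=\Omega_r\oplus 0$ on $\caR\oplus\caR^\perp$; its nonzero part is $\Omega_r$, so $\beta(\Omega)=\beta(\Omega_r)$ and $\nu(\Omega)=\nu(\Omega_r)$. This $\Omega$ stays separable, and it is realized by LOCC by having each party first test membership in $\caH_A^{(r)},\caH_B^{(r)}$, declaring failure whenever either lands in the complement and running the inner test otherwise. In the homogeneous cases zero-padding would destroy homogeneity, so I instead pad by $\beta$, taking $\Omega=|\Psi\>\<\Psi|+\beta(\bbone-|\Psi\>\<\Psi|)=\Omega_r\oplus\beta\bbone_{\caR^\perp}$, which is homogeneous with the same $\beta$ and thus has $\nu(\Omega)=\nu(\Omega_r)$. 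Separability survives because $\bbone-P_A\otimes P_B=P_A\otimes P_B^\perp+P_A^\perp\otimes P_B+P_A^\perp\otimes P_B^\perp$ is a sum of product positive operators, with $P_A^\perp=\bbone-P_A$ and $P_B^\perp=\bbone-P_B$. LOCC realizability is obtained by extending each inner test so that, after the local subspace-membership measurement, the parties run the inner test when both lie in the supports and otherwise pass with probability $\beta$ using shared randomness; the resulting pass operator is $\Pi_l^{(r)}\oplus\beta\bbone_{\caR^\perp}$, and averaging over $l$ reproduces $\Omega$. Combining the two directions yields equality of each spectral gap in $\caH_{AB}$ and in $\caR$.

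The main obstacle I anticipate is the homogeneous LOCC extension: unlike the separable case, one cannot merely exhibit the operator $\Omega_r\oplus\beta\bbone_{\caR^\perp}$ but must produce an actual LOCC protocol whose averaged test equals it, and the naive choices either pass the whole complement (forcing $\beta=1$) or fail it (breaking homogeneity). The coin-flip-with-bias-$\beta$ step on the complement outcome is the device that resolves this, and verifying that it produces exactly the effect $\beta(\bbone-P_A\otimes P_B)$ is the one point needing care. The remaining checks---validity as a verification operator, separability of the compressed and padded operators, and the interlacing bound---are routine.
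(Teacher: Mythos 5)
Your proof is correct and follows essentially the same route as the paper's: both directions rest on compressing an admissible $\Omega$ with the local projector $Q_A\otimes Q_B$ (which preserves separability/LOCC/homogeneity and does not decrease the gap) and on extending an admissible $\Omega_r$ by zero-padding in the non-homogeneous case and by $\beta$-padding, $\Omega_r\oplus\beta\lsp\bbone_{\caR^\perp}$, in the homogeneous case. The only organizational difference is that you route both states through the minimal $r\times r$ space instead of embedding one ambient space into the other as the paper does; this is cosmetic, though it does cleanly cover the case where neither ambient space embeds into the other, a case the paper's ``without loss of generality'' glosses over, and your explicit coin-flip LOCC realization of the $\beta$-padded tests fills in a step the paper merely asserts.
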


\begin{proof}[Proof of \lref{lem:nuOmegaEmbed}]
	Without loss of generality we can assume that $\caH_A\leq \caH_A'$ and $\caH_B\leq \caH_B'$. If $\caH_A= \caH_A'$ and $\caH_B=\caH_B'$, then $|\Psi\>$ and $|\Psi'\>$ are equivalent under a local unitary transformation, and there is a one-to-one correspondence between local (separable)  verification operators of $|\Psi\>$ and the counterparts of $|\Psi'\>$ under the same local unitary transformation.
	The same conclusion holds if we restrict to  homogeneous verification operators. So \eqsref{eq:betaOmegaEmbed}{eq:betaOmegaEmbedH} hold as expected.

	In general, by applying a suitable local unitary transformation if necessary, we can assume that $|\Psi'\>$ is supported in $\caH_{AB}$ and is identical to $|\Psi\>$ when regarded as a pure state in $\caH_{AB}$.
	Then any local verification operator of $|\Psi\>$ is also a local verification operator of $|\Psi'\>$, which implies that $\nu_\lc (\Psi)\leq \nu_\lc (\Psi')$.
	To prove the opposite inequality, let $Q_A$ ($Q_B$) be the projector onto $\caH_A$ ($\caH_B$) as a subspace of $\caH_A'$ ($\caH_B'$) and let $Q=Q_A\otimes Q_B$. If $\Omega'$ is a local verification operator of $|\Psi'\>$, then $\Omega=Q\Omega' Q$  is a local verification operator of $|\Psi\>$ with $\nu(\Omega)\geq \nu(\Omega')$. Therefore, we have $\nu_\lc (\Psi)\geq \nu_\lc (\Psi')$, which implies the first equality in \eref{eq:betaOmegaEmbed} given the opposite inequality proved above.
	
	Next, we turn to \eref{eq:betaOmegaEmbedH}. If  $\Omega'$ is a local homogeneous verification operator of $|\Psi'\>$, then $\Omega=Q\Omega' Q$  is  a local homogeneous verification operator of $|\Psi\>$ with $\nu(\Omega)\geq \nu(\Omega')$, which implies that $\nu_\lc^\rmH (\Psi)\geq \nu_\lc^\rmH (\Psi')$. Conversely, if $\Omega$ is a local homogeneous verification operator of $|\Psi\>$, then $\Omega'=\Omega+\beta(\Omega)(\bbone'-Q)$  
	is  a local homogeneous verification operator of $|\Psi'\>$ with $\nu(\Omega ')= \nu(\Omega)$. Therefore,  $\nu_\lc^\rmH (\Psi)\leq \nu_\lc^\rmH (\Psi')$, which implies the first equality in \eref{eq:betaOmegaEmbedH} given the opposite inequality proved above. 
	
	The above reasoning still applies when local operations are replaced by separable operations, so the second inequalities in \eqsref{eq:betaOmegaEmbed}{eq:betaOmegaEmbedH} also hold, which completes the proof of \lref{lem:nuOmegaEmbed}.
\end{proof}

\begin{proof}[Proof of \pref{pro:OmegaLCsep}]
	By construction it is clear that  $\Omega_\sep$ is  a  verification operator of $|\Psi\>$, while $\Omega_\sep^\rmH$ and $\Omega_\lc^\rmH$ are homogeneous  verification operators of $|\Psi\>$.  According to the proofs of Lemma~1 and Theorem~2 of \rcite{Owari08} (see also \rcite{VidaT99}), the two operators  $\Omega_\sep$ and $\bbone-\Omega_\sep$
	are separable. Therefore, $\Omega_\sep$ is  a separable verification operator of $|\Psi\>$. In addition, we have
	\begin{align}
		&\left(1+\sqrt{s_0s_1}\lsp\right)\Omega_\sep^\rmH=\Omega_\sep+ \sqrt{s_0s_1} \sum_{j=0}^{d-1} |jj\>\<jj|+\sum_{j,k=0,\,j\neq k}^{d-1}\left(\sqrt{s_0s_1}- \sqrt{s_js_k}\lsp\right) |jk\>\<jk|+\sqrt{s_0s_1} \sum_{j=0}^{d-1}\sum_{k=d}^{d_B-1} |jk\>\<jk|,\\
		&\left(1+\sqrt{s_0s_1}\lsp\right)\left(\bbone-\Omega_\sep^\rmH\right)=\bbone-|\Psi\>\<\Psi|
		=\bbone-\Omega_\sep+\sum_{j,k=0,\,j\neq k}^{d-1} \sqrt{s_js_k}\lsp |jk\>\<jk|,
	\end{align}
	which imply that $\Omega_\sep^\rmH$ and $\bbone-\Omega_\sep^\rmH$ are separable given that  $\Omega_\sep$ and $\bbone-\Omega_\sep$ are separable. Therefore, $\Omega_\sep^\rmH$ is  a separable homogeneous verification operator of $|\Psi\>$.
	
	Next, according to \rscite{LiHZ19}, the verification operator $\Omega_\lc^\rmH$ can be realized by LOCC. So $\Omega_\lc^\rmH$ is a local homogeneous verification operator of $|\Psi\>$. 
	
	Equations~\eqref{eq:GapSep}-\eqref{eq:GapLCH} follow from the definitions of the verification operators $\Omega_\sep$, $\Omega_\sep^\rmH$, and $\Omega_\lc^\rmH$ in Eqs.~\eqref{eq:OmegaSep}-\eqref{eq:OmegaLCH}, which completes the proof of \pref{pro:OmegaLCsep}. 
\end{proof}

\begin{proof}[Proof of \pref{pro:betaOmegaLB}]
	Thanks to \pref{pro:SpectralGapDimInd}, we can assume that $r=d=d_A=d_B$ without loss of generality.

	According to Lemma~2 of \rcite{ZhuHOVMES19},  any separable verification operator $\Omega$ of $|\Psi\>$ satisfies
	\begin{equation}
		\beta(\Omega)\geq\frac{\caE_R(\Psi)}{d_A d_B-1}=\frac{\left(\sum_j \sqrt{s_j}\right)^2-1}{r^2-1},
	\end{equation}
	which implies the first inequality in \eref{eq:betaOmegaLB1}. Here $	\caE_R(\Psi)=(\sum_j \sqrt{s_j})^2-1$ is the robustness of entanglement \cite{VidaT99}. If in addition $\Omega$ is homogeneous, then 
	\begin{equation}
		\beta(\Omega)\geq \frac{R(\Psi)}{d_A d_B+R(\Psi)}=\frac{\sqrt{s_0 s_1}}{1+\sqrt{s_0 s_1}},
	\end{equation}
	where $R(\Psi)=d_A d_B \sqrt{s_0 s_1}$ is the random robustness \cite{VidaT99}. The above inequality is saturated when $\Omega=\Omega_\sep^\rmH$ thanks to \pref{pro:OmegaLCsep}. Therefore, $\beta_\sep^\rmH=\sqrt{s_0 s_1}/(1+\sqrt{s_0 s_1}\lsp)$, which confirms  the first equality in \eref{eq:betaOmegaLB2}. The third inequality in \eref{eq:betaOmegaLB1} and the second inequality in \eref{eq:betaOmegaLB2} also follow from \pref{pro:OmegaLCsep}. The inequalities $\beta_\sep\leq \beta_\lc$ and $\beta_\sep^\rmH\leq \beta_\lc^\rmH$ in \eqsref{eq:betaOmegaLB1}{eq:betaOmegaLB2} hold by  definition. The inequality $(s_0+s_1)/(2+s_0+s_1)\leq 1/3$ holds because $2/d\leq s_0+s_1\leq 1$. This observation completes the proof of \pref{pro:betaOmegaLB}. 
\end{proof}

\begin{proof}[Proof of \pref{pro:QSVopt}]
	By assumption $|\Psi\>$   is maximally entangled, which means  $\SR(\Psi)=d=d_A$, $s_j=1/d$ for $j=0, 1,\ldots, d-1$. So the first three inequalities in \eref{eq:betaOmegaLB1} and the first two inequalities in  \eref{eq:betaOmegaLB2} are saturated, which implies \eref{eq:betaOmegaOpt}. In addition, the verification operator  $\Omega_\opt$ defined in \eref{eq:PsiOmegaOpt} is separable and local. If $\Omega=\Omega_\opt$, then it is obvious that $\beta(\Omega)=1/(d+1)$.

	Next, suppose $d=d_A=d_B$ and $\Omega$ is a separable verification operator of $|\Psi\>$ with $\beta(\Omega)=1/(d+1)$. To prove \eref{eq:PsiOmegaOpt} we can assume that $|\Psi\>=|\Phi\>$ without loss of generality. Then $\caH_A$ and $\caH_B$ are isomorphic and $|\Psi\>$ is invariant under local unitary transformations of the form $U\otimes U^*$ for $U\in \rmU(d)$, where $U^*$ denotes the complex conjugation of $U$ with respect to the Schmidt basis. Let 
	\begin{align}
		\Omega^\rmH=\int (U\otimes U^*)\Omega  (U\otimes U^*)^\dag d U,
	\end{align}
	where the integration is taken over the normalized Haar measure on $\rmU(d)$. Then $\Omega^\rmH$ is a separable homogeneous verification operator of $|\Psi\>$. In addition, we have
	\begin{align}
\tr\left(\Omega^\rmH\right)=\tr(\Omega),\quad 		\frac{1}{d+1}\leq \beta\left(\Omega^\rmH\right)\leq \beta(\Omega) =\frac{1}{d+1}, 
	\end{align}
	which implies that  $\beta\left(\Omega^\rmH\right)= \beta(\Omega)=1/(d+1)$. Therefore, $\Omega$ is necessarily homogeneous and we have
	\begin{equation}
		\Omega=\Omega^\rmH=|\Psi\>\<\Psi|+\frac{1}{d+1}(\bbone-|\Psi\>\<\Psi|),
	\end{equation} 
	which confirms \eref{eq:PsiOmegaOpt} and completes the proof of \pref{pro:QSVopt}. 
\end{proof}

\section{\label{app:SepProbProofs}Proof of \pref{pro:SepProbDimInd}}

\pref{pro:SepProbDimInd} is a simple corollary of \lref{lem:SepProbEmbed} below.

\begin{lemma}\label{lem:SepProbEmbed}
	Suppose $\caH_{AB}=\caH_A\otimes \caH_B$ and $\caH_{AB}'=\caH_A'\otimes \caH_B'$ are two bipartite Hilbert spaces and the two states
	$|\Psi\>\in \caH_{AB}$ and $|\Psi'\>\in \caH_{AB}'$ have the same nonzero Schmidt coefficients (including multiplicities).  Suppose $r\in[d-1]$, $0\leq E< \caE_r(\Psi)$, $\caS$ is one of the sets $\caS_r$ or $\ser(E)$, and $\caS'$ is the counterpart with $\caH_{AB}$ replaced by $\caH_{AB}'$.
	Then 
	\begin{align}
	P_\lc (\Psi,\caS)&=P_\lc (\Psi',\caS'), \quad  P_\sep (\Psi,\caS)	=P_\sep (\Psi',\caS'),  \label{eq:SepProbEmbed} \\		
	P_\lc^\rmH (\Psi,\caS)&=P_\lc^\rmH (\Psi',\caS'), \quad  P_\sep^\rmH (\Psi,\caS)	=P_\sep^\rmH (\Psi',\caS'). \label{eq:SepProbEmbedH}		
	\end{align}
\end{lemma}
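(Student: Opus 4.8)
The plan is to mirror the structure of the proof of \lref{lem:nuOmegaEmbed}, upgrading the argument from spectral gaps to separation probabilities. First I would reduce to the case $\caH_A\subseteq\caH_A'$ and $\caH_B\subseteq\caH_B'$ and, using a local unitary together with \pref{pro:SepProbUT}, arrange that $|\Psi'\>$ is supported in $\caH_{AB}$ and coincides there with $|\Psi\>$; this is legitimate because the two states share the same nonzero Schmidt coefficients. Writing $Q=Q_A\otimes Q_B$ for the projector onto $\caH_{AB}\subseteq\caH_{AB}'$, I note the two elementary facts that drive everything: $\caS\subseteq\caS'$ under the embedding (local embeddings preserve the Schmidt spectrum, hence both the Schmidt number and $\caE_r$), and the maximum fidelity $F(\Psi,\caS)=F(\Psi',\caS')$ is dimension independent by \pref{pro:FPsiSk}. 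The goal then splits into the two inequalities $P_\lc(\Psi,\caS)\le P_\lc(\Psi',\caS')$ and $P_\lc(\Psi',\caS')\le P_\lc(\Psi,\caS)$, and likewise for $\sep$ and for the homogeneous variants.

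The easy direction is $P_\lc(\Psi,\caS)\le P_\lc(\Psi',\caS')$: given an optimal local $\Omega'$ for $(\Psi',\caS')$, its compression $\Omega=Q\Omega'Q$ is a local verification operator of $|\Psi\>$ exactly as in \lref{lem:nuOmegaEmbed}, and since every $\sigma\in\caS$ satisfies $Q\sigma Q=\sigma$ and also lies in $\caS'$, one gets $\tr(\Omega\sigma)=\tr(\Omega'\sigma)\le P_{\Omega'}(\Psi',\caS')$, hence $P_\lc(\Psi,\caS)\le P_{\Omega'}(\Psi',\caS')=P_\lc(\Psi',\caS')$. Since compression preserves both separability and homogeneity, the same works for the other three separation probabilities.

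For the reverse direction I would take an optimal local $\Omega$ for the small problem and use the extension $\Omega\oplus0$ (the same LOCC protocol, with every outcome outside $\caH_{AB}$ declared a failure), which is a valid local verification operator of $|\Psi'\>$; separability is inherited as in \lref{lem:nuOmegaEmbed}. The point is to bound $\tr\!\left[(\Omega\oplus0)\sigma'\right]=\tr(\Omega\,Q\sigma'Q)=p\,\tr(\Omega\hat\sigma)$ for $\sigma'\in\caS'$, where $p=\tr(Q\sigma'Q)$ and $\hat\sigma=Q\sigma'Q/p\in\caD(\caH_{AB})$. For $\caS=\caS_r$ this is immediate, since a local projection cannot raise the Schmidt number, so $\hat\sigma\in\caS_r$ and $p\,\tr(\Omega\hat\sigma)\le\tr(\Omega\hat\sigma)\le P_\Omega(\Psi,\caS_r)$. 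The genuine obstacle is $\caS=\ser(E)$: local projection followed by renormalization can \emph{increase} the normalized entanglement $\caE_r(\hat\sigma)$ above $E$, so $\hat\sigma$ need not lie in $\ser(E)$, and the naive compression of the adversary state fails.

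I would resolve this by controlling the \emph{unnormalized} tail instead. Writing $\sigma'$ through an optimal pure-state decomposition and each pure piece through its coefficient matrix $C$, compression acts as $C\mapsto Q_ACQ_B$; the standard singular-value inequality $\sigma_i(Q_ACQ_B)\le\sigma_i(C)$ (contraction by projectors) gives $\sum_{i\ge r}\sigma_i(Q_ACQ_B)^2\le\sum_{i\ge r}\sigma_i(C)^2$, which after summing over the decomposition and using convexity of $\caE_r$ yields the key estimate $p\,\caE_r(\hat\sigma)\le\caE_r(\sigma')\le E$. Consequently I can \emph{dilute} $\hat\sigma$ back into $\ser(E)$: choosing any $\tau\in\caS_r$ and any $\lambda\in[p,\,E/\caE_r(\hat\sigma)]\cap[0,1]$ (nonempty precisely by the key estimate), the state $\sigma=\lambda\hat\sigma+(1-\lambda)\tau$ obeys $\caE_r(\sigma)\le\lambda\caE_r(\hat\sigma)\le E$ and $\tr(\Omega\sigma)\ge\lambda\,\tr(\Omega\hat\sigma)\ge p\,\tr(\Omega\hat\sigma)=\tr\!\left[(\Omega\oplus0)\sigma'\right]$. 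Maximizing over $\sigma'$ gives $P_\lc(\Psi',\caS')\le P_{\Omega\oplus0}(\Psi',\caS')\le P_\Omega(\Psi,\caS)=P_\lc(\Psi,\caS)$. The separable case is identical, and the two homogeneous cases are even simpler: extending $\Omega$ by $\Omega'=|\Psi\>\<\Psi|+\beta(\Omega)(\bbone'-|\Psi\>\<\Psi|)$ keeps it homogeneous with the same $\beta$, so by \pref{pro:PassProbUB} and the dimension independence of $F$ one gets $P_{\Omega'}(\Psi',\caS')=\nu(\Omega)F+\beta(\Omega)=P_\Omega(\Psi,\caS)$ directly. I expect the singular-value estimate together with the dilution step for $\ser(E)$ to be the only real difficulty; everything else parallels \lref{lem:nuOmegaEmbed}.
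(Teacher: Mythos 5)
Your proof is correct, and its skeleton is the paper's: reduce to the embedded case, compress an optimal strategy ($\Omega=Q\Omega'Q$) for one inequality, extend by zero ($\Omega\oplus 0$) for the other, with the only real difficulty arising for $\caS=\ser(E)$. Where you genuinely depart from the paper is in how that crux is closed. The paper isolates it as \lref{lem:EmbedProbId} and resolves it by first establishing that $P_\Omega(\Psi,\ser(E))$ is nondecreasing and concave in $E$ (\lref{lem:SepProbMonoConcave}), then running the chain $p\lsp\tr(\Omega\hat\sigma)\le p\lsp P_\Omega(\Psi,\ser(\caE_r(\hat\sigma)))\le P_\Omega(\Psi,\ser(p\lsp\caE_r(\hat\sigma)))\le P_\Omega(\Psi,\ser(E))$. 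You instead dilute the compressed adversary state back into $\ser(E)$ by mixing it with some $\tau\in\caS_r$ with weight $\lambda\in[p,\lsp E/\caE_r(\hat\sigma)]\cap[0,1]$; this is a constructive, inline version of the very mixing argument that proves the paper's concavity lemma, so it is more elementary and self-contained (no auxiliary lemma needed), at the cost of not producing the reusable general fact that $P_\Omega(\Psi,\ser(E))$ is concave in $E$ (which the paper also cites elsewhere, in \lref{lem:SepProbMonoConcave} itself). Both arguments pivot on the same unnormalized tail estimate $p\lsp\caE_r(\hat\sigma)\le\caE_r(\sigma')\le E$; here your proposal is actually more complete than the paper, which asserts this inequality without justification, whereas you derive it from the singular-value contraction $\sigma_i(Q_ACQ_B)\le\sigma_i(C)$ applied to an optimal pure-state decomposition together with the convex-roof definition of $\caE_r$. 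Finally, for the homogeneous variants the paper invokes $P^\rmH=\nu^\rmH F+\beta^\rmH$ (\pref{pro:SepProbUBQSV}) together with the dimension independence of the spectral gaps (\lref{lem:nuOmegaEmbed}) and of $F$ (\pref{pro:FPsiSk}); your explicit extension $|\Psi\>\<\Psi|+\beta(\Omega)(\bbone'-|\Psi\>\<\Psi|)$ is precisely the construction used inside the proof of \lref{lem:nuOmegaEmbed}, so the two treatments coincide in substance, provided you note (as that lemma's proof does) that this extension remains realizable by LOCC or separable operations.
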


\begin{proof}[Proof of \lref{lem:SepProbEmbed}]
	As in the proof of \lref{lem:nuOmegaEmbed} 	 we can assume that $\caH_A\leq \caH_A'$ and $\caH_B\leq \caH_B'$, then $\caS\subseteq \caS'$. 
	If $\caH_A= \caH_A'$ and $\caH_B=\caH_B'$, then $|\Psi\>$ and $|\Psi'\>$ are equivalent under  local unitary transformations, so \eqsref{eq:SepProbEmbed}{eq:SepProbEmbedH} hold as expected. 
	
	In general, by applying a suitable local unitary transformation if necessary, we can assume that $|\Psi'\>$ is supported in $\caH_{AB}$ and is identical to $|\Psi\>$ when regarded as a pure state in $\caH_{AB}$.
	Then any local verification operator $\Omega$ of $|\Psi\>$ is also a local verification operator of $|\Psi'\>$ (via the natural embedding). In addition, we have $P_\Omega(\Psi,\caS)=P_\Omega(\Psi',\caS')$ by \lref{lem:EmbedProbId} below, which implies that $P_\lc (\Psi,\caS)\geq P_\lc (\Psi',\caS')$. 
	
	To prove the opposite inequality, let $Q_A$ ($Q_B$) be the projector onto $\caH_A$ ($\caH_B$) as a subspace of $\caH_A'$ ($\caH_B'$) and let $Q=Q_A\otimes Q_B$.
	If $\Omega'$ is a local verification operator of $|\Psi'\>$, then $Q\Omega' Q$  can be regarded as a local verification operator of $|\Psi\>$. In addition, we have
	\begin{align}
	P_{Q\Omega' Q}(\Psi,\caS)=\max_{\sigma\in \caS} \tr(Q\Omega' Q\sigma )=\max_{\sigma\in \caS} \tr(\Omega' \sigma )\leq \max_{\sigma\in \caS'} \tr(\Omega' \sigma )=P_{\Omega'}(\Psi',\caS'), 
	\end{align}
	which implies that $P_\lc(\Psi,\caS)\leq P_\lc(\Psi',\caS')$. In conjunction with the opposite inequality proved above, we can deduce that $P_\lc(\Psi,\caS)= P_\lc(\Psi',\caS')$, which confirms the first equality in \eref{eq:SepProbEmbed}. The second equality in \eref{eq:SepProbEmbed} follows from a similar reasoning. 
	
	\Eref{eq:SepProbEmbedH} follows from
	\psref{pro:FPsiSk}, \ref{pro:SepProbUBQSV} and \lref{lem:nuOmegaEmbed}, which completes the proof of \lref{lem:SepProbEmbed}.
\end{proof}

In the rest of this appendix, we prove two auxiliary lemmas employed in the proof of \lref{lem:SepProbEmbed}.  
\begin{lemma}\label{lem:SepProbMonoConcave}
	Suppose $|\Psi\>\in \caH_{AB}$, $r\in[d-1]$,  $0\leq E< \caE_r(\Psi)$, and $\Omega$ is a verification operator of $|\Psi\>$. Then $P_\Omega(\Psi,\ser(E))$ and
	$P(\Psi, \ser(E))$ are nondecreasing and concave in $E$.             
\end{lemma}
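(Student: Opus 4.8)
The plan is to establish both properties first for the fixed-strategy quantity $P_\Omega(\Psi,\ser(E))$ and then transfer them to $P(\Psi,\ser(E))=\min_\Omega P_\Omega(\Psi,\ser(E))$. The key observation enabling this reduction is that the family of admissible (local) verification operators $\Omega$ of $|\Psi\>$ does not depend on $E$, so $P(\Psi,\ser(E))$ is a pointwise infimum, over a fixed index set, of the scalar functions $E\mapsto P_\Omega(\Psi,\ser(E))$.

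For monotonicity, I would simply note that the defining constraint $\caE_r(\sigma)\leq E$ becomes less restrictive as $E$ grows, so $\ser(E)\subseteq \ser(E')$ whenever $E\leq E'$. Since $P_\Omega(\Psi,\ser(E))=\max_{\sigma\in\ser(E)}\tr(\Omega\sigma)$ is the maximum of a fixed continuous functional over an enlarging domain, it is nondecreasing in $E$. As a pointwise infimum of nondecreasing functions is again nondecreasing, $P(\Psi,\ser(E))$ inherits this property.

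For concavity the key structural input is that $\caE_r$, being defined through the convex-roof construction, is a convex function on $\caD(\caH_{AB})$; hence each sublevel set $\ser(E)$ is convex and compact, and the maximum defining $P_\Omega(\Psi,\ser(E))$ is attained. I would fix $E_1,E_2\in[0,\caE_r(\Psi))$, $\lambda\in[0,1]$, set $E=\lambda E_1+(1-\lambda)E_2$, and pick maximizers $\sigma_1,\sigma_2$ for $E_1,E_2$. For the convex combination $\sigma=\lambda\sigma_1+(1-\lambda)\sigma_2$, convexity of $\caE_r$ gives $\caE_r(\sigma)\leq\lambda\caE_r(\sigma_1)+(1-\lambda)\caE_r(\sigma_2)\leq E$, so $\sigma\in\ser(E)$, while linearity of $\sigma\mapsto\tr(\Omega\sigma)$ yields $\tr(\Omega\sigma)=\lambda P_\Omega(\Psi,\ser(E_1))+(1-\lambda)P_\Omega(\Psi,\ser(E_2))$. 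Combining these with $P_\Omega(\Psi,\ser(E))\geq\tr(\Omega\sigma)$ produces exactly the concavity inequality. Finally, a pointwise infimum of concave functions is concave, so $P(\Psi,\ser(E))$ is concave as well.

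The only nontrivial ingredient is the convexity of $\caE_r$, which is immediate from the convex-roof definition: merging near-optimal pure-state decompositions of $\sigma_1$ and $\sigma_2$ yields a decomposition of $\lambda\sigma_1+(1-\lambda)\sigma_2$ whose $\caE_r$-average is at most $\lambda\caE_r(\sigma_1)+(1-\lambda)\caE_r(\sigma_2)$. The one point to watch is the direction of the stability arguments under $\min_\Omega$: both monotonicity and concavity are preserved by pointwise infima (whereas, for instance, convexity would not be), and it is precisely this that makes the reduction from $P_\Omega$ to $P$ legitimate. I expect this bookkeeping, rather than any quantitative estimate, to be the only real subtlety.
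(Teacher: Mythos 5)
Your proposal is correct and follows essentially the same route as the paper: monotonicity from the nesting $\ser(E)\subseteq\ser(E')$ for $E\leq E'$, concavity of $P_\Omega$ by taking maximizers $\sigma_1,\sigma_2$, noting that their convex combination lies in $\ser(E)$ by convexity of the convex-roof measure $\caE_r$, and using linearity of $\sigma\mapsto\tr(\Omega\sigma)$. Your final step (pointwise infimum over a fixed, $E$-independent family of strategies preserves monotonicity and concavity) is exactly the paper's argument, which it spells out by evaluating the optimal strategy $\Omega'$ at $E$ and then bounding $P_{\Omega'}(\Psi,\ser(E_j))\geq P(\Psi,\ser(E_j))$.
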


\begin{proof}[Proof of \lref{lem:SepProbMonoConcave}]
	The monotonicity of $P_\Omega(\Psi,\ser(E))$ and $P(\Psi, \ser(E))$ is evident by definition.
	To prove their concavity, it suffices to prove the following inequalities:
	\begin{gather}
		P_\Omega(\Psi,\ser(E))\geq q P_\Omega(\Psi,\ser(E_1)) +(1-q) P_\Omega(\Psi,\ser(E_2)),\label{eq:POmegaConcave}\\
		P(\Psi,\ser(E))\geq q P(\Psi,\ser(E_1)) +(1-q) P(\Psi,\ser(E_2)),\label{eq:PConcave}
	\end{gather}
	assuming that  $0\leq E_1, E_2< \caE_r(\Psi)$, $0\leq q\leq 1$, and $q E_1+(1-q)E_2=E$. Suppose $P_\Omega(\Psi,\ser(E_j))=\tr(\Omega\rho_j)$  with $\rho_j\in \ser(E_j)$ for $j=1,2$. Then we have
	\begin{equation}
		q P_\Omega(\Psi,\ser(E_1)) +(1-q) P_\Omega(\Psi,\ser(E_2))=q\tr(\Omega\rho_1)+(1-q)\tr(\Omega\rho_2)=\tr\{\Omega[q\rho_1+(1-q)\rho_2]\}\leq P_\Omega(\Psi,\ser(E)),
	\end{equation}
		where the inequality holds because $q\rho_1+(1-q)\rho_2\in \ser(E)$ by the definition of the entanglement measure $\caE_r(\cdot)$.
	The above equation confirms \eref{eq:POmegaConcave} and the concavity of $P_\Omega(\Psi,\ser(E))$.
Next, suppose $\Omega'$ is an optimal strategy for certifying the target state $|\Psi\>$ against $\ser(E)$. Then 	
	\begin{align}
P(\Psi,\ser(E))&=P_{\Omega'}(\Psi,\ser(E))
\geq  q P_{\Omega'}(\Psi,\ser(E_1)) +(1-q) P_{\Omega'}(\Psi,\ser(E_2))
\nonumber\\
&\geq q P(\Psi,\ser(E_1)) +(1-q) P(\Psi,\ser(E_2)),
\end{align}
where the first inequality follows from the concavity of $P_{\Omega'}(\Psi,\ser(E))$ as proved above, and the second inequality holds by definition. This observation confirms \eref{eq:PConcave} and the concavity of $P(\Psi,\ser(E))$ and completes the proof of \lref{lem:SepProbMonoConcave}.	
\end{proof}

\begin{lemma}\label{lem:EmbedProbId}
	Suppose  $\caH_{AB}=\caH_A\otimes \caH_B$ and $\caH_{AB}'=\caH_A'\otimes \caH_B'$ are two bipartite Hilbert spaces with $\caH_A\leq \caH_A'$ and $\caH_B\leq \caH_B'$,  $\Omega$ is a verification operator of $|\Psi\>\in \caH_{AB}$,  $\caS$ is one of the sets $\caS_r$ or $\ser(E)$ with $r\in[d-1]$ and $0\leq E< \caE_r(\Psi)$, and $\caS'$ is the counterpart with $\caH_{AB}$ replaced by $\caH_{AB}'$. Then 
	\begin{align}\label{eq:EmbedProbId}
		P_\Omega(\Psi,\caS)=P_\Omega(\Psi,\caS').
	\end{align}
\end{lemma}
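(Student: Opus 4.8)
The plan is to prove the nontrivial inequality $P_\Omega(\Psi,\caS')\le P_\Omega(\Psi,\caS)$; the reverse is immediate, since the embedding preserves the nonzero Schmidt data of every state and hence the value of $\caE_r$, so $\caS\subseteq\caS'$ and $P_\Omega(\Psi,\caS)\le P_\Omega(\Psi,\caS')$. Throughout I regard $\Omega$ as an operator on $\caH_{AB}'$ via zero padding, so that $\Omega=Q\Omega Q$ with $Q=Q_A\otimes Q_B$ the projector onto $\caH_{AB}$.

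First I would pick an optimal state $\sigma'\in\caS'$ with $M:=\tr(\Omega\sigma')=P_\Omega(\Psi,\caS')$ (the maximum is attained by compactness) and push it back into the small space by the \emph{local} projective measurement $\{Q_A,\bbone-Q_A\}$ on $A$ followed by $\{Q_B,\bbone-Q_B\}$ on $B$. Writing $c:=\tr(Q\sigma')$ for the probability of the all-``in'' outcome and $\hat\rho:=Q\sigma'Q/c\in\caD(\caH_{AB})$ for the corresponding post-measurement state (the case $c=0$ is trivial, since then $M=\tr(\Omega Q\sigma'Q)=0$), the identity $\Omega=Q\Omega Q$ gives $\tr(\Omega\hat\rho)=\tr(\Omega\sigma')/c=M/c$. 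Thus $\hat\rho$ is a legitimate state of $\caH_{AB}$ whose passing probability is \emph{larger} than $M$ by the factor $1/c\ge1$.

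The key step is to control the entanglement of $\hat\rho$. Because the measurement above is LOCC and $\caE_r$ is an entanglement monotone \rcite{Vidal99}, it does not increase on average, so $\caE_r(\sigma')\ge c\,\caE_r(\hat\rho)$ (discarding the nonnegative contributions of the other outcomes); hence $\caE_r(\hat\rho)\le\caE_r(\sigma')/c\le E/c$ and $\hat\rho\in\ser(E/c)$. When $\caS=\caS_r=\ser(0)$ this already gives $\hat\rho\in\caS_r$ and $P_\Omega(\Psi,\caS_r)\ge\tr(\Omega\hat\rho)=M/c\ge M$, finishing the proof. For $\caS=\ser(E)$ the budget has inflated from $E$ to $E/c$, so a final interpolation is needed: by the argument of \lref{lem:SepProbMonoConcave} (which uses only the convexity of $\caE_r$ and hence applies verbatim for all arguments in $[0,\infty)$), the map $E'\mapsto P_\Omega(\Psi,\ser(E'))$ is concave and nondecreasing, and writing $E=(1-c)\cdot 0+c\cdot(E/c)$ yields $P_\Omega(\Psi,\ser(E))\ge(1-c)\,P_\Omega(\Psi,\caS_r)+c\,P_\Omega(\Psi,\ser(E/c))\ge c\cdot(M/c)=M$, where the last step uses $P_\Omega(\Psi,\ser(E/c))\ge\tr(\Omega\hat\rho)=M/c$ and $P_\Omega(\Psi,\caS_r)\ge 0$. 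This gives $P_\Omega(\Psi,\ser(E))\ge M=P_\Omega(\Psi,\caS')$ and hence equality.

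I expect the main obstacle to be exactly this entanglement bookkeeping: the naive projection only certifies $\caE_r(\hat\rho)\le E/c$, which \emph{exceeds} the target budget $E$ whenever $c<1$, so $\hat\rho$ need not lie in $\ser(E)$ and cannot be fed directly into the small-space separation probability. The resolution is that the same renormalization that inflates the budget by $1/c$ also inflates the passing probability by $1/c$, and concavity of $E'\mapsto P_\Omega(\Psi,\ser(E'))$ lets one trade the surplus budget against the regime $E'=0$ to recover the bound $M$. A secondary point to verify carefully is the on-average (strong) monotonicity of the convex-roof measure $\caE_r$ under the local projective measurement, which is guaranteed by \rcite{Vidal99}, together with the fact that $\caE_r(\hat\rho)$ takes the same value whether computed in $\caH_{AB}$ or in $\caH_{AB}'$.
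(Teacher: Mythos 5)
Your proof is correct and follows essentially the same route as the paper's: both arguments project the optimal state in the large space down by $Q=Q_A\otimes Q_B$, renormalize, bound the entanglement of the projected state via LOCC monotonicity of $\caE_r$ (the paper states $q\lsp\caE_r(\varrho)\leq \caE_r(\rho)$ without elaboration, which is exactly your strong-monotonicity step), and close the gap using the concavity and monotonicity of $E'\mapsto P_\Omega(\Psi,\ser(E'))$ from \lref{lem:SepProbMonoConcave}. Your variant of interpolating with budget $E/c$ so that $E$ is exactly the convex combination, and your explicit remark that the concavity argument extends beyond the stated range $[0,\caE_r(\Psi))$, are only cosmetic refinements of the paper's chain of inequalities.
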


\begin{proof}[Proof of \lref{lem:EmbedProbId}]
	Let $P_\Omega=P_\Omega(\Psi,\caS)$ and $P_\Omega'=P_\Omega(\Psi,\caS')$; 	
	then $P_\Omega'\geq P_\Omega\geq 0$ given that $\caS\subseteq \caS'$. If  $P_\Omega'=0$, then $P_\Omega= P_\Omega'= 0$. Otherwise,
	let $Q_A$ ($Q_B$) be the projector onto $\caH_A$ ($\caH_B$) as a subspace of $\caH_A'$ ($\caH_B'$) and let $Q=Q_A\otimes Q_B$. Suppose $P_\Omega'=\tr(\Omega\rho)$ with $\rho\in \caS'$. Let  $q=\tr(Q\rho Q)$ and $\varrho =Q\rho Q/q$ (note that $0<q\leq 1$); then 
	\begin{align}
		P_\Omega'=\tr(\Omega\rho)=q \tr(\Omega\varrho)\leq \tr(\Omega\varrho). 
	\end{align} 
	If $\caS=\caS_r$, then $\rho\in \caS_r'$, $\varrho\in \caS_r$, and the above equation implies that $P_\Omega'\leq P_\Omega$, which confirms \eref{eq:EmbedProbId} given that the opposite inequality holds by definition. 
	
	Next, suppose $\caS=\ser(E)$. Then $\rho\in \ser'(E)$ and $q\caE_r(\varrho)\leq \caE_r(\rho)\leq E$. In addition, we have
	\begin{align}
		P_\Omega'&=q \tr(\Omega\varrho)\leq q P_{\Omega}(\Psi,\ser(\caE_r(\varrho)))\leq (1-q)P_{\Omega}(\Psi,\ser(0))  +q P_{\Omega}(\Psi,\ser(\caE_r(\varrho)))\nonumber\\
		&\leq P_{\Omega}(\Psi,\ser(q\caE_r(\varrho)))\leq P_{\Omega}(\Psi,\ser(E))=P_\Omega,
	\end{align}
	given that $P_{\Omega}(\Psi,\ser(E))$ is nondecreasing and concave in $E$ by \lref{lem:SepProbMonoConcave}. In conjunction with the opposite inequality $P_\Omega'\geq P_\Omega$ mentioned above, this equation 
	implies \eref{eq:EmbedProbId} and completes the proof of \lref{lem:EmbedProbId}. 
\end{proof}

\section{\label{app:SepProbGBPSproofs}Proofs of results on general bipartite pure states}

In this appendix we prove \thsref{thm:SepProb&Major}-\ref{thm:ConcSepProbforSNCinGBPS}, which are tied to  entanglement certification of general bipartite pure states.

\subsection{Proof of \thref{thm:SepProb&Major}}

\begin{proof}[Proof of \thref{thm:SepProb&Major}]
	Suppose $|\Upsilon\>$ is another quantum state in $\caH_{AB}$. According to \pref{pro:Majorization},  $|\Psi\>$ is majorized by $|\Upsilon\>$  iff  there exists a local channel $\Lambda$ that can transform $|\Psi\>$ into $|\Upsilon\>$, that is, $\Lambda(|\Psi\>\<\Psi|)=|\Upsilon\>\<\Upsilon|$. Here we assume that both conditions hold in the following analysis.

	Let $\Omega_\Upsilon$ be an arbitrary verification operator of $|\Upsilon\>$; then $0\leq \Omega_\Upsilon\leq \bbone$ and $\tr\left(\Omega_{\Upsilon} |\Upsilon\>\<\Upsilon|\right)=1$. Let $\Omega_{\Psi}=\Lambda^\dagger(\Omega_{\Upsilon})$; then 
	\begin{align}
		0\leq \Omega_{\Psi}\leq \bbone,\quad 	\tr\left(\Omega_{\Psi}|\Psi\>\<\Psi|\right)
		=\tr\left[\Omega_{\Upsilon}\Lambda(|\Psi\>\<\Psi|)\right]=\tr\left(\Omega_{\Upsilon}|\Upsilon\>\<\Upsilon|\right)=1.
	\end{align}
	Therefore, $\Omega_{\Psi}$ is a verification operator of $|\Psi\>$. If $\Omega_{\Upsilon}$ can be realized by LOCC (separable operations), then $\Omega_{\Psi}$ can also be realized by LOCC (separable operations). 
	Moreover, the separation probability achieved by $\Omega_{\Psi}$ can be bounded from above as follows: 
	\begin{equation}
		P_{\Omega_{\Psi}}(\Psi,\caS)=	\max_{\sigma\in\caS}\tr(\Omega_{\Psi} \sigma)=\max_{\sigma\in\caS}\tr\left[\Omega_{\Upsilon} \Lambda(\sigma)\right]\leq \max_{\sigma\in\caS}\tr(\Omega_{\Upsilon} \sigma)=P_{\Omega_{\Upsilon}}(\Upsilon,\caS). 
	\end{equation}
	Here the inequality holds because $\Lambda(\sigma)\in \caS$ whenever $\sigma\in \caS$. Therefore, 
	\begin{equation}
		P_\lc(\Psi,\caS)\leq P_\lc(\Upsilon,\caS),\quad P_\sep(\Psi,\caS)\leq P_\sep(\Upsilon,\caS), 
	\end{equation}
	which means $P_\lc(\Psi,\caS)$ and $P_\sep(\Psi,\caS)$ are Schur convex in $|\Psi\>$ and do not decrease if $|\Psi\>$ is subjected to LOCC.

The equality in \eref{eq:SepProbPsiPhi} holds because the local homogeneous verification operator $\Omega_{\opt}$ in \pref{pro:QSVopt} is optimal among separable verification operators; the first inequality holds because the maximally entangled state $|\Phi\>$ is majorized by any pure state in $\caH_{AB}$, and the second inequality holds by definition.
\Eref{eq:SepProbPsiPhiH} is a simple corollary of \eref{eq:SepProbPsiPhi} given that $P_\lc^\rmH(\Psi,\caS)\geq P_\sep^\rmH(\Psi,\caS)\geq P_\sep(\Psi,\caS)$ by definition. This observation completes the proof of \thref{thm:SepProb&Major}.	
\end{proof}

\subsection{Proof of \thref{thm:SepProbGBPSLUB}}
\begin{proof}[Proof of \thref{thm:SepProbGBPSLUB}]
By virtue of \psref{pro:FPsiSk} and \ref{pro:SepProbUBQSV} we can deduce that
\begin{gather}
	P_\sep(\Psi,\caS_r)\geq F(\Psi,\caS_r)=1-\caE_r(\Psi), \quad 
	P_\sep(\Psi,\ser(E))\geq F(\Psi,\ser(E))=f_r(\Psi,E),
\end{gather}
which, together with the definitions in \eref{eq:PsepLBUB}, confirm the first inequalities in \eqsref{eq:SepProbGBPSLUB}{eq:SepProbErGBPSLUB}. By virtue of  \psref{pro:FPsiSk}, \ref{pro:betaOmegaLB}, and \ref{pro:SepProbUBQSV} we can deduce that
\begin{gather}
    P_\sep^\rmH(\Psi,\caS_r)=\nu_\sep^\rmH(\Psi) F(\Psi,\caS_r)+\beta_\sep^\rmH(\Psi)=\frac{[1-\caE_r(\Psi)]+\sqrt{s_0 s_1}}{1+\sqrt{s_0 s_1}}=1-\frac{\caE_r(\Psi)}{1+\sqrt{s_0 s_1}}, \\
	P_\sep^\rmH(\Psi,\ser(E))=\nu_\sep^\rmH(\Psi) F(\Psi,\ser(E))+\beta_\sep^\rmH(\Psi)=\frac{f_r(\Psi,E)+\sqrt{s_0 s_1}}{1+\sqrt{s_0 s_1}},
\end{gather}
which confirm  the equalities in \eqsref{eq:SepProbGBPSLUBH}{eq:SepProbErGBPSLUBH}.

The second inequalities in \eqsref{eq:SepProbGBPSLUB}{eq:SepProbErGBPSLUB} and the first inequalities in \eqsref{eq:SepProbGBPSLUBH}{eq:SepProbErGBPSLUBH} hold by definition.

The first upper bounds for $P_\lc(\Psi,\caS_r)$, $P_\lc^\rmH(\Psi,\caS_r)$, $P_\lc(\Psi,\ser(E))$, and $P_\lc^\rmH(\Psi,\ser(E))$ correspond to the separation probabilities achieved by the following local homogeneous verification operator [see \eref{eq:OmegaLCH}]:
\begin{equation}
\Omega=|\Psi\>\<\Psi|+\frac{s_0+s_1}{2+s_0+s_1}(\bbone-|\Psi\>\<\Psi|).
\end{equation}
In conjunction with \psref{pro:FPsiSk} and \ref{pro:PassProbUB} we can deduce that
\begin{gather}
P_\lc(\Psi,\caS_r)\leq P_\lc^\rmH(\Psi,\caS_r)\leq P_\Omega(\Psi,\caS_r)=\nu(\Omega) F(\Psi,\caS_r)+\beta(\Omega)\leq \frac{2[1-\caE_r(\Psi)]+s_0+s_1}{2+s_0+s_1}=1-\frac{2\caE_r(\Psi)}{2+s_0+ s_1}, \\
P_\lc(\Psi,\ser(E))\leq P_\lc^\rmH(\Psi,\ser(E))\leq P_\Omega(\Psi,\ser(E))=\nu(\Omega) F(\Psi,\ser(E))+\beta(\Omega)\leq \frac{2f_r(\Psi,E)+s_0+s_1}{2+s_0+s_1},
\end{gather}
which, together with the definitions in \eref{eq:PsepLBUB}, confirm the third inequalities in \eqsref{eq:SepProbGBPSLUB}{eq:SepProbErGBPSLUB} and the second inequalities in \eqsref{eq:SepProbGBPSLUBH}{eq:SepProbErGBPSLUBH}.

Finally, by virtue of the inequalities $f_r(\Psi,E)\geq f_r(\Psi,0)=1-\caE_r(\Psi)\geq s_0$ we can deduce that
\begin{gather}
1-\frac{2\caE_r(\Psi)}{2+s_0+ s_1}\leq 1-\frac{\caE_r(\Psi)}{1+s_0}=\frac{1-\caE_r(\Psi)}{1+s_0}+\frac{s_0}{1+s_0}\leq \frac{2[1-\caE_r(\Psi)]}{1+s_0}, \\[1ex]
1-\frac{2\caE_r(\Psi)}{2+s_0+ s_1}\leq \frac{3}{2+s_0}\left[1-\frac{\caE_r(\Psi)}{1+\sqrt{s_0 s_1}}\right]-\frac{1-\caE_r(\Psi)-s_0}{2+s_0}\leq \frac{3}{2+s_0}\left[1-\frac{\caE_r(\Psi)}{1+\sqrt{s_0 s_1}}\right], \\[1ex]
\frac{2f_r(\Psi,E)+s_0+s_1}{2+s_0+s_1}\leq \frac{f_r(\Psi,E)+s_0}{1+s_0}\leq \frac{2f_r(\Psi,E)}{1+s_0},\\[1ex]
\frac{2f_r(\Psi,E)+s_0+s_1}{2+s_0+ s_1}\leq \frac{3}{2+s_0}\frac{f_r(\Psi,E)+\sqrt{s_0 s_1}}{1+\sqrt{s_0 s_1}}-\frac{f_r(\Psi,E)-s_0}{2+s_0}\leq \frac{3}{2+s_0}\frac{f_r(\Psi,E)+\sqrt{s_0 s_1}}{1+\sqrt{s_0 s_1}},
\end{gather}
which confirm the last inequalities in Eqs.~\eqref{eq:SepProbGBPSLUB}-\eqref{eq:SepProbErGBPSLUBH} and complete the proof of \thref{thm:SepProbGBPSLUB}.
\end{proof}

\subsection{Auxiliary lemmas}
Before proving \thsref{thm:SepProbGBPSmeanLUB} and \ref{thm:ConcSepProbforSNCinGBPS}, here we need to introduce two auxiliary lemmas. 

Denote by $S_{n-1}$ the $(n-1)$-dimensional unit sphere in $\bbR^n$. A function $f:S_{n-1}\rightarrow{} \bbR$ is a Lipschitz (continuous) function with Lipschitz constant $\eta$ if
\begin{equation}
	\left | f(x)-f(y) \right | \leq \eta\left \| x-y \right \|_2 \quad \forall \, x,y\in S_{n-1},
\end{equation}
where $\left \| \cdot \right \|_2$ denotes the Euclidean norm.

\begin{lemma}\label{lem:Levy'sLemma}
	\rm \textbf{(Levy's lemma)} \cite{Ledoux01,Watrous18} \it Suppose $f:S_{n-1}\rightarrow{} \bbR$ is a Lipschitz  function with Lipschitz constant $\eta$, and  $x\in S_{n-1}$ is drawn uniformly at random. Then 
	\begin{equation}\label{eq:Levy'sLemma}
		\Pr\left\{ f(x)-\bbE f \geq \epsilon  \right\} \leq 2\exp\left ( -\frac{n\epsilon^2}{25\pi \eta^2} \right ).
	\end{equation}
\end{lemma}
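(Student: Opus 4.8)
This is a standard concentration-of-measure estimate, and the plan is to derive it from the spherical isoperimetric inequality of L\'evy rather than to reprove that inequality itself. First I would fix a median $M$ of $f$ with respect to the uniform probability measure $\mu$ on $S_{n-1}$ and introduce the sublevel set $A=\{x\in S_{n-1}:f(x)\leq M\}$, which satisfies $\mu(A)\geq 1/2$ by the definition of the median. Working with the geodesic distance $\rho(x,y)=\arccos\langle x,y\rangle$, I would note that the chordal distance never exceeds the geodesic distance, $\|x-y\|_2\leq\rho(x,y)$, so the hypothesis makes $f$ automatically $\eta$-Lipschitz for $\rho$ as well. Consequently, for any $t>0$ and any point $x$ in the geodesic $t$-neighborhood $A_t$ of $A$ there is a $y\in A$ with $\rho(x,y)\leq t$, whence $f(x)\leq f(y)+\eta t\leq M+\eta t$; equivalently $\{f>M+\eta t\}\subseteq S_{n-1}\setminus A_t$.

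The second step is to control $\mu(S_{n-1}\setminus A_t)$. Since $\mu(A)\geq 1/2$, the isoperimetric inequality guarantees that this measure is no larger than the corresponding quantity for a hemisphere, whose geodesic $t$-neighborhood is an explicit spherical cap. A direct estimate of the cap measure then yields a Gaussian tail of the form $\mu(S_{n-1}\setminus A_t)\leq C\exp(-c\,n\,t^2)$ for suitable absolute constants $C,c$. Combining this with the inclusion above and choosing $t=\epsilon/\eta$ gives concentration about the median,
\begin{equation}
	\Pr\{f(x)-M\geq\epsilon\}\leq C\exp\Bigl(-\frac{c\,n\,\epsilon^2}{\eta^2}\Bigr).
\end{equation}
(An alternative route, which I would keep in reserve, replaces isoperimetry by the logarithmic Sobolev inequality for the uniform measure on $S_{n-1}$, whose constant is controlled by the Ricci curvature lower bound; Herbst's argument then produces the same subgaussian tail.)

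The final step is to replace the median $M$ by the mean $\bbE f$. Integrating the two-sided version of the tail bound just obtained shows that $|\bbE f-M|$ is at most an absolute constant times $\eta/\sqrt{n}$, so the shift from $M$ to $\bbE f$ can be absorbed into the exponent at the cost of enlarging the constants, after which one restricts to the one-sided event in the statement. Tracking these constants carefully---through the explicit cap-measure estimate and the median-to-mean comparison---is exactly what produces the stated prefactor $2$ and the exponent denominator $25\pi\eta^2$, and this bookkeeping, rather than any conceptual difficulty, is the main obstacle. Everything of substance reduces to the isoperimetric inequality, which I would simply invoke from \cite{Ledoux01}.
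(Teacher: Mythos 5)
The paper does not prove this lemma: it is imported as a known result, with the proof delegated entirely to the cited references \cite{Ledoux01,Watrous18}. So there is no in-paper proof to compare against, and the only question is whether your sketch, on its own, establishes the stated inequality. Structurally, your route is the standard one (and the one found in those references): median concentration via the spherical isoperimetric inequality, the inclusion $\{f>M+\eta t\}\subseteq S_{n-1}\setminus A_t$, a Gaussian cap estimate, and a median-to-mean comparison. The individual steps you describe are sound — in particular, the passage from the Euclidean to the geodesic Lipschitz condition is correct since the chordal distance is dominated by the geodesic distance, and the $O(\eta/\sqrt{n})$ median-mean shift can indeed be absorbed in the usual way.

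The gap is quantitative but real: the lemma asserts a specific inequality, with prefactor $2$ and exponent $n\epsilon^2/(25\pi\eta^2)$, and your argument never produces these numbers. As written, it yields only $\Pr\{f(x)-\bbE f\geq\epsilon\}\leq C\exp(-c\,n\epsilon^2/\eta^2)$ for unspecified absolute constants $C,c$, with the claim that "bookkeeping" would recover the stated values. But since everything conceptual in your sketch is already being invoked from \cite{Ledoux01}, that bookkeeping is precisely the only content a proof of this exact statement would add — and it is not routine to confirm that the cap estimate plus the median-to-mean absorption lands at $25\pi$ rather than some larger constant. These constants are not decorative here: they propagate verbatim into \thref{thm:ConcSepProbforSNCinGBPS}, whose bound $2\exp(-D\epsilon^2/(50\pi))$ is exactly \eref{eq:Levy'sLemma} with $n=2D$ and $\eta=2$. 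So either carry out the cap-measure and median-mean estimates with explicit numbers, or do what the paper does and cite the lemma as known; the intermediate position of sketching the proof while deferring the constants proves a strictly weaker statement than the one claimed.
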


Next, define the following function on pure states in $\caH_{AB}$:
\begin{align}\label{eq:UrPsiDef}
	U_r(\Psi):=1-\frac{\caE_r(\Psi)}{1+s_0(\Psi)}, \quad |\Psi\>\in \caH_{AB}, \quad r\in [d-1],
\end{align}
which  can also be regarded as a function defined on a $(2D-1)$-dimensional real unit sphere, where $D=\dim(\caH_{AB})$.
Here $s_0(\Psi)$ is the largest Schmidt coefficient of $|\Psi\>$, recall that the Schmidt coefficients are arranged in nonincreasing order by default. The function $U_r(\Psi)$ is an upper bound for $P_\lc^\ub(\Psi,\caS_r)$ according to the following equation:
\begin{equation}\label{eq:UrPsileqPlcH}
	P_\lc^\ub(\Psi,\caS_r)=1-\frac{2\caE_r(\Psi)}{2+s_0(\Psi)+ s_1(\Psi)}\leq 1-\frac{\caE_r(\Psi)}{1+s_0(\Psi)}=U_r(\Psi),
\end{equation}
where the inequality holds because $s_0(\Psi)\geq s_1(\Psi)$. 
\begin{lemma}\label{lem:UrPsi}
	Suppose $|\Psi\>\in \caH_{AB}$ and  $r\in [d-1]$. Then $U_r(\Psi)$ is a Lipschitz function with Lipschitz constant 2.
	If $|\Psi\>$ is a Haar-random pure state in $\caH_{AB}$, then 
	\begin{equation}\label{eq:UrPsiUB} 
		\bbE U_r(\Psi)< \frac{4(r+1)}{d+1}. 
	\end{equation}
\end{lemma}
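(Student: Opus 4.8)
The plan is to prove the two assertions separately: the Lipschitz bound follows cleanly from \Pref{pro:ErLip}, whereas the expectation bound reduces to controlling the expected largest Schmidt coefficient of a Haar-random state, which is where the real work lies.

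For the Lipschitz claim, I would first use the identity $s_0(\Psi)=1-\caE_1(\Psi)$ so that $U_r$ becomes a function of the two Schur-concave measures $\caE_r$ and $\caE_1$ alone. Writing $a=\caE_r(\Psi)$, $b=s_0(\Psi)$, $a'=\caE_r(\Upsilon)$, $b'=s_0(\Upsilon)$, the difference $U_r(\Psi)-U_r(\Upsilon)=\frac{a'}{1+b'}-\frac{a}{1+b}$ splits as $\frac{a'-a}{1+b'}+\frac{a(b-b')}{(1+b)(1+b')}$. Since $b,b'\ge 0$ and $a=\caE_r(\Psi)\le\caE_1(\Psi)=1-b$, the two prefactors $\frac{1}{1+b'}$ and $\frac{a}{(1+b)(1+b')}$ are each at most $1$, so $|U_r(\Psi)-U_r(\Upsilon)|\le|\caE_r(\Psi)-\caE_r(\Upsilon)|+|\caE_1(\Psi)-\caE_1(\Upsilon)|$. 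Applying \Pref{pro:ErLip} to both terms, with $\ell(r,d),\ell(1,d)\le 1$, gives the Lipschitz constant $2$.

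For the expectation I would reduce everything to one scalar estimate. Since $\caE_r(\Psi)=1-\sum_{j=0}^{r-1}s_j$, we have $U_r(\Psi)=\frac{s_0+\sum_{j=0}^{r-1}s_j}{1+s_0}$. Bounding $\sum_{j=0}^{r-1}s_j\le r s_0$ and using $s_0\ge 1/d$, hence $1+s_0\ge\frac{d+1}{d}$ and $\frac{1}{1+s_0}\le\frac{d}{d+1}$, yields $U_r(\Psi)\le\frac{(r+1)d}{d+1}\,s_0$, so $\bbE U_r(\Psi)\le\frac{(r+1)d}{d+1}\,\bbE[s_0]$. It therefore suffices to establish $\bbE[s_0]<\tfrac4d$, since then $\bbE U_r(\Psi)<\frac{(r+1)d}{d+1}\cdot\frac4d=\frac{4(r+1)}{d+1}$, with the strict inequality propagating from the eigenvalue estimate.

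The remaining step, $\bbE[s_0]=\bbE[\lambda_{\max}(\rho_A)]<\tfrac4d$, is the main obstacle. Here $\rho_A$ is the reduced state of a Haar-random $|\Psi\>$, distributed as a normalized complex Wishart (fixed-trace Laguerre) matrix; the worst case is the square case $d_A=d_B=d$, since enlarging $d_B$ only drives the spectrum of $\rho_A$ toward $\bbone/d$ and lowers $\lambda_{\max}$. I would control it through moments: because $\lambda_{\max}\le\bigl(\tr(\rho_A^k)\bigr)^{1/k}$ and $x\mapsto x^{1/k}$ is concave, Jensen gives $\bbE[\lambda_{\max}]\le\bigl(\bbE[\tr(\rho_A^k)]\bigr)^{1/k}$, after which the explicit Weingarten formula for $\bbE[\tr(\rho_A^k)]$ is available; choosing $k$ to grow with $d$ (on the order of $d^{2/3}$) should push the right-hand side strictly below the Marchenko--Pastur edge value $4/d$. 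The delicate part is quantifying this gap uniformly in $d$: the asymptotic edge is exactly $4/d$ and the mean largest eigenvalue reaches it only through the soft-edge correction, so crude operator-norm or second-moment estimates do not dip below $4/d$. Small dimensions can be disposed of by direct evaluation (for example $\bbE[s_0]=7/8<2$ at $d=2$), and the large-$d$ regime by the moment estimate above or by invoking known sharp results on the largest eigenvalue of random density matrices.
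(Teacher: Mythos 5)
Your Lipschitz argument is correct and is essentially the paper's own proof: the same splitting of $U_r(\Psi)-U_r(\Upsilon)$ into a $\caE_r$-difference term and an $s_0$-difference term, with both prefactors bounded by $1$ and \pref{pro:ErLip} applied twice (to $\caE_r$ and to $\caE_1=1-s_0$), yielding the constant $2$. That half of the proposal is fine.

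The expectation bound, however, has a genuine gap, and it is one created by your decision to lump the two terms together. Writing $U_r(\Psi)=\frac{\sum_{j=0}^{r-1}s_j}{1+s_0}+\frac{s_0}{1+s_0}$ and bounding everything by $\frac{(r+1)d}{d+1}\lsp s_0$ forces you to prove the \emph{strict} inequality $\bbE s_0<4/d$, and this is precisely the statement you then cannot establish: as you yourself observe, in the square case the mean largest Schmidt coefficient sits below the Marchenko--Pastur edge $4/d$ only by a soft-edge correction of relative order $d^{-2/3}$, so no crude moment bound---in particular no direct use of $\bbE\lambda_{\max}\le\bigl(\bbE\tr\rho_A^k\bigr)^{1/k}$---gets strictly under $4/d$ uniformly in $d$, and ``invoking known sharp results'' is left as an unspecified citation. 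The clean non-asymptotic result available in the literature is the \emph{non-strict} bound $\bbE s_0\le 4/d$, which is exactly what the paper imports (Lemma 22 of \rcite{Liu18}); your reduction of general $d_B$ to the square case via ``enlarging $d_B$ lowers $\lambda_{\max}$'' is a second unproven step, though a less serious one. The paper sidesteps the whole difficulty by keeping the two terms separate: the first term needs only $\bbE\sum_{j=0}^{r-1}s_j\le r\,\bbE s_0\le 4r/d$, while for the second term Jensen's inequality for the concave increasing function $x\mapsto x/(1+x)$ gives $\bbE\frac{s_0}{1+s_0}\le\frac{\bbE s_0}{1+\bbE s_0}\le\frac{4}{d+4}<\frac{4}{d+1}$, so the strictness of \eref{eq:UrPsiUB} comes from the elementary gap $\frac{4}{d+4}<\frac{4}{d+1}$ rather than from any strict random-matrix estimate. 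Alternatively, your own route can be repaired without strengthening the eigenvalue bound: equality in $U_r(\Psi)\le\frac{(r+1)d}{d+1}\lsp s_0$ requires $s_0=1/d$, which occurs with Haar probability zero, so $\bbE U_r(\Psi)<\frac{(r+1)d}{d+1}\lsp\bbE s_0\le\frac{4(r+1)}{d+1}$ already follows from the non-strict bound. As written, though, your proposal hinges on an estimate it does not prove.
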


\begin{proof}[Proof of \lref{lem:UrPsi}]
	Let $|\Upsilon\>\in \caH_{AB}$ be another pure state; let $s_0(\Psi)$ and  $s_0(\Upsilon)$ be the largest Schmidt coefficients of $|\Psi\>$ and $|\Upsilon\>$, respectively. Then we have
	\begin{align}
		&U_r(\Upsilon)-	U_r(\Psi)=\frac{\caE_r(\Psi)}{1+s_0(\Psi)}-\frac{\caE_r(\Upsilon)}{1+s_0(\Upsilon)}=\frac{\caE_r(\Psi)}{1+s_0(\Psi)}-\frac{\caE_r(\Upsilon)}{1+s_0(\Psi)}+\frac{\caE_r(\Upsilon)}{1+s_0(\Psi)}-\frac{\caE_r(\Upsilon)}{1+s_0(\Upsilon)}\nonumber\\
	&=\frac{\caE_r(\Psi)-\caE_r(\Upsilon)}{1+s_0(\Psi)}+\frac{\caE_r(\Upsilon)[s_0(\Upsilon)-s_0(\Psi)]}{[1+s_0(\Psi)][1+s_0(\Upsilon)]}\leq  \frac{\||\Psi\>-|\Upsilon\>\|_2}{1+s_0(\Psi)}+\frac{\caE_r(\Upsilon)\||\Psi\>-|\Upsilon\>\|_2}{[1+s_0(\Psi)][1+s_0(\Upsilon)]} \leq 2 \||\Psi\>-|\Upsilon\>\|_2,
		\end{align}
	where the inequalities hold because  $s_0(\Psi)=1-\caE_1(\Psi)$, $0\leq \caE_r(\Psi)\leq 1$, and $\caE_r(\Psi)$ is a Lipschitz function with Lipschitz constant 1 by \psref{pro:ErLUB} and \ref{pro:ErLip}. By symmetry the above equation still holds if $|\Psi\>$ and $|\Upsilon\>$ are exchanged. Therefore,  $U_r(\Psi)$ is a Lipschitz function with Lipschitz constant 2.

	Next, we turn to \eref{eq:UrPsiUB}, assuming that  $|\Psi\>$ is a Haar-random pure state in $\caH_{AB}$. Let $\{s_j=s_j(\Psi)\}_{j=0}^{d-1}$ denote the Schmidt spectrum of $|\Psi\>$. By definition we have
	\begin{align}
		U_r(\Psi) =  1-  \frac{\caE_r(\Psi)}{1+s_0}=1-  \frac{\sum_{j=r}^{d-1} s_j}{1+s_0}
		=\frac{\sum_{j=0}^{r-1} s_j}{1+s_0}
		+\frac{s_0}{1+s_0}\leq \frac{d}{d+1}\sum_{j=0}^{r-1} s_j+\frac{s_0}{1+s_0},
	\end{align}
	where the inequality holds because $s_0\geq 1/d$.
	Therefore, 
	\begin{align} 
		\bbE U_r(\Psi) &\leq \frac{d}{d+1} \bbE \sum_{j=0}^{r-1} s_j
		+\bbE \frac{s_0}{1+s_0}< \frac{4r}{d+1}+\frac{4}{d+1} = \frac{4(r+1)}{d+1},
	\end{align}
	which confirms \eref{eq:UrPsiUB}. Here the second inequality holds because
	\begin{align}
		\bbE \sum_{j=0}^{r-1} s_j\leq r \bbE s_0\leq \frac{4r}{d},\quad \bbE \frac{s_0}{1+s_0}\leq \frac{\bbE s_0}{1+\bbE s_0}\leq \frac{4}{d+4}< \frac{4}{d+1},
	\end{align}
	given that $\bbE s_0\leq 4/d$ by Lemma 22 in \rcite{Liu18} and that the function $s_0/(1+s_0)$ is concave in $s_0$. 
\end{proof}

\subsection{Proof of \thref{thm:SepProbGBPSmeanLUB}}

\begin{proof}[Proof of \thref{thm:SepProbGBPSmeanLUB}]
	The first inequality in \eref{eq:SepProbGBPSmeanLUB} follows from \thref{thm:SepProbforEMCinMES} given that $P_\sep(\Psi,\caS_r)\geq P_\sep(\Phi,\caS_r)$  for all $|\Psi\>\in \caH_{AB}$ by \thref{thm:SepProb&Major}. The second, third, and fourth inequalities in \eref{eq:SepProbGBPSmeanLUB} hold by definition. 
	The last inequality in \eref{eq:SepProbGBPSmeanLUB} holds because $P_\lc^\ub(\Psi,\caS_r)\leq U_r(\Psi)$ by \eref{eq:UrPsileqPlcH} and $\bbE U_r(\Psi)< 4(r+1)/(d+1)$ by \lref{lem:UrPsi}. 
\end{proof}

\subsection{Proof of \thref{thm:ConcSepProbforSNCinGBPS}}

\begin{proof}[Proof of \thref{thm:ConcSepProbforSNCinGBPS}]
	\Eref{eq:ConcSepProbforSNCinGBPS} in \thref{thm:ConcSepProbforSNCinGBPS} can be proved as follows:
	\begin{equation}
	\!\!	\Pr\left\{ P_\lc^\rmH(\Psi,\caS_r) \geq \frac{4(r+1)}{d+1}+\epsilon \right\} \leq \Pr\left\{ U_r(\Psi)\geq \frac{4(r+1)}{d+1}+\epsilon \right\} 
		\leq \Pr\left\{ U_r(\Psi)\geq \bbE  U_r(\Psi)+\epsilon \right\} 		
		\leq 2\exp\left ( -\frac{D\epsilon^2}{50\pi} \right) .
	\end{equation}
	Here the first two inequalities follow from the facts that $P_\lc^\rmH(\Psi,\caS_r)\leq P_\lc^\ub(\Psi,\caS_r)\leq U_r(\Psi)$ by \eref{eq:UrPsileqPlcH} and that $\bbE U_r(\Psi)< 4(r+1)/(d+1)$ by \lref{lem:UrPsi}; the last inequality follows from \lref{lem:Levy'sLemma} (Levy's lemma) with $n=2D$ and $\eta=2$ given that $U_r(\Psi)$ is a Lipschitz function with Lipschitz constant 2 by \lref{lem:UrPsi} again.

	By definition we have $P_\sep(\Psi,\caS_r)\leq P_\lc(\Psi,\caS_r)\leq P_\lc^\rmH(\Psi,\caS_r)$ and $P_\sep(\Psi,\caS_r)\leq P_\sep^\rmH(\Psi,\caS_r)\leq P_\lc^\rmH(\Psi,\caS_r)$, so 
	\eref{eq:ConcSepProbforSNCinGBPS} still holds if $P_\lc^\rmH(\Psi,\caS_r)$ is replaced by $P_\lc(\Psi,\caS_r)$, $P_\sep^\rmH(\Psi,\caS_r)$, or $P_\sep(\Psi,\caS_r)$, which completes the proof of \thref{thm:ConcSepProbforSNCinGBPS}. 
\end{proof}

\section{\label{app:OS&SPfor2qPS}Proofs of results on two-qubit pure states}

In this appendix we prove \lsref{lem:OptStratforECin2qPS}-\ref{lem:POmegathetap2}  and \pref{pro:OmegathetapLC}, which are tied to  entanglement certification of two-qubit pure states.

\subsection{Proof of \lref{lem:OptStratforECin2qPS}}

\begin{proof}[Proof of \lref{lem:OptStratforECin2qPS}]
	Recall that the target state $|\Psi_\theta\>$ is invariant under  swap, complex conjugation (with respect to the computational basis), and any unitary transformation of the form $V_\zeta\otimes V_\zeta^*$, where $V_\zeta=|0\>\<0|+\rme^{-\rmi \zeta}|1\>\<1|$ and $0\leq \zeta< 2\pi$. Therefore, according to \pref{pro:OptOmegaSym}, we can restrict our attention to verification operators that enjoy the same symmetry when searching for an optimal verification operator. Following a similar analysis presented in \rcite{PLM18} we can deduce that any such verification operator has the following form: 
	\begin{align}\label{eq:Omegalambda23}
		\Omega=|\Psi_\theta\>\<\Psi_\theta|+\lambda_2 |\Psi_\theta^\perp\>\<\Psi_\theta^\perp|+\lambda_3(|01\>\<01|+|10\>\<10|), \quad 0\leq \lambda_2, \lambda_3\leq 1,
	\end{align}
	where $|\Psi_\theta^\perp\>=\sin\theta|00\>-\cos\theta|11\>$ is orthogonal to the target state $|\Psi_\theta\>$.
	
		\begin{figure}[b]
		\centering
		\includegraphics[scale=0.58]{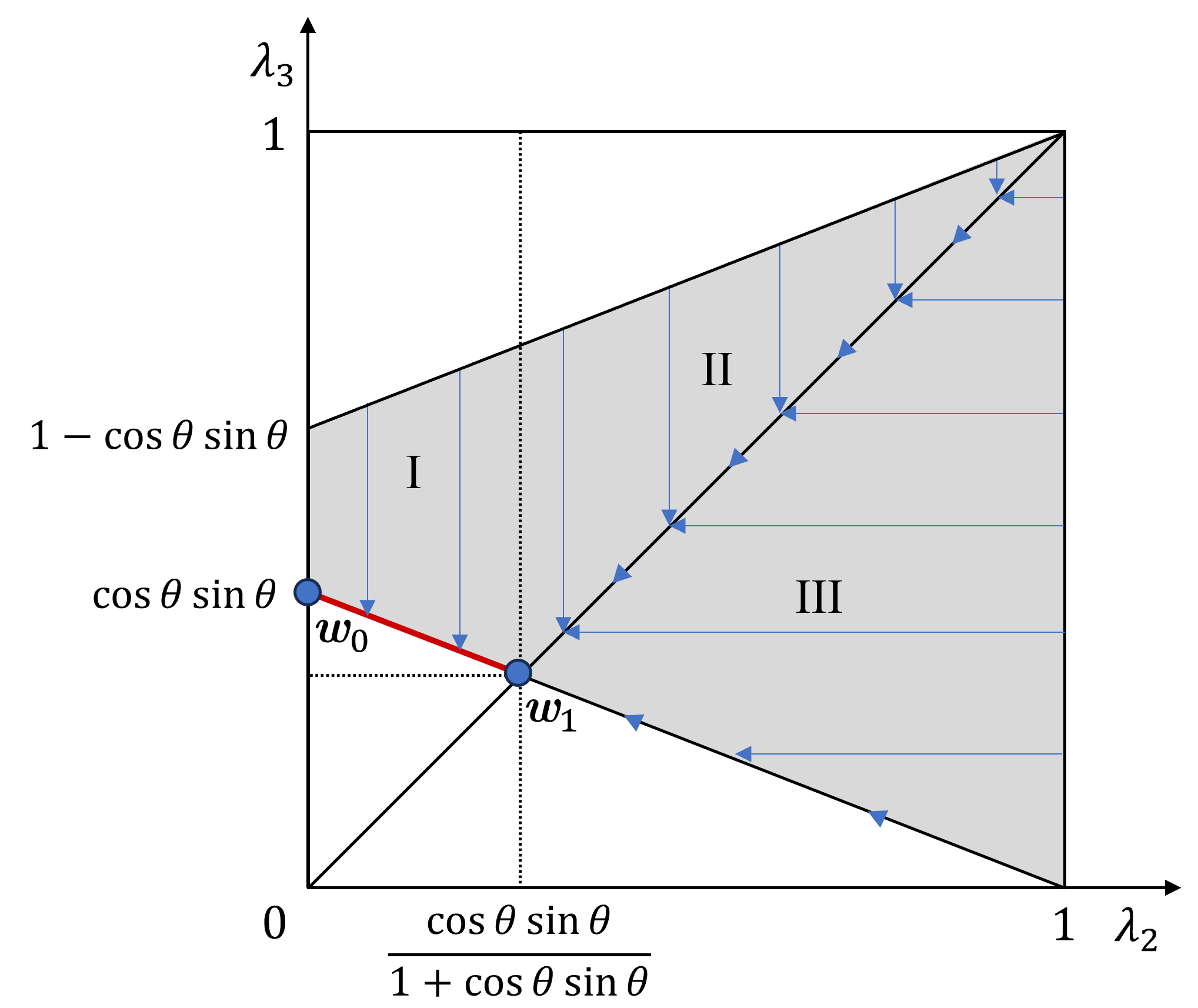}
		\caption{\label{fig:lambda}  Verification operators of $|\Psi_\theta\>$ that have the form  \eref{eq:Omegalambda23}. Separable 
			verification operators correspond to points in the shaded region, which is constrained by \eref{eq:LambdaConstraint1}.  The blue arrows indicate optimization directions along which the separation probability is nonincreasing. One point on the red line segment $(\bmw_0,\bmw_1)$ corresponds to an optimal verification operator. }
	\end{figure}

	Now, suppose $\Omega$ is associated with a separable verification strategy, then both $\Omega$ and $\bbone-\Omega$ are separable operators. In the case of two qubits under consideration, a positive operator is separable iff it is positive after partial transpose with respect to either party, say Bob. Simple calculation shows that  $\Omega^{T_B}$ has the following four eigenvalues: 
	\begin{equation}
		\cos^2\theta+\lambda_2\sin^2\theta, \quad \sin^2\theta+\lambda_2\cos^2\theta, \quad \lambda_3+(1-\lambda_2)\cos\theta\sin\theta, \quad \lambda_3-(1-\lambda_2)\cos\theta\sin\theta.
	\end{equation}
	The verification strategy $\Omega$ is  separable iff all these eigenvalues lie in the interval $[0,1]$, which amounts to the following conditions:
	\begin{equation}\label{eq:LambdaConstraint1}
		\lambda_3+(1-\lambda_2)\cos\theta\sin\theta\leq 1, \quad \lambda_3-(1-\lambda_2)\cos\theta\sin\theta\geq 0,
	\end{equation}
	as illustrated in \fref{fig:lambda}. 
	Note that the separation probability $P_\Omega(\Psi_\theta)$ is nondecreasing in $\lambda_2$ and $\lambda_3$; in addition,	
	the first inequality is satisfied automatically when the second inequality is saturated. To construct an optimal separable verification operator, it suffices to consider the case in which the second inequality is saturated, that is,  
	\begin{equation}
		\lambda_2+\frac{\lambda_3}{\cos\theta\sin\theta}= 1.
	\end{equation}

	If $\lambda_2\geq \lambda_3$, then $\beta(\Omega)=\lambda_2$ and $\nu(\Omega)=1-\lambda_2$, which means $\tr(\Omega\sigma)\leq \cos^2\theta+\lambda_2\sin^2\theta$ for any $\sigma\in \caS_\sep$ by \pref{pro:PassProbUB}, and the upper bound can be attained when  $\sigma=|00\>\<00|$. So the separation probability $P_\Omega(\Psi_\theta)$ reads
	\begin{equation}
		P_\Omega(\Psi_\theta)=\max_{\sigma\in\caS_\sep}\tr(\Omega\sigma)=\cos^2\theta+\lambda_2\sin^2\theta, 
	\end{equation}
	which increases monotonically with $\lambda_2$, assuming that $\lambda_2\geq \lambda_3$. To construct an optimal separable verification operator, therefore, it suffices to consider the parameter range  defined by the following conditions:
	\begin{equation}\label{eq:LambdaConstraint2}
		\lambda_2+\frac{\lambda_3}{\cos\theta\sin\theta}= 1,\quad 0\leq \lambda_2 \leq \lambda_3,
	\end{equation}
	which corresponds to the red line segment in \fref{fig:lambda}. The two endpoints of this line segment read
	\begin{equation}
		\bmw_0:=\left(0,\cos\theta\sin\theta\right),\quad 		\bmw_1:=\left(\frac{\cos\theta\sin\theta}{1+\cos\theta\sin\theta},\frac{\cos\theta\sin\theta}{1+\cos\theta\sin\theta}\right),
	\end{equation}
	which correspond to the verification operators $\Omega_0$ and $\Omega_1$ defined in \eqsref{eq:Omega0}{eq:Omega1}, respectively. It follows that an optimal separable verification operator can be constructed from a convex combination of $\Omega_0$ and $\Omega_1$. In other words, for some $p\in [0,1]$, $\Omega(\theta,p)$ is an optimal separable verification operator of the target state $|\Psi_\theta\>$, which amounts to  \eref{eq:SepProb2qubitthetap}. This observation completes the proof of  \lref{lem:OptStratforECin2qPS}.
\end{proof}

\subsection{Proof of \lref{lem:POmegathetap}}

\begin{proof}[Proof of \lref{lem:POmegathetap}]
	Let 
	\begin{align}
		|\psi_{a,\xi }\>=\cos a|0\>+\sin a \,\rme^{\rmi\xi}|1\>, \quad \rho_{a,\xi}=|\psi_{a,\xi }\>\<\psi_{a,\xi}|, \quad 0\leq a\leq \pi/2, \quad 0\leq \xi<2\pi. 
	\end{align}
	When $\xi=0$, $|\psi_{a,\xi }\>$ and  $\rho_{a,\xi}$ can be abbreviated as   $|\psi_{a}\>$ and $\rho_{a}$, respectively. Then  $P(\theta,p)$ can  be expressed as follows:
	\begin{align}\label{eq:POmegathetapProof}
		P(\theta,p)&=\max_{0\leq a,b\leq \pi/2, \, 0\leq \xi_1, \xi_2<2\pi} \tr\left[\Omega(\theta,p)\rho_{a,\xi_1 }\otimes \rho_{b,\xi_2}\right]=\max_{0\leq a,b\leq \pi/2} \tr[\Omega(\theta,p)(\rho_a\otimes \rho_b)],
	\end{align}
	where the second equality holds because all entries of $\Omega(\theta,p)$ in the computational basis are nonnegative. In addition, direct calculation yields
	\begin{align}
		\tr[\Omega(\theta,p)(\rho_a\otimes \rho_b)]&=\left(\cos\theta\cos a\cos b+\sin\theta\sin a\sin b\right)^2+\frac{p\cos\theta\sin\theta}{1+\cos\theta\sin\theta}\left(\sin\theta\cos a\cos b-\cos\theta\sin a\sin b\right)^2 \notag \\
		&\equad+\cos\theta\sin\theta\left(1-\frac{p\cos\theta\sin\theta}{1+\cos\theta\sin\theta}\right)\left(\cos^2 a\sin^2 b+\sin^2 a\cos^2 b\right).
	\end{align}

	Let $u=(\cos^2 a+\cos^2 b)/2$ and $v=(\cos^2 a-\cos^2 b)/2$. Then $\tr[\Omega(\theta,p)(\rho_a\otimes \rho_b)]$ can also be expressed as follows:
	\begin{align}
		\tr[\Omega(\theta,p)(\rho_a\otimes \rho_b)]&=\left [\cos\theta\sqrt{(u+v)(u-v)}+\sin\theta\sqrt{\left (1-u-v\right )\left (1-u+v\right )}\,\right ]^2 \notag \\
		&\equad+\frac{p\cos\theta\sin\theta}{1+\cos\theta\sin\theta} \left [\sin\theta\sqrt{(u+v)(u-v)}-\cos\theta\sqrt{\left (1-u-v\right )\left (1-u+v\right )}\,\right ]^2 \notag \\
		&\equad+2\cos\theta\sin\theta\left(1-\frac{p\cos\theta\sin\theta}{1+\cos\theta\sin\theta}\right)\left [u-(u+v)(u-v)\right ].
	\end{align}
	Its partial derivative over $v$ reads
	\begin{equation}\label{eq:gamma}
		\frac{\partial\tr\left[\Omega(\theta,p)(\rho_a\otimes \rho_b)\right]}{\partial v}=-2\gamma v,
	\end{equation}
	where
	\begin{align}
		\gamma&=1+\cos\theta\sin\theta\left(\tan a\tan b+\cot a\cot b\right)+\frac{p\cos\theta\sin\theta}{1+\cos\theta\sin\theta}\left[1-\cos\theta\sin\theta(\tan a\tan b+\cot a\cot b)\right] \notag \\
		&\equad-2\cos\theta\sin\theta\left(1-\frac{p\cos\theta\sin\theta}{1+\cos\theta\sin\theta}\right)\geq 1+\frac{p\cos\theta\sin\theta}{1+\cos\theta\sin\theta}\geq 1.
	\end{align}
	Therefore, $\tr[\Omega(\theta,p)(\rho_a\otimes \rho_b)]$ is strictly increasing in $v$ when $v\leq 0$, and strictly decreasing in $v$ when $v\geq 0$. To evaluate the last  maximization in \eref{eq:POmegathetapProof}, we can take $v=0$, that is, $b=a$, which confirms the first equality in \eref{eq:POmegathetap}.

	Now, direct calculation yields
	\begin{align}
		\tr\left[\Omega(\theta,p)\rho_a^{\otimes 2}\right]&=\left (\cos\theta\cos^2 a+\sin\theta\sin^2 a\right )^2+2\cos\theta\sin\theta\cos^2 a\sin^2 a \notag \\
		&\equad+\frac{p\cos\theta\sin\theta}{1+\cos\theta\sin\theta}\left [ \left (\sin\theta\cos^2 a-\cos\theta\sin^2 a\right )^2-2\cos\theta\sin\theta\cos^2 a\sin^2 a\right ].
	\end{align}
	To determine the maximum of $\tr\left[\Omega(\theta,p)\rho_a^{\otimes 2}\right]$ over $a$, we can take its partial derivative with respect to $a$:
	\begin{align}\label{eq:POmegathetaPartialDerivative}
		\frac{\partial \tr\left[\Omega(\theta,p)\rho_a^{\otimes 2}\right]}{\partial a}&=4\cos a\sin a\left[g_1(\theta,p)\cos^2 a -g_2(\theta,p)\sin^2 a\right],
	\end{align}
	where	
	\begin{equation}
		\begin{aligned}
			g_1(\theta,p)&=2
			\cos\theta\sin\theta-\cos^2\theta-\frac{p\cos\theta\sin\theta}{1+\cos\theta\sin\theta}\left(2\cos\theta\sin\theta+\sin^2\theta\right), \\
			g_2(\theta,p)&=2\cos\theta\sin\theta-\sin^2\theta-\frac{p\cos\theta\sin\theta}{1+\cos\theta\sin\theta}\left(2\cos\theta\sin\theta+\cos^2\theta\right).
		\end{aligned}
	\end{equation}
	In conjunction with the assumptions   $0< \theta\leq \pi/4$ and $0\leq p\leq1$ we can deduce  that
	\begin{equation}
		g_2(\theta,p)\geq g_2(\theta,1)=\frac{\cos(2\theta)+\sin(2\theta)-1}{2+\sin(2\theta)}\geq 0,\quad 
		g_2(\theta,p)-g_1(\theta,p)=\frac{\cos(2\theta)[2+(1-p)\sin(2\theta)]}{2+\sin(2\theta)}\geq 0.
	\end{equation}
	Here the first inequality is saturated iff $p=1$,  the second inequality is saturated iff $\theta=\pi/4$, and the third inequality is saturated iff $\theta=\pi/4$.

	If $q(\theta)\leq p\leq 1$, where $q(\theta)$ is defined in \eref{eq:thetaFun} and satisfies $q(\theta)\leq 1$, then $g_1(\theta,p)\leq 0$, so $\tr\left[\Omega(\theta,p)\rho_a^{\otimes 2}\right]$ is nonincreasing in $a$ and is thus maximized when  $a=0=a^*(\theta,p)$. Therefore,
	\begin{equation}\label{eq:PthetapProof1}
		P(\theta,p)=\tr\left[\Omega(\theta,p)|00\>\<00|\right]=\cos^2\theta+\frac{p\cos\theta\sin^3\theta}{1+\cos\theta\sin\theta},
	\end{equation}
	which confirms  the second equality in \eref{eq:POmegathetap} and shows that  $P(\theta,p)$ is strictly increasing in $p$; this result also confirms  the third equality in \eref{eq:POmegathetap} except when $\theta=\pi/4$.  If in addition $p<1$ or $\theta<\pi/4$, then $g_2(\theta,p)> 0$, so $\tr\left[\Omega(\theta,p)\rho_a^{\otimes 2}\right]$ is strictly decreasing in $a$, and its maximum over $a\in [0,\pi/2]$ is attained iff $a=a^*(\theta,p)$.

	If instead  $0\leq p< q(\theta)$, then $p<1$, $g_2(\theta,p)>0$,  $0\leq g_1(\theta,p)\leq g_2(\theta,p)$, and $g_1(\theta,p)/g_2(\theta,p)=h(\theta,p)$, where $h(\theta,p)$ is defined in \eref{eq:thetaFun}. In addition, the function $g_1(\theta,p)\cos^2 a -g_2(\theta,p)\sin^2 a$ is strictly  decreasing in $a$ for $a\in [0,\pi/2]$ and is equal to 0 when $a= \arctan\sqrt{h(\theta,p)}
	=a^*(\theta,p)$ [see \eref{eq:thetaFun}]. Therefore, the maximum of $\tr\left[\Omega(\theta,p)\rho_a^{\otimes 2}\right]$ over $a\in [0,\pi/2]$ is attained iff  $a=\arctan\sqrt{h(\theta,p)}=a^*(\theta,p)$, which confirms the second equality in \eref{eq:POmegathetap}.
	
	Next, we prove the third  equality in \eref{eq:POmegathetap} when $\theta=\pi/4$. In this case,  $q(\theta)=1$, $h(\theta,p)=1$ for $p\in [0,1)$, and 
	\begin{equation}
		a^*(\theta,p)=\begin{cases}
			\pi/4 & p\in [0,1),\\
			0 &p=1,
		\end{cases}
	\end{equation}
	so $\tr\left[\Omega(\theta,p)\rho_{a^*}^{\otimes 2}\right]=(9-p)/12$,  which confirms the third  equality in \eref{eq:POmegathetap}.

	Finally, we turn to the convexity of the separation probability $P(\theta,p)$, assuming that $\arctan(1/2) < \theta<\pi/4$ and $0\leq p\leq q(\theta)$. Then  the maximum of $\tr\left[\Omega(\theta,p)\rho_a^{\otimes 2}\right]$ over $a\in [0,\pi/2]$ is attained iff  $a=\arctan\sqrt{h(\theta,p)}=a^*(\theta,p)$. In addition, according to \eref{eq:thetaFun} and the following equation
	\begin{equation}
		\frac{\partial h(\theta,p)}{\partial p}=\frac{(\sin^4\theta-\cos^4\theta)}{g_2^2(\theta,p)(1+\cos\theta\sin\theta)}<0,
	\end{equation}
$h(\theta,p)$ and $a^*(\theta,p)$ are strictly decreasing in $p$. Suppose $0\leq p_1<p_2 \leq  q(\theta)$, $0<x<1$, and $p=xp_1+(1-x)p_2$; then
	\begin{align}
		P(\theta,p)&=\tr\left[\Omega(\theta,p)\rho_{a^*(\theta,p)}^{\otimes 2}\right]=x\tr\left[\Omega(\theta,p_1)\rho_{a^*(\theta,p)}^{\otimes 2}\right]+(1-x) \tr\left[\Omega(\theta,p_2)\rho_{a^*(\theta,p)}^{\otimes 2}\right]  \nonumber \\
		&<x\tr\left[\Omega(\theta,p_1)\rho_{a^*(\theta,p_1)}^{\otimes 2}\right]+(1-x) \tr\left[\Omega(\theta,p_2)\rho_{a^*(\theta,p_2)}^{\otimes 2}\right]=x P(\theta,p_1)+(1-x)P(\theta,p_2). 
	\end{align}
	So   $P(\theta,p)$ is strictly convex in $p$ for $p\in [0,q(\theta)]$, which  completes the proof of \lref{lem:POmegathetap}.
\end{proof}

\subsection{Proof of \lref{lem:POmegathetap2}}
\begin{proof}[Proof of \lref{lem:POmegathetap2}]
	By definition we have $P(\theta,p)=P_{\Omega(\theta,p)}(\Psi_\theta)=\max_{\sigma\in \caS_\sep} \tr[\Omega(\theta,p)\sigma]$, so 
	$P(\theta,p)$ is convex in $p$ given that $\Omega(\theta,p)$ is linear in $p$ by construction.

	If $0< \theta \leq \arctan(1/2)$, then $q(\theta)=0$ and  $P(\theta,p)$ is strictly increasing in $p$ by \lref{lem:POmegathetap} and thus has a unique minimizer at $p=0$, that is $p^*(\theta)=0$, which confirms \eref{eq:p*} for $\theta\in (0,\arctan(1/2)]$.  If $\theta=\pi/4$, then $q(\theta)=1$ and $P(\theta,p)=(9-p)/12$  is strictly decreasing in $p$ and thus has a unique minimizer at $p=1$, that is $p^*(\theta)=1$, which confirms \eref{eq:p*} again. In both cases, we have $p^*(\theta)\in [0,q(\theta)]$.

	Next, suppose $\arctan(1/2)< \theta < \pi/4$; then $0<q(\theta)< 1$. According to \lref{lem:POmegathetap}, $P(\theta,p)$ is strictly increasing in $p$ for $p\in [q(\theta),1]$ and  strictly convex in $p$ for $p\in [0,q(\theta)]$. Therefore, $P(\theta,p)$  has a unique minimizer and   $p^*(\theta)\in [0,q(\theta)]$; in addition,  $P(\theta,p)$ is strictly decreasing (increasing) in $p$ for $p\in [0,p^*(\theta)]$ $(p\in [p^*(\theta),1])$.
	In conjunction with \eref{eq:thetaFun} we can further deduce that
	\begin{align}
		P_p(\theta, 0)	&=\frac{\sin(2\theta)[17 - 9 \cos(4 \theta) - 25 \sin(2 \theta) + 3\sin(6 \theta)]}{8[2\sin(2\theta)-1]^2[2+\sin(2\theta)]},\\ 
		P_p(\theta, p)	&=\frac{\cos\theta\sin^3\theta}{1+\cos\theta\sin\theta}>0 \quad \forall\, p\in [q(\theta),1],
	\end{align}
	where $P_p(\theta, p)$ is a shorthand for $\partial P(\theta, p)/\partial p$. Note that $P_p(\theta, p)$ is continuous in $p$ for $p\in [0,1]$.

	If  $\arctan(1/2)< \theta \leq \theta^*$, then $P_p(\theta, p)\geq P_p(\theta, 0)\geq 0$ for $p\in [0,1]$; note that $P_p(\theta^*, 0)=0$ according to the definition of $\theta^*$ based on \eref{eq:theta*}. 
	Therefore, $P(\theta,p)$ is strictly increasing in $p$ for $p\in [0,1]$, which means  $p^*(\theta)=0$ and  confirms \eref{eq:p*} for $\theta\in (0,\theta^*]$ given the above analysis.

	If  $\theta^*<\theta < \pi/4$, then $ P_p(\theta, p=0)< 0$, while $P_p(\theta, p=q(\theta))>0$. So $P_p(\theta, p)=0$ has a unique zero for $p\in (0,q(\theta))$, which necessarily coincides with the minimizer $p^*(\theta)$. This observation  completes the proof of \lref{lem:POmegathetap2}.
\end{proof}

\subsection{\label{app:OmegaWH}Proof of \pref{pro:OmegathetapLC}}

To prove \pref{pro:OmegathetapLC}, we need to introduce a verification strategy for $|\Psi_\theta\>$ constructed by Wang and Hayashi~\cite{WangH19}, assuming that $0<\theta\leq \pi/4$. First, \rcite{WangH19} constructed the following five test operators for $|\Psi_\theta\>$ using LOCC: 
\begin{equation}
	\begin{aligned}
		T_1^{A\to B}&=\eta |0\>\<0|\otimes |0\>\<0|+|\tpsi_+\>\<\tpsi_+|\otimes |+\>\<+|+|\tpsi_-\>\<\tpsi_-|\otimes |-\>\<-| ,\\
		T_2^{A\to B}&=\eta |0\>\<0|\otimes |0\>\<0|+|\tvarphi_+\>\<\tvarphi_+|\otimes |\top\>\<\top|+|\tvarphi_-\>\<\tvarphi_-|\otimes |\perp\>\<\perp| ,\\
		T_1^{B\to A}&=\eta |0\>\<0|\otimes |0\>\<0|+|+\>\<+|\otimes |\tpsi_+\>\<\tpsi_+|+|-\>\<-|\otimes |\tpsi_-\>\<\tpsi_-| ,\\
		T_2^{B\to A}&=\eta |0\>\<0|\otimes |0\>\<0|+|\top\>\<\top|\otimes |\tvarphi_+\>\<\tvarphi_+|+|\perp\>\<\perp|\otimes |\tvarphi_-\>\<\tvarphi_-| ,\\
		T_3&=|0\>\<0|\otimes |0\>\<0|+|1\>\<1|\otimes |1\>\<1|,
	\end{aligned}
\end{equation}
where
\begin{equation}
	\begin{gathered}
		|+\>=\frac{1}{\sqrt{2}}(|0\>+|1\>), \quad |-\>=\frac{1}{\sqrt{2}}(|0\>-|1\>), \quad |\top\>=\frac{1}{\sqrt{2}}(|0\>+\rmi|1\>), \quad |\perp\>=\frac{1}{\sqrt{2}}(|0\>-\rmi|1\>) ,\\
		|\tpsi_\pm\>=\frac{(1-\eta)\cos\theta}{\sqrt{1-\eta\cos^2\theta}}|0\> \pm \frac{\sin\theta}{\sqrt{1-\eta\cos^2\theta}}|1\> , \quad |\tvarphi_\pm\>=\frac{(1-\eta)\cos\theta}{\sqrt{1-\eta\cos^2\theta}}|0\> \pm \rmi\frac{\sin\theta}{\sqrt{1-\eta\cos^2\theta}}|1\>.
	\end{gathered}
\end{equation}
Note that $|\tpsi_\pm\>$ and $|\tvarphi_\pm\>$ are not normalized. Based on these test operators, \rcite{WangH19} constructed the following verification strategy for $|\Psi_\theta\>$:
\begin{equation}
	\Omega_{\mathrm{WH}}(\theta,\eta,p')=\frac{1-p'}{4}\left(T_1^{A\to B}+T_2^{A\to B}+T_1^{B\to A}+T_2^{B\to A}\right)+p'T_3,
\end{equation}
which can be realized by LOCC.

Now, suppose $\tp(\theta)\leq p\leq 1$, where $\tp(\theta)$ is defined in \eref{eq:tptheta}. Let
\begin{equation}\label{eq:tildepplinear}
	\eta=1-\tan\theta,\quad 	p'=\cos\theta\sin\theta\left(\frac{\cos^2\theta+\cos\theta\sin\theta}{1+\cos\theta\sin\theta}p+\tan\theta-1\right);
\end{equation}
then $0\leq p'\leq \sin^2\theta/(1+\cos\theta\sin\theta)\leq 1/3$, and it is straightforward to verify the following equality:
\begin{equation}
	\Omega(\theta,p)=\Omega_{\mathrm{WH}}(\theta,\eta,p').
\end{equation}
So  the verification strategy $\Omega(\theta,p)$ can be realized by LOCC, which completes the proof of \pref{pro:OmegathetapLC}.

\end{document}